\newtheorem{theorem}{Theorem}
\numberwithin{equation}{section}
\numberwithin{theorem}{section}
\newtheorem{lemma}[theorem]{Lemma}
\newtheorem{corollary}[theorem]{Corollary}
\newtheorem{definition}{Definition}
\newtheorem{remark}{Remark}
\numberwithin{remark}{section}
\renewcommand{\d}{\mathrm{d}}
\newcommand{\e}{\mathrm{e}}
\newcommand{\B}{\mathrm{B}}
\newcommand{\III}{\mathbb{I}}
\newcommand{\NNN}{\mathbb{N}}
\newcommand{\RRR}{\mathbb{R}}
\newcommand{\CCCC}{\mathcal{C}}
\newcommand{\DDDD}{\mathcal{D}}
\newcommand{\EEEE}{\mathcal{E}}
\newcommand{\FFFF}{\mathcal{F}}
\newcommand{\GGGG}{\mathcal{G}}
\newcommand{\HHHH}{\mathcal{H}}
\newcommand{\LLLL}{\mathcal{L}}
\newcommand{\QQQQ}{\mathcal{Q}}
\newcommand{\RRRR}{\mathcal{R}}
\newcommand{\YYYY}{\mathcal{Y}}
\newcommand{\BV}{\mathrm{BV}}
\newcommand{\forae}{\forall_{\text{a.e.}}}
\newcommand{\wto}{\rightharpoonup}
\DeclareMathOperator{\cof}{Cof}
\DeclareMathOperator{\diss}{Diss}
\DeclareMathOperator{\var}{Var}
\DeclareMathOperator*{\arglocmin}{ArgLocMin}
\DeclareMathOperator{\dom}{Dom}
\newcommand\EEE{\color{black}}
\newcommand{\mk}{\color{black}}
\newcommand{\ed}{\color{black}}
\begin{document}

\title{Separately Global Solutions to Rate-Independent Processes in Large-Strain Inelasticity}
\author[E. Davoli]{Elisa Davoli}
\address[E. Davoli]{
		Institute of Analysis and Scientific Computing,
		TU Wien,
		Wiedner Hauptstrasse 8-10, 1040 Vienna, Austria.
		Email: \href{mailto:elisa.davoli@tuwien.ac.at}{\tt elisa.davoli@tuwien.ac.at}.
}
\author[M. Kru\v{z}\'ik]{Martin Kru\v{z}\'ik}
\address[M. Kru\v{z}\'ik]{Czech Academy of Sciences, 
Institute of Information Theory and Automation, 
Pod Vod\'arenskou V\v{e}\v{z}\'i 4, 182 00 Prague, Czechia.
		Email: \href{mailto: kruzik@utia.cas.cz}{\tt kruzik@utia.cas.cz}.
}
\author[P. Pelech]{Petr Pelech}
\address[P. Pelech]{mathematical Institute, Charles University, Sokolovsk\'{a} 83, 186 00 Prague, Czechia.
		Email: \href{mailto:pelech@karlin.mff.cuni.cz}{\tt pelech@karlin.mff.cuni.cz}.
}
\maketitle

\begin{abstract}
In this paper, we introduce the notion of separately global solutions for large-strain rate-independent systems,
and we provide an existence result for a model describing bulk damage.
Our analysis covers non-convex energies blowing up for extreme compressions,
yields solutions excluding interpenetration of matter,
and allows to handle nonlinear couplings of the deformation and the internal variable featuring both Eulerian and Lagrangian terms.
In particular, \mk motivated by the theory developed in \cite{Roubicek2015} in the small strain setting,
and for separately convex energies we provide a solution concept suitable for large strain inelasticity. \EEE
\end{abstract}

%%% =======================================================
\section*{Introduction}Rate independent systems (RIS) are characterized by \mk the \EEE lack of any internal time length scale:
rescaling the input of the system in time leads to the very same rescaling of its solution.
In continuum mechanics,
rate-independent models represent a~reasonable \ed ansatz whenever \EEE
 the external conditions change slowly enough
so that the system can always reach its equilibrium.
This applies if inertial, viscous, and thermal effects are neglected.
Rate independent systems have proven to be useful in modeling of hysteresis,
phase transitions in solids, elastoplasticity, damage,
or fracture in small and large strain regimes.
We refer to \cite{MieRou15RIST} for a thorough overview 
of various results and applications.

\ed In this paper we propose a new notion of local solutions for RIS describing large-strain inelastic phenomena, and we present an application to the setting of bulk-damage materials. Before introducing the definition of \emph{separately global solutions}, we briefly review the main mathematical features of RIS, as well as the most widely adopted solution concepts. \EEE

\ed A RIS is described by means of a \EEE
generalized gradient system,
shortly denoted by $(\QQQQ,\EEEE,\RRRR)$,
defined by the state space $\QQQQ$, the
energy functional $\EEEE:[0,T] \times \QQQQ \to (-\infty,+\infty]$
and a~dissipation potential $\RRRR: \mathcal{T}\QQQQ \to [0,+\infty]$,
where $[0,T]$ is the time interval
and $\mathcal{T}\QQQQ$ denotes the tangent bundle of $\QQQQ$.
Solutions to generalized gradient systems satisfy the so-called Biot's inclusion
\begin{align}\label{biot}
    0 \in \partial_{\dot q} \mathcal{R}(q(t),\dot q(t))
    +
    \partial_q \mathcal E(t,q(t)),
    \quad t \in [0,T],
\end{align}
where $\partial$ denotes a (generalized) subdifferential;
see Section \ref{sec:example} for a~specific choice
of~$(\QQQQ,\EEEE,\RRRR)$.
The dissipation potential $\RRRR$ is usually non-negative,
convex in its second argument,
and satisfies $\RRRR(q,0) = 0$ for all $q \in \QQQQ$.
Equation \eqref{biot} represents a~force balance
where the elastic forces $\partial_q \EEEE$ are in equilibrium
with static-friction forces $\partial_{\dot{q}} \RRRR$,
and where both of them are, in~general, multi-valued.
The formulation fully relies
on the theory of generalized standard materials;
see \cite{HalNgu75MSG,Fremond2002}.
The rate-independence is expressed
by the~positive one-homogeneity of $\RRRR(q,\cdot)$,
leading to a~zero homogeneity of the subdifferential
$\partial_{\dot{q}} \RRRR$,
\ed and to a time-scale invariance of the system. \EEE

 This property of the dissipation potential \ed causes \EEE the system \ed to become \EEE somewhat degenerate
and determines the non-smooth nature of rate-independent
processes (RIP).
In particular, when the energy $\EEEE$ is not convex,
solutions may develop jumps,
making the strong derivative $\dot{q}$ ill-defined:
a~reformulation of the equation \eqref{biot} is therefore necessary.
Observing that the positive
1-homogeneity of $\mathcal{R}$ rewrites as
\begin{align}
    \label{eq:1HomChar}
    \mathcal R(q,\dot q)= \langle \eta, \dot q\rangle
    \quad \text{whenever} \quad
    \eta \in\partial_{\dot q}\mathcal{R}(q,\dot q),
\end{align}
with $\langle\cdot,\cdot\rangle$ denoting
the duality pairing between $\mathcal{T}\QQQQ$
{
\color{OliveGreen}
and $\mathcal{T}^*\QQQQ$
},
and using the definition of the subdifferential
and the characterization \eqref{eq:1HomChar},
we see that the Biot's equation \eqref{biot} is equivalent
to the following two  conditions
\begin{subequations}
\label{eq:weak}
\begin{align}
    \label{localstability}
    \text{local stability:}
    \quad
    0
    &\in
    \partial_{\dot q} \mathcal{R}(q(t),0) + \partial_q \EEEE(t,q(t)), \\
    \text{power balance:}
    \quad
    0
    &=
    \label{eq:PowBal}
    \langle \partial_q \mathcal{E}(t,q(t)),\dot{q}(t) \rangle
    +
    \mathcal{R}(q(t),\dot{q}(t)),
\end{align}
\end{subequations}
we refer to \cite{MieRou15RIST} for the detailed derivation.
Note that the local stability is a~purely static condition,
saying that the static-friction forces have to be strong enough
to balance the elastic forces.
The second, scalar equation relates the power of the change of state
with the dissipation rate $\RRRR$.
It is a remarkable feature of RIS that one purely static condition
together with a~single scalar equation is enough to characterize
their evolution.
Integration of the power balance over $[0,T]$ then yields the energy equality
\begin{align} \label{e-equality}
    \EEEE(T,q(T)) + \int_0^T \RRRR(q(t),\dot{q}(t)) \, \d t
    =
    \mathcal{E}(0,q(0)) + \int_0^T \partial_t \EEEE(t,q(t)) \, \d t,
\end{align}
where the total $\RRRR$-variation $\int_0^T \RRRR(q(t),\dot{q}(t)) \, \d t$
expresses the total amount of dissipated energy,
and $\int_0^T \partial_t \EEEE(t,q(t)) \, \d t$ the  work
of~the loading.
Note that here $\partial_t$ stands for the \emph{partial} derivative of $\EEEE$
with respect to its first argument.
The sought weaker reformulation, completely free of any time derivatives,
is then finally obtained by substituting
the total $\RRRR$-variation by \ed the \EEE equivalent formula
\begin{align} \label{eq:diss}
    \diss_\mathcal{R}(q;[0,T])
    :=
    \sup
    \left\{
        \sum_{j=1}^N \mathcal{R}(q(t_j) - q(t_{j-1}));
        N \in \NNN, 0 \leq t_0 \leq \ldots \leq t_N \leq T
    \right\},
\end{align}
\ed representing the $\RRRR$-variation of  a $BV$-function $q$ on the interval $[0,T]$, \EEE
which is valid when $\QQQQ$ is a vector space,
the tangent bundle simplifies to $\mathcal{T}\QQQQ = \QQQQ \times \QQQQ$,
and also only for dissipation potentials
$\RRRR$ not depending on the state $q$, \mk i.e. $\RRRR=\RRRR(\dot q)$. \EEE

Apart from the non-existence of the strong derivative,
the aforementioned non-smoothness of RIS has yet another,
peculiar side effect,
that is the existence of a variety of different notions of solutions
 which differ significantly in the occurrence of jumps.
As mentioned already in \cite{Roubicek2015},
the proper definition of the solution to the \ed RIS \EEE
is here a vital part of the modeling.

The most general notion, encompassing all others,
is that of {\it local solution},
first introduced by R. Toader and C. Zanini in \cite{Toader-Zanini}.
It is defined by the local stability \eqref{localstability}
and by an \emph{upper energy inequality},
i.e. ``$=$'' in \eqref{e-equality} is replaced by ``$\le$'',
stating that the energy in the system cannot increase.
It is exactly the energy inequality which makes
the existence of local solutions easier to prove,
however, at the same time, it represents its main drawback.
Since the equality in \eqref{e-equality} is lost,
the energetics of the system \mk is \EEE not fully under control.
In other words, the dissipative mechanisms,
and hence the underlying physics of the system,
are not specified entirely.
From \ed a \EEE mathematical point of view the indefiniteness of the dissipative mechanisms
leads to
a selectivity gap,
as there may exist even an uncountable family of local solutions
to a~given problem; see e.g. \cite{Roubicek2015}.
Also from \ed a \EEE computational point of view the mere
inequality is disadvantageous. \ed Indeed, \EEE when the solution is approximated numerically,
it is not known whether the energy decreases
due to unspecified physical dissipation
or rather \mk due to rounding errors and numerical effects. \EEE

A generally stronger concept of {\it energetic solutions}
was first introduced by A. Mielke and F. Theil \cite{MielkeTheil2004}
and then advocated
by many authors (see \cite{MieRou15RIST} and the references therein).
The local stability \eqref{localstability} is replaced
by the~\emph{global stability} condition
\begin{align}
    \label{globalstability}
    \forall \tilde{q} \in \QQQQ: \quad
    \EEEE(t,q(t)) \leq \EEEE(t,\tilde{q}) + \DDDD(q(t),\tilde{q}),
\end{align}
where $\DDDD(q_1,q_2)$ is the so-called dissipation distance,
expressing the amount of dissipated energy when the state
changes from $q_1$ to $q_2$.
In the case in which $\QQQQ$ is a~vector space
coinciding with its tangent space,
the dissipation distance is related to the dissipation potential
by $\DDDD(q_1,q_2) = \RRRR(q_2-q_1)$,
provided that $\RRRR$ is again state independent;
we refer to \cite[Subsec.3.2.2]{MieRou15RIST} for a thorough
discussion about the relation between the dissipation potential $\RRRR$
and the dissipation distance $\DDDD$.

Energetic solutions are very flexible and applicable to convex
as well as nonconvex problems.
In nonconvex problems, however,
they do not necessarily provide  proper predictions of the mechanical behavior of the RIS, as  jumps of the solution
in time appear ``too early''
when compared to physical experiments in several applications,
for example in models predicting damage and fracture.
In fact, energetic solutions jump immediately when there is enough energy available,
as can be seen directly from \eqref{globalstability},
representing hence, in a~sense, the worst case scenario.
For this reason, the global stability condition is sometimes
called in the literature the \emph{energetic criterion},
in contrast to the Biot's equation \eqref{biot} representing a~\emph{stress criterion}.
In other applications, e.g.,  \mk the modeling of \EEE shape memory alloys,
the energetic criterion may provide a~reasonable simplification of the problem
that makes its modeling feasible.
The global stability condition also implies the lower energy
inequality (i.e.\ the opposite to the one in the definition
of the local solution).
Energetic solutions therefore satisfy even energy \emph{equality}
\eqref{e-equality},
so that the only dissipation is due to
$\int_0^T \RRRR(q,\dot q) \, \d t$;
in particular, there is no extra energy dissipated on jumps.
We again refer to \cite{MieRou15RIST}
for further properties of energetic solutions,
and for their applications in materials science.

Completely on the other side  of the spectrum with respect to energetic solutions
lie the so-called {\it BV solutions},
introduced by A. Mielke, R. Rossi and G. Savar\'{e} in \cite{MRS09},
which `jump as late as possible'.
BV solutions also satisfy the energy equality,
but with an extra dissipation \ed accounting \EEE for time discontinuities. This is given by a~detailed resolution of the jump
during which the viscous dissipative mechanism,
not present in the rate-independent limit, is again activated.
This notion of solution stems from models considering also viscous dissipation
as a limit for vanishing viscosity.
However, as opposed to the vanishing viscosity solutions,
BV solutions are not defined as a~pointwise limit of solutions
to a~viscous perturbation of the original RIS.
Unlike for energetic solutions,
the so far developed existence theory for BV solutions requires
very strong assumptions on the data
that can be barely expected to hold in the engineering practice.
Apart from the analytical point of view,
also the numerical approximation of BV solutions represents
a~fairly challenging task.

The list of possible notions of solutions presented here
is by no means exhaustive and the reader is encouraged to consult for example
\cite{Mielke2011RIP,MieRou15RIST,RindlerSchwarzacher2017,RindlerSchwarzacher2019}
for a~more detailed survey of both the mentioned and unmentioned solution strategies for RIS, as well as for recent developments.

\vspace{2ex}

Given the overview above, we regard the concept of local solutions as a
compromise between physical requirements and mathematical restrictions,
taking into account the state of the art of contemporary mathematical and numerical analysis.
As they are the most general concept,
local solutions do not exclude solutions jumping `late enough'
and represent a~well defined object towards which the numerical solutions may converge,
the numerics for local solutions being also more understood today than for BV solutions;
see \cite{Knees-Negri}.
\ed An application \EEE of local solutions in continuum mechanics,
was presented in \cite{Roubicek2015} with a~special regard to damage and delamination.
In this setting, the state space is linear and exhibits a~product structure
$Q = U \times Z$,
where $U$ denotes the~vector space of displacements $u$, also called elastic variables,
and $Z$ is the vector space of general internal variables $z$, describing,
e.g. the damage field. \mk Consequently, we write $q=(u,z)$. \EEE
In \cite{Roubicek2015}, the author proved the existence of local solutions to a rich class of problems
in which the energy functional $\EEEE(t,\cdot,\cdot)$
is not convex \ed on the state space, but \EEE is separately convex in the last two variables, i.e.
\begin{align}
    \label{eq:LocSolStrong}
    \mathcal{E}(t,\cdot,z)&: U \to (-\infty,+\infty]
    \quad \text{is convex and} \\
    \mathcal{E}(t,u,\cdot)&: Z \to (-\infty,+\infty]
    \quad \text{is convex}
\end{align}
and the Biot's equation \eqref{biot} is replaced by the system
\begin{align*}
    0 &\in \partial_u \EEEE(t,u(t),z(t)), \\
    0 &\in \partial_{\dot z} \mathcal{R}(\dot z(t))
    +
    \partial_z \mathcal E(t,u(t),z(t)).
\end{align*}
Note that the dissipation potential is assumed to be state independent.
The reason for calling $u$ the elastic variable is that the dissipation potential
does not depend on its rate, i.e. its change never dissipates energy.
The three convexities of $\RRRR(\cdot)$, $\EEEE(t,\cdot,z)$, and $\EEEE(t,u,\cdot)$
make the strong formulation \eqref{eq:LocSolStrong} equivalent
to global stability in $u$ and to the so-called semi-stability in $z$
\begin{align}
    \label{eq:LocWeak}
    \forall \tilde{u} \in U&: \quad
    \EEEE(t,u(t),z(t)) \leq \EEEE(t,\tilde{u},z(t)), \\
    \forall \tilde{z} \in Z&: \quad
    \EEEE(t,u(t),z(t)) \leq \EEEE(t,u(t),\tilde{z}) + \RRRR(\tilde{z} - z(t)).
\end{align}
The convexity in $u$ is also important for controlling the jump behavior in $u$\ed. This \EEE
is not directly encoded by the dissipation as is instead the case for the internal variable $z$.
The two separately global conditions in \eqref{eq:LocWeak} are then supplemented
\ed by \EEE an energy inequality where the total dissipation is given by \eqref{eq:diss}.
It should be noted that  \cite{Roubicek2015} also deals
with one of the main drawbacks of local solutions,
i.e their weak selectivity behavior,
and suggests to impose an additional criterion the solutions have to satisfy,
giving rise to maximally-dissipative local solutions.
\ed We point out that \EEE proving existence in this narrower class of solutions
is still an open problem.

\vspace{2ex}

The main goal of this article is to generalize the concept
of local solutions also to energies that are not necessarily convex
in the elastic variable and hence extending significantly
its applicability by including inelastic processes at large strain.
As already mentioned, the advantage over energetic solutions
is that `too early' jumps can be avoided.
The novelty of our analysis is
that the physical requirements typical \ed of \EEE large-strain mechanics,
in particular the local and global invertibility of the deformations,
are met within our setting.
We also stress that our study encompasses energies depending
on Eulerian gradients of the Eulerian fields,
which after rewriting into Lagrangian coordinates
introduce a~non-linear coupling between
the deformations and the Lagrangian fields of the internal variables.
We refer to Section \ref{sec:example} for a~specific example.

\ed In order to describe our contribution, we need to introduce some minimal notation. \EEE As in \cite{Roubicek2015},
we suppose that the state space is endowed with a product structure $\QQQQ = \YYYY \times Z$,
where $\YYYY$ is the space of elastic variables, here represented by deformations $y$,
and $Z$ is the vector space of internal variables $z$.
For a~time interval $[0,T]$ we consider
the energy functional
$\mathcal{E} : [0,T] \times \YYYY \times Z \to (-\infty,+\infty]$
and the dissipation functional
$\mathcal{R}: X \to [0,+\infty]$,
which is for simplicity not state dependent.
For analytical reasons we suppose $X \supset Z$,
i.e. that the tangent space $\mathcal{TQ}$ 
is larger than the state space of internal variables.
Since we work within the large-deformations framework,
we do not impose any convexity assumption on the functional
\begin{align}
\label{eq:not-convex}
  \mathcal{E}(t,\cdot,z):\YYYY \to (-\infty,+\infty].
\end{align}
Nevertheless,
we require convexity of the energy with respect to the Hessian of the deformations\mk, i.e., we work in the framework of nonsimple hyperelastic materials; cf.~\cite{BaCuOl81NLWCVP,Toupin1962,Toupin1964} \EEE.
Regarding the internal parameter, instead, we work under the assumption that
\begin{align}
\label{eq:convex}
  \mathcal{E}(t,y,\cdot):Z \to (-\infty,+\infty]
  \quad
  \text{\emph{is} convex}.
\end{align}
Concerning the dissipation potential,
we assume that $\mathcal{R}$ is a~so-called gauge, i.e.
\begin{align}
    \label{eq:gauge}
  \mathcal{R}: X \to [0,+\infty]
  \text{ is convex},
  \quad
  \forall a \geq 0, \, \forall z \in \dom \mathcal{R}:
  \mathcal{R}(az) = a \mathcal{R}(z),
\end{align}
where
$\dom \mathcal{R} := \{z \in Z:\mathcal{R}(z) < +\infty\}$.
The positive degree-1 homogeneity of $\mathcal{R}$
also implies $\mathcal{R}(0) = 0$ and,
together with convexity,
 guarantees that the dissipation potential satisfies the triangle inequality
$\mathcal{R}(a+b)\leq \mathcal{R}(a) + \mathcal{R}(b)$ for every $a,b\in X$.
We point out that a broad class of problems (such as delamination and bulk damage)
can be formulated within the framework
described by \eqref{eq:not-convex}, \eqref{eq:convex}, and \eqref{eq:gauge}.

A strong \ed formulation for \EEE our notion of solution consists in finding $(y,z):[0,T] \to \YYYY \times Z$ satisfying
\begin{subequations}
\label{eq:ClassicalSystem}
\begin{align}
  \label{eq:ClassicalSystem-IC}
  z(0) &= z_0, \\
  \label{eq:ClassicalSystem-y}
  y(t) &\in \arglocmin_{y \in \YYYY} \mathcal{E}(t,y,z(t)), \\
  \label{eq:ClassicalSystem-z}
  0 &\in \partial \mathcal{R}(\dot{z}(t))
  +
  \partial_z \mathcal{E}(t,y(t),z(t)),
\end{align}
\end{subequations}
where $z_0$ is a given initial condition,
$\partial$ and $\partial_z$ denote  (partial) sub-differentials,
and $\arglocmin$ is the set of local minimizers of the energy.
The weak formulation \ed of \eqref{eq:ClassicalSystem} \EEE relies on the convexity of the dissipation potential
and of the energy with respect to the internal variable.
Following \cite{Roubicek2015},
we rewrite \eqref{eq:ClassicalSystem-z}
using the concept of \emph{semi-stability} for the internal parameter $z$.
The requirement for the elastic variable to be a local minimum
is replaced by global minimality,
both because of \ed the stability of this condition \EEE and to ease its mathematical treatment.
\mk Let  $\B([0,T];U)$ stand for the space of functions defined everywhere in $[0,T]$ with values in $U$ that are  bounded.
\ed Recalling \eqref{eq:diss} for a $BV$-function $z$ on the interval $[r,s]\subset [0,T]$, our \mk notion of solution to the system \eqref{eq:ClassicalSystem} reads as follows\mk ; cf.~\cite[Def.~2.1]{Roubicek2015}.\EEE

\begin{definition}[a.e.-Separately Global Solution]\label{def:sep-global}
  The mapping $(y,z): t \mapsto (y(t),z(t)) \in \QQQQ$
  with $y \in \B(I;\YYYY)$ and
  $z \in \B(I;Z) \cap \BV (I;X)$
  is called an a.e.-separately global solution if
  $t \mapsto \partial_t \EEEE(t,y(t),z(t))$ is integrable,
  and the following conditions are satisfied:
    \begin{align*}
      &&&z(0) = z_0, \\
      &\forae t \in I, \, \forall \tilde{y} \in \YYYY:
      &&
      \mathcal{E}(t,y(t),z(t)) \leq
      \mathcal{E}(t,\tilde{y},z(t)), \\
      &\forae t \in I, \, \forall \tilde{z} \in Z:
      &&
      \mathcal{E}(t,y(t),z(t))
      \leq
      \mathcal{E}(t,y(t),\tilde{z})
      +
      \mathcal{R}(\tilde{z}-z(t)), \\
      &\forae t_1, t_2 \in I, \, t_1 < t_2:
      &&
      \mathcal{E}(t_2,y(t_2),z(t_2))
      +
      \diss_\mathcal{R}(z;[t_1,t_2]) \\
      &&& \quad
      \leq
      \mathcal{E}(t_1,y(t_1),z(t_1))
      +
      \int_{t_1}^{t_2}
        \partial_t \EEEE (t,y(t),z(t)) \, \d t.
    \end{align*}
\end{definition}

We point out that this concept
of solution does not require any time-differentiability of the internal variable $z$.
As in \cite{Roubicek2015}, though, no measurability in time of the deformation and no absolute-continuity in time of the internal variable are a-priori guaranteed. 

Separate global energy minimization is beneficial
from the mathematical point of view
and unlike the mere first optimality conditions guarantees the necessary stability of the solution,
which is required by the underlying physics.
On the other hand,
in some situations it inevitably leads to completely flawed predictions;
especially when time-dependent surface loads
are considered.
In order to avoid the most pathological situations
we hence treat only time-dependent Dirichlet boundary conditions
(sometimes referred to as `hard devices'),
or at least their relaxation via a~penalty method,
sometimes also called `soft-devices'.

Leaving the precise assumptions and formulations
to Section \ref{sec:mat-set},
we provide here a simplified statement of \ed our \EEE main result, explain the crucial steps of its proof,
and highlight the main novelties.

\begin{theorem}
\label{thm:main}
Under suitable assumptions on the system $(\QQQQ,\EEEE,\RRRR)$,
and in the setting of bulk damage, the problem \eqref{eq:ClassicalSystem}
admits a~separately global solution in the sense of Definition \ref{def:sep-global}.
\end{theorem}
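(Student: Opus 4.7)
The plan is to follow the standard semi-implicit time-discretization (staggered minimization) scheme that is classical for separately global/semi-stable notions of solution, but to take care of the large-strain specifics: the non-convexity of $\mathcal{E}(t,\cdot,z)$, the blow-up of the energy at vanishing determinants, the Ciarlet--Ne\v{c}as global invertibility, and the Eulerian--Lagrangian coupling through the internal variable. First, I would fix a partition $0=t_0^\tau<\cdots<t_N^\tau=T$ with step $\tau=T/N$, and construct discrete approximants by the staggered scheme
\begin{align*}
  y_k^\tau &\in \argmin_{y\in\YYYY} \mathcal{E}(t_k^\tau,y,z_{k-1}^\tau),\\
  z_k^\tau &\in \argmin_{z\in Z}\bigl\{\mathcal{E}(t_k^\tau,y_k^\tau,z)+\mathcal{R}(z-z_{k-1}^\tau)\bigr\},
\end{align*}
starting from the given $z_0$. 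Existence at each step uses coercivity plus weak lower semicontinuity of $\mathcal{E}(t,\cdot,z)$ (afforded by the nonsimple-material second-gradient term and polyconvexity in the deformation gradient), together with convexity of $\mathcal{E}(t,y,\cdot)$ and convexity of $\mathcal{R}$.

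Second, I would derive the discrete a priori estimates: testing the $z$-minimality against $z_{k-1}^\tau$ and using the $y$-minimality against $y_{k-1}^\tau$ yields, after telescoping, a discrete upper energy inequality involving $\sum_k \mathcal{R}(z_k^\tau-z_{k-1}^\tau)$ and the power of the loading $\partial_t\mathcal{E}$. This gives uniform bounds on the $\mathcal{R}$-variation of the piecewise-constant interpolants $\bar z^\tau$, uniform bounds on $\bar y^\tau$ in the ambient function space (including determinants bounded away from zero in an appropriate integral sense, via the Healey--Kr\"omer type argument), and integrability of $t\mapsto\partial_t\mathcal{E}(t,\bar y^\tau,\bar z^\tau)$. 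In parallel one verifies the discrete analogues of the two separately global conditions: global minimality in $y$ holds by construction at the discrete time nodes, while the $z$-minimality at $t_k^\tau$ is equivalent (by convexity of $\mathcal{E}(t,y,\cdot)$ and $\mathcal{R}$) to semi-stability of $z_k^\tau$ at $(t_k^\tau,y_k^\tau)$.

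Third, I would pass to the limit $\tau\to0$. Helly's selection theorem applied to $\bar z^\tau$ (bounded in $\BV(I;X)\cap\B(I;Z)$) yields a pointwise, in $Z$ weakly, limit $z(t)$ defined everywhere. For each such $t$, compactness and polyconvexity of $\mathcal{E}(t,\cdot,z(t))$ give a (not a priori measurable in $t$) selection $y(t)$ of global minimizers as the weak limit of $\bar y^\tau(t)$ along a $t$-dependent subsequence, preserving $\det>0$ and the Ciarlet--Ne\v{c}as condition by their stability under weak convergence. Global $y$-minimality at almost every $t$ is then obtained by lower semicontinuity and a direct comparison with arbitrary competitors. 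Semi-stability for $z$ passes to the limit by a mutual recovery sequence argument; here the construction of the recovery sequence has to be adapted to the nonlinear Eulerian--Lagrangian coupling, using that $\mathcal{R}$ depends only on $z$ and that $\mathcal{E}(t,y,\cdot)$ is convex so that affine interpolation $\tilde z_\tau=(1-\lambda_\tau)\bar z^\tau+\lambda_\tau\tilde z$ with $\lambda_\tau\to1$ suitably works. Finally, the upper energy inequality is obtained from the discrete one by weak/strong lower semicontinuity on the left (using $\BV$-lower semicontinuity of $\diss_\mathcal{R}$) and dominated convergence on the right.

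The main obstacle I expect is the passage to the limit for the deformation: because $\mathcal{E}(t,\cdot,z)$ is non-convex and only polyconvex plus second-gradient regularized, and because the Eulerian gradients in the coupling produce terms depending nonlinearly on $y$ and its gradient jointly with $z$, one must show both that these coupling terms are (sequentially) continuous along the approximating sequences and that the global minimality in $y$ together with the Ciarlet--Ne\v{c}as injectivity is stable in the limit. Building the mutual recovery sequence compatible with the large-strain constraints ($\det\nabla y>0$, injectivity, Dirichlet data) while keeping $\mathcal{R}(\tilde z_\tau-\bar z^\tau)\to\mathcal{R}(\tilde z-z)$ is the most delicate point, and this is presumably where the separate convexity in $z$ and the nonsimple-material regularization play their decisive role.
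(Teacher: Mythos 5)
Your outline reproduces the paper's architecture (staggered minimization, discrete energy estimates, Helly selection for $z$, $t$-dependent subsequences for $y$, lower semicontinuity), but the two steps you compress into single sentences are precisely where the argument can fail, and your proposed mechanisms do not close them. First, passing to the limit in the stability and semi-stability conditions requires \emph{continuity}, not lower semicontinuity, of the right-hand sides $\mathcal{E}(t,\tilde y,\underline z_\tau(t))$ and $\mathcal{E}(t,\bar y_\tau(t),\tilde z)+\mathcal{R}(\tilde z-\bar z_\tau(t))$ along the approximating sequences. Since $\nabla z$ enters the energy only through the nonlinear coupled term $\phi((\nabla y)^{-\top}\nabla z)$ and $W$ has critical growth, this forces one to upgrade the weak convergences supplied by Helly's theorem to \emph{strong} convergence of $\nabla\bar z_\tau(t)$ in $L^\alpha$ and of $\bar y_\tau(t)$ in $W^{2,p}$. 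The paper obtains these by testing the discrete semi-stability (resp.\ stability) with the limit itself, deducing convergence of the $\phi$-energies (resp.\ of the elastic energies), and invoking strict convexity of $\phi$ (resp.\ a Visintin-type lemma exploiting convexity in the Hessian). Your affine recovery sequence $(1-\lambda_\tau)\bar z^\tau+\lambda_\tau\tilde z$ does not substitute for this: it addresses the dissipation term, which is harmless here because $\mathcal{R}$ is continuous on $L^1$, but leaves the convergence of the coupled term untouched. Moreover, because $\nabla z$ has no coercivity of its own, the paper must add a vanishing discrete regularization $\mathcal{H}_\tau(z)=\tau^\kappa\int_\Omega|\nabla z|^\beta\,\d x$ and replace competitors by Lipschitz truncations; without some such device, concentrations of $\nabla\underline z_\tau(t)$ at the exceptional times where only weak convergence is available cannot be excluded, and the stability condition is not recovered for the required set of times.

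Second, the energy inequality compares two arbitrary instants $t_1<t_2$, while your subsequences are chosen pointwise in $t$: a subsequence selected at $t_1$ yields at $t_2$ only some cluster point $\xi(t_2)$ which need not coincide with the previously defined $y(t_2)$, and ``dominated convergence on the right'' is not available since $\partial_t\EEEE(t,\underline y_\tau(t),\underline z_\tau(t))$ need not converge pointwise to $\partial_t\EEEE(t,y(t),z(t))$. The paper resolves this by (i) selecting the $t$-dependent subsequences so as to maximize the limiting power $\bar{\theta}^{\text{sup}}(t)$ and identifying it with $\partial_t\EEEE(t,y(t),z(t))$ through a translation argument relating the left- and right-continuous interpolants, and (ii) invoking the already established global stability at $t_2$ to bound $\EEEE(t_2,y(t_2),z(t_2))\le\EEEE(t_2,\xi(t_2),z(t_2))$. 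Both devices are absent from your proposal, and neither is a routine completion of the scheme you describe.
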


The proof strategy relies on a classical procedure:
after performing a Rothe-type time discretization,
\ed we identify time-discrete solutions to the system
in Definition \ref{def:sep-global}. We then
establish uniform estimates for the associated piecewise-constant interpolants, and we
select suitable convergent subsequences \ed as the time-step $\tau$ tends to zero.\EEE
\ed Using the convexity of the energy \ed with respect to \EEE the hessian of the deformations and the gradients of the internal variables,
\ed we then prove that the previously selected subsequences satisfy improved convergence properties. \EEE
This step is instrumental for showing \ed that the obtained limiting maps satisfy the stability,
semi-stability, and energy-inequality conditions in Definition \ref{def:sep-global}.

The most important novelty of our contribution consists in
extending the notion of local solutions to the large-strain setting for RIS,
as well as to energies $\mathcal{E}$ which are not convex with respect to the deformations,
and which depend on the Eulerian gradient of the internal variable, namely on the quantity
(rewritten in Lagrangian coordinates) $\nabla y^{-\top} \nabla z$.
One difficulty caused by the fact that the gradient of the internal variable only appears in this latter coupled term  \mk is the low exponent with which  \EEE  $\nabla z$ is integrable. \ed This, in turn, makes the study of    compactness
properties for the sequence of piecewise-constant interpolants of time-discrete solutions extremely delicate.\EEE

\ed As a result, although an explicit dependence of the energy on the hessian of deformations is not needed for proving the existence of the time-discrete solutions, thus allowing to include classical
polyconvex energies \mk without \EEE higher-order terms in the analysis at the discrete level, a control on the second derivatives of the deformations is necessary for the passage from time-discrete to time-continuous solutions.

An additional complication in the passage from time-discrete to time-continuous solutions \ed is due to the fact that \EEE in the time-discrete stability condition \ed a \EEE \emph{right}-continuous interpolant of the internal variable appears,
\ed whereas in \EEE the time-discrete semi-stability \ed a \EEE \emph{left}-continuous interpolant
 is present \ed (see \eqref{eq:RInterp} and \eqref{eq:LInterp} below for the precise definitions). \EEE
\ed The standard estimates for the error between the two interpolants
are not sufficient for establishing compactness, thus calling for suitable refined techniques, which are presented in Lemma \ref{lemma:ImprovedZ} below. \ed A crucial idea is to  regularize the energy on the time-discrete level by an additional term (see \eqref{eq:RegS}), \EEE
which improves the integrability of the internal variable,
but disappears as the discretization parameter tends to zero.

Apart from the problems arising from geometric nonlinearities due to large strains,
we also have to face the non-uniqueness of solutions caused by the lack of \mk  (strict) \EEE convexity
in the elastic variable.
\ed As in \cite{FraMie06ERCR},
this \EEE leads to solutions where the elastic variable \mk may be non-measurable \EEE in time.
\ed A further difficulty lies in the fact that local solutions are characterized by an
energy inequality \mk comparing the energetics \EEE at \emph{two time instants},
which makes the compactness analysis even more complicated. \ed This issue is addressed in Subsections \ref{subs:t-dep} and \ref{subs:better-y} by extracting suitable time-dependent sequences exhibiting enhanced convergence properties. \EEE

\ed We finally observe that, although our study is focused on bulk-damage materials, more general ansatz for the energies, in which the \ed contributions of the highest order terms of the elastic and
internal variable are decoupled, could also be~considered and will be the subject of forthcoming works. \EEE

The paper is organized as follows:
in Section \ref{sec:example} we present our concrete case study.
The full statement of the main result is postponed to Theorem \ref{thm:existence}
in Section \ref{sec:mat-set}, where we describe the precise mathematical setting.
Section \ref{sec:proof} is devoted to the proof of Theorem \ref{thm:existence}.
Eventually,
in Section \ref{sec:disc} we summarize our contributions and discuss their main implications.

%%% =========================================================================
\section{A specific case study}
\label{sec:example}
Before proceeding further, we specify the abstract framework of RIS to a~concrete example from continuum mechanics,
postponing the full mathematical details to Section \ref{sec:mat-set};
see also \cite{MieRou15RIST} for a rich list of applications \mk to the modeling of solids. \EEE
In the problem of bulk gradient damage the state space $\QQQQ = \YYYY \times Z$
consists of deformations $y : \Omega \to \RRR^3$, for
$\Omega \subset \RRR^3$ denoting the body in the reference configuration,
and of damage fields $z: \Omega \to [0,1]$,
where $z(x) = 1$ means that the material at the point $x \in \Omega$ is flawless,
while $z(x) = 0$ corresponds to a fully damaged material,
completely unable to sustain any stress (perhaps with the exception of a pure compression).

The energy functional then takes the form
\begin{align} \label{eq:Ebulk}
  \mathcal{E}(t,y,z)
  =
  \int_\Omega
    V(z, \nabla y, \nabla^2 y) + \phi \left( (\nabla y)^{-\top} \nabla z \right) \,
  \d x
  -
  \ell (t,y,z),
\end{align}
where $V: \RRR \times \RRR^{3 \times 3} \times \RRR^{3 \times 3 \times 3} \to [0,+\infty)$
is the stored energy density of a nonsimple material (see e.g. \cite{BenKru17WLSCIFA, KruRou19MMCMS},
\mk $\phi: \RRR^3 \to (0,+\infty)$ \EEE prevents formation of~microstructures
by penalizing the `crack length' (in gradient damage relaxed by a~smooth interface),
and $\ell: [0,T] \times \YYYY \times Z \to \RRR$ denotes a time-dependent loading.
All the functions can be in principle $x$-dependent,
including also materials not homogeneous in the reference configuration.

For bulk damage, \ed in the homogeneous setting \EEE the stored energy density has usually the form
\begin{align*}
  V(\nabla y, \nabla^2 y, z)
  =
  \gamma(z) W(\nabla y, \nabla^2 y),
\end{align*}
where
$W:\RRR^{3 \times 3} \times \RRR^{3 \times 3} \to [0,+\infty)$
is the stored energy density of the undamaged material,
and the function $\gamma:\RRR \to (0,+\infty)$ models incomplete bulk damage,
meaning that its values are bounded \ed away \EEE from zero by some small positive constant
and hence preserving the coercivity of the stored energy density.
A particular choice of $W$ is
\begin{align*}
    W(\nabla y, \nabla^2 y)
    &=
    \psi(\nabla y)
    +
    \Phi(\nabla^2 y) \\
    &=
    a|\nabla y|^p
    +
    b|\cof \nabla y|^q
    +
    c (\det \nabla y)^r
    -
    d \ln (\det \nabla y)^2
    +
    \frac{\epsilon}{p} |\nabla^2 y|^p.
\end{align*}
The positive coefficients $a$,$b$,$c$,$d$
yield arbitrary Lamé constants;
see \cite{Ciar88ME1} for $p=2$.
The last term serves as a~mathematical regularization,
the coefficient $\epsilon > 0$ being small enough not to influence the material response considerably.
Thanks to the \ed enhanced regularity provided \EEE by $\Phi$,
the $\psi$-term can in principle be non-convex,
including for example St. Venant--Kirchhoff materials
\begin{align*}
    \psi(F)
    =
    \frac{1}{8}
    \CCCC
    \left( F^\top F - \III \right)
    : \left( F^\top F - \III \right).
\end{align*}
\ed In the expression above, $\CCCC$ is a~fourth order elastic tensor,
which is easy to use in an engineering practice thanks to its direct
connection to small-strain mechanics. \mk Additionally, $\III$ is the identity matrix. \EEE
A very common choice of $\gamma$ is
\begin{align*}
    \gamma(z)
    =
    \begin{cases}
        z^2 + \epsilon, & \text{for } z \geq 0, \\
        \epsilon, & \text{for } z < 0.
    \end{cases}
\end{align*}

The function $\phi$ depends on the Eulerian gradient of the Eulerian field
$\zeta(y(x)) = z(x), x \in \Omega$, denoted by $\nabla^y \zeta$.
An example is given by
\begin{align*}
    \int_{\Omega}
        \phi \left( (\nabla y)^{-\top} \nabla z \right) \, \d x
    =
    \int_{\Omega}
        \frac{\rho_0}{\alpha}
        \left| (\nabla y)^{-\top} \nabla z \right|^\alpha\,\d x
    =
    \int_{\Omega^y}
        \frac{\rho}{\alpha}
        \left| \nabla^y \zeta \right|^\alpha \, \d \ed x^y,
\end{align*}
where $\alpha>0$, and $\rho_0(x)=\rho(y(x))\det \nabla y(x)$ encodes the relation between
the Lagrangian and Eulerian fields of density with respect to the volume
in the reference and deformed configuration, respectively.
This choice favors the onset of damage at points where the density $\rho$ is smaller.

An example of the loading functional $\ell$,
 typically non-linear for large deformations,
would be
\begin{align}
    \label{eq:Loading}
    \ell(t,y,z)
    &=
    \int_\Omega B(t) \cdot y \, \d x
    +
    \int_{\Gamma_N} S(t) \cdot y \, \d \mathcal{H}^2(x) \\
    \nonumber &-
    \frac{1}{2 \varepsilon}
    \int_{\Gamma_{\mathrm{d}}}
        |y - y_D(t)|^{2} \, \d \mathcal{H}^2
    -
    \int_\Omega
        \pi(t,y) \det \nabla y \, \d x,
\end{align}
where $B$ and $S$ stand for
bulk and surface loads, respectively,
the third term penalizes 
the mismatch between the deformation value
and the prescribed Dirichlet
boundary condition $y_D$ for $\varepsilon \to 0$,
and the~last term is a~potential
for~a~surface pressure load,
$\pi$ being the~pressure field in the~deformed configuration.
\begin{remark}[Hyper-loading]
    \label{rem:NonSimple}
    Second grade materials, whose stored energy density
    depends also on $\nabla^2 y$,
    may model various physical phenomena,
    for example \ed the \EEE flow of Korteweg fluids
    (depending on the Eulerian gradient of the Eulerian density field)\ed,
    the \EEE deformation of woven fabrics \cite{Kort01FPEM,IsolaSteigmann2014}\ed, phase transitions \cite{BallCrooks2011,BallMora-Corral2009,Silh88PTNB},  and multisymplectic field theory \cite{Kouranbaeva2000}; \EEE see also 
    and \cite{Toupin1962,Toupin1964,GreRiv64:MCM,BaCuOl81NLWCVP,
    PideriSeppecher1997,Mariano2007,DeScVi09GHLISGM,
    Forest2009,SeppecherAlibertIsola2011,Segev2017}
    for further works on~non-simple continua,
    the list definitely not being exhaustive.
    \ed We point out that for the applications described in this paper it is not necessary to incorporate the hyper-loading,
    an additional terms representing conservative forces,
    typically in~a~form of an~edge traction or the so-called couple-stress
    or double force acting on the boundary 
    (see \cite{DavFri20TWR,DavFri20TWL,FriKru17PNLV,FRiKru20DVKP,Podio-Guidugli1990,Mindlin1964}), for
    no such physical phenomena are expected to arise in our intended applications.
\end{remark}

The second constitutive quantity in the gradient system
is the dissipation potential,
for applications to the unidirectional bulk damage being
\begin{align}
    \label{eq:DissPotent}
    \RRRR(\dot{z})
    =
    \begin{cases}
        \displaystyle
        \int_\Omega G \rho_0 |\dot{z}(x)|\,\d x
        =
        \mk \int_{\Omega^y} G \rho(x^y) |\dot{\zeta}(x^y)|\,\d x^y, \EEE
        &
        \text{for } \dot{z} \leq 0 \text{ a.e. in } \Omega, \\
        +\infty,
        &
        \text{otherwise},
    \end{cases}
\end{align}
where $G$ is the so-called fracture toughness and
$z(x) = \zeta(y(x))$ are respectively
the Lagrangian and Eulerian damage fields,  \mk and, for $x^y=y(x)$, we denote    $\rho(x^y)=\rho_0/\det\nabla y(x)$,  $ x \in \Omega$  the actual density. \EEE
In this model, the material cannot heal, as $\dot{z} > 0$ is prohibited.
The unidirectionality can be relaxed by allowing small healing,
which may be useful from the analytical point of view;
see \cite{Roubicek2015}.

\begin{remark}[Delamination]
    A~particular \ed instance \EEE of \ed a \EEE loading functional $\ell$
    depending also on the internal variable $z$ \ed is \EEE provided by delamination,
    where the `elastic constants' in the boundary term in \eqref{eq:Loading}
    may depend on $z$, simulating then a~damageable adhesive.
    The dissipation potential in \eqref{eq:DissPotent} would include then
    the corresponding surface term.
    We refer to \cite{Roubicek2015} for a specific example in the small-strain regime.
\end{remark}

%%% ======================================================================
\section{Mathematical Setup}
\label{sec:mat-set}
In this section, we introduce the precise setting of our work
and present the key mathematical ingredients for our proof.
In the whole paper we suppose \ed that \EEE the body's reference configuration
$\Omega \subset \RRR^3$ is a~bounded Lipschitz domain. \ed With a slight abuse of notation, we will sometime denote the $W^{m,p}$ and $L^p$-norms simply by $\|\cdot\|_{m,p}$ and $\|\cdot\|_p$ without specifying the domain and target space, whenever this will be clear from the context. \EEE

%%% ==============================
\subsection{State Space}
\label{subs:state}
The state space $\QQQQ$ is assumed to be endowed with the product structure
\begin{align}
    \label{eq:StateSpace}
    \QQQQ := \YYYY \times Z,
\end{align}
consisting of admissible deformations and internal variables,
given by
\begin{align}
\label{eq:def-Y}
    \YYYY
    :=
    \{
        y \in W^{2,p}(\Omega,\RRR^3)
        :
        (\det \nabla y)^{-1} \in L^s(\Omega),&\\
        \det \nabla y > 0 \text{ a.e. in } \Omega,
        \eqref{eq:C-N} \text{ holds}&
    \},
\end{align}
where \eqref{eq:C-N} is the Ciarlet--Nečas condition defined below, and
\begin{align}
\label{eq:def-Z}
    Z
    :=
    \{
    	z \in W^{1,\alpha}(\Omega) : 0 \leq z \leq 1 \text{ a.e. in } \Omega
    \},
\end{align}
for some $p,q,r \geq 1$, $s\geq0$, and $\alpha \geq 1$. 
Note that instead of time-dependent Dirichlet data,
representing a load by a hard device,
we consider their relaxation via a~penalty method,
 called soft-device.
While the relaxation is still sufficient for Poincaré-type inequalities on~$\YYYY$,
it does not \ed rely \EEE on extension theorems for locally invertible mappings,
which are up to now not available.
\ed In what follows, we will say that a sequence $\{y_n\}$ is uniformly bounded in $\YYYY$ whenever $\sup_{n\in \mathbb{N}}\{\|y_n\|_{2,p}+\|(\det\,\nabla y_n)^{-1}\|_s\}<+\infty$. \EEE

%%% ===============================
\subsection{Injectivity}
As outlined in the introduction,
the utmost physical requirement in large deformations
is that the material does not interpenetrate.
Our analysis \ed hinges upon \EEE  the classical Ciarlet--Nečas condition 
\begin{align} \label{eq:C-N}
  \int_\Omega \det\nabla y \,\d x
  \leq
  \LLLL^3(y(\Omega)),
\end{align}
where $\mathcal{L}^3$ denote the Lebesgue measure in $\RRR^3$. For other conditions ensuring injectivity
we refer to \cite[Sec. 6, Thm.2]{GiaquintaModicaSoucek1989} and to \cite{Ball81GISF};
however, these require the Dirichlet boundary datum to be prescribed
on the whole $\partial\Omega$ \ed (see also \cite{Kroemer2019}). \EEE
The original statement from \cite{CiaNec87ISCN}
states that whenever $p>3$ and $y\in W^{1,p}(\Omega;\RRR^3)$ is such that
$\det\nabla y>0$ almost everywhere in $\Omega$,
then the deformation $y$ is injective almost-everywhere in $\Omega$.
Apart from the Ciarlet--Nečas condition we use also the result
from \cite[Theorem~3.4]{HenKos14LMFD} by which under
the additional condition  
\begin{align} \label{eq:H-K}
  \frac{|\nabla y|^3}{\det\nabla y} \in L^{\delta}(\Omega),
\end{align}
 for some $\delta>2$, deformations satisfying \eqref{eq:C-N}
are also invertible \emph{everywhere} in $\Omega$.
The crucial observation is that \eqref{eq:H-K} then implies that $y$ is an open map.
Although the Ciarlet--Nečas condition \eqref{eq:C-N} is well suited
for proving the existence of minimizers by the direct method,
it is by no means trivial to incorporate \ed this \EEE non-local constraint numerically.
We refer e.g. to \cite{KroemerValdman2019} for some relaxations which are
more feasible for numerical computations, as well as for further generalizations.

%%% ==============================================================
\subsection{Boundedness of $\det \nabla y$ \ed away \EEE from zero}
Another important  ingredient for our proof,
helping to overcome the difficulties arising from the Eulerian gradient
of~the~damage field,
is the following corollary of the result from \cite{HeaKro09IWSS}.
For convenience of the reader
we provide an alternative proof in the Appendix \ref{app:anal}.

\begin{corollary}[Healey-Krömer \cite{HeaKro09IWSS} ]
	\label{corollary:HK}
    Let $p>3$, $s \ge \frac{3p}{p-3}$,    Then  for every $y\in \YYYY$ there exists $\varepsilon_y > 0$ s.t.
    \begin{align}
    \label{eq:det-bound}
        \det \nabla y \geq \varepsilon_y > 0
        \quad
        \text{in } \overline{\Omega}.
    \end{align}
    Moreover, if a~sequence $\{y_\tau\} \subset \YYYY$ is uniformly bounded in $W^{2,p}(\Omega;\RRR^3)$, and the sequence $\{(\det \nabla y_\tau)^{-1}\}$ is uniformly bounded in $L^s(\Omega)$, then the bound on the determinant in \eqref{eq:det-bound} is uniform.
\end{corollary}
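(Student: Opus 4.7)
The plan is to exploit the Morrey embedding $W^{2,p}(\Omega;\RRR^3)\hookrightarrow C^{1,\alpha}(\overline{\Omega};\RRR^3)$, valid whenever $p>3$ with Hölder exponent $\alpha=(p-3)/p$. Since $\det\nabla y$ is a polynomial in the entries of $\nabla y$, this embedding immediately yields $\det\nabla y\in C^{0,\alpha}(\overline{\Omega})$ with Hölder constant $L$ controlled by a polynomial expression in $\|y\|_{W^{2,p}}$. In particular, $\det\nabla y$ is continuous on the compact set $\overline{\Omega}$ and attains its infimum there; combined with $\det\nabla y>0$ a.e. we get $\det\nabla y\geq 0$ on all of $\overline{\Omega}$.

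Next, I would argue by contradiction. Suppose there exists $x_0\in\overline{\Omega}$ with $\det\nabla y(x_0)=0$. Combining this with the Hölder bound yields, on any ball $B_r(x_0)\cap\Omega$ of positive measure,
\begin{equation*}
0 \leq \det\nabla y(x) \leq L\,|x-x_0|^{\alpha},
\end{equation*}
and hence $(\det\nabla y(x))^{-s} \geq L^{-s}|x-x_0|^{-\alpha s}$. A direct computation in polar coordinates, reducing the integral to $\int_0^r \rho^{2-\alpha s}\,\d\rho$, shows that the right-hand side fails to be integrable near $x_0$ exactly when $\alpha s\geq 3$, i.e. when $s\geq 3p/(p-3)$, which contradicts $(\det\nabla y)^{-1}\in L^s(\Omega)$. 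This yields the pointwise lower bound $\det\nabla y\geq\varepsilon_y>0$ on $\overline{\Omega}$.

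The main obstacle lies in the borderline case $s=3p/(p-3)$ and, more importantly, in the proof of the uniform version, for which the naive contradiction above must be refined. I would proceed by contradiction along a subsequence: if $\varepsilon_\tau:=\min_{\overline{\Omega}}\det\nabla y_\tau\to 0$, I select minimizers $x_\tau\in\overline{\Omega}$ and use the uniform Hölder constant $L$ (provided by the uniform $W^{2,p}$-bound) to estimate $\det\nabla y_\tau(x)\leq \varepsilon_\tau+L|x-x_\tau|^\alpha$. The rescaling $\rho=\varepsilon_\tau^{1/\alpha}t$ in polar coordinates then yields
\begin{equation*}
\int_{B_r(x_\tau)\cap\Omega} (\det\nabla y_\tau)^{-s}\,\d x
\geq
C\,\varepsilon_\tau^{3/\alpha-s}
\int_0^{r/\varepsilon_\tau^{1/\alpha}}\frac{t^2}{(1+Lt^\alpha)^s}\,\d t,
\end{equation*}
and a simple asymptotic analysis shows that the right-hand side diverges as $\varepsilon_\tau\to 0$ in both the supercritical regime $\alpha s>3$ (through the prefactor $\varepsilon_\tau^{3/\alpha-s}\to+\infty$) and the critical regime $\alpha s=3$ (through the logarithmic divergence of the integral as its upper endpoint tends to infinity). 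This contradicts the uniform bound on $\{(\det\nabla y_\tau)^{-1}\}$ in $L^s(\Omega)$ and yields the uniform lower bound claimed in the second part of the statement.
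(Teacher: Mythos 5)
Your argument for the pointwise bound is essentially the paper's: both proofs exploit that $\nabla y\in C^{0,\alpha}(\overline{\Omega})$ with $\alpha=1-3/p$ (via Morrey and the local Lipschitz continuity of the determinant), assume $\det\nabla y(x_0)=0$ at some point, and derive a contradiction with $(\det\nabla y)^{-1}\in L^s(\Omega)$. You handle the borderline case $\alpha s=3$ slightly more cleanly, by computing $\int_0^r\rho^{2-\alpha s}\,\d\rho$ exactly in polar coordinates, whereas the paper uses the cruder bound $\int_{B(0,r)}|x|^{-\alpha s}\,\d x\geq \frac{4}{3}\pi r^{3-\alpha s}\tilde C^{-s}$ and then invokes the absolute continuity of the Lebesgue integral to rule out the critical exponent; both are fine. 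For the uniform bound your proposal actually adds content: the paper simply defers to the proof of Proposition 5.1 in the gradient-polyconvexity reference, while you give an explicit quantitative argument via the rescaling $\rho=\varepsilon_\tau^{1/\alpha}t$, whose asymptotics (blow-up of the prefactor $\varepsilon_\tau^{3/\alpha-s}$ when $\alpha s>3$, logarithmic divergence of the rescaled integral when $\alpha s=3$) are correct and self-contained.

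One point you should make explicit in both parts: the contradiction point $x_0$ (respectively the minimizer $x_\tau$) may lie on $\partial\Omega$, and then ``$B_r(x_0)\cap\Omega$ of positive measure'' is not enough for your polar-coordinate lower bound --- a cusp at $x_0$ could make $\int_{B_r(x_0)\cap\Omega}|x-x_0|^{-\alpha s}\,\d x$ finite even for $\alpha s\geq 3$. Since $\Omega$ is a Lipschitz domain it satisfies the cone property, so $\mathcal{L}^3(B_\rho(x_0)\cap\Omega)\geq \tilde C\rho^3$ for small $\rho$ and the radial estimate survives up to a fixed angular factor; the paper states this step explicitly and you should too. With that addition your proof is complete.
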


The corollary implies that for the class $Z$ of internal variables, the Lagrangian gradient with respect to the set $\mathcal{Y}$ of admissible deformations is integrable if and only if
the Eulerian gradient has the same property. In other words, we have
\begin{align*}
    \nabla^y z
    := (\nabla y)^{-\top} \nabla z
    = \frac{(\cof \nabla y)}{\det \nabla y} \nabla z\in L^1(\Omega)\quad\text{iff}\quad \nabla z\in L^1(\Omega).
\end{align*}
This observation is crucial for proving the stability condition for the deformation $y$. 

Apart from the integrability of the \ed Eulerian \EEE gradient,
the Healey-Krömer lemma also allows us to prove continuity of elastic
energies that blow-up as the determinant of their argument converges to zero.
\begin{corollary}[Continuity of $W$ on $\YYYY$]
    \label{corollary:WConti}
    Let $\{y_k\}_{k\in\NNN} \subset \YYYY$ be such that
    $\{(\det \nabla y_k)^{-1}\}$ is uniformly bounded in $L^s(\Omega)$
    and $y_k \to y$ in $W^{2,p}(\Omega)$ as $k\to\infty$,
    where $p>3$ and $s \geq \frac{3p}{p-3}$.
    If the stored energy density
    $W:\RRR^{3 \times 3} \times \RRR^{3 \times 3 \times 3} \to [0,+\infty)$
    is continuous and \ed there exists a constant $C>0$ such that \EEE
    \begin{align*}
        |W(F,G)|
        \leq
        C(|F|^p + (\det F)^{-s} + |G|^p + 1),
    \end{align*}
    \ed for every $(F.G)\in \RRR^{3\times 3}\times \RRR^{3\times 3\times 3}$, \EEE then (for $k\to\infty$)
    \begin{align*}
        W(\nabla y_k, \nabla^2 y_k) \to W(\nabla y, \nabla^2 y)
        \quad
        \text{in } L^1(\Omega).
    \end{align*}
\end{corollary}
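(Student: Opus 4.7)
The plan is to apply Vitali's convergence theorem, combining the pointwise convergence guaranteed by the continuity of $W$ with the equi-integrability ensured by the Healey--Krömer bound from Corollary \ref{corollary:HK}. First, since $y_k \to y$ in $W^{2,p}(\Omega;\RRR^3)$, the sequences $\{\nabla y_k\}$ and $\{\nabla^2 y_k\}$ converge strongly in $L^p$ to $\nabla y$ and $\nabla^2 y$, respectively. Passing to a (non-relabelled) subsequence, I may assume that $\nabla y_k \to \nabla y$ and $\nabla^2 y_k \to \nabla^2 y$ almost everywhere in $\Omega$. Continuity of $W$ then yields
\begin{equation*}
    W(\nabla y_k, \nabla^2 y_k) \to W(\nabla y, \nabla^2 y) \quad \text{a.e.\ in } \Omega.
\end{equation*}

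The heart of the argument is the equi-integrability of $\{W(\nabla y_k, \nabla^2 y_k)\}$ in $L^1(\Omega)$, which I would establish through the growth condition. The terms $|\nabla y_k|^p$ and $|\nabla^2 y_k|^p$ are equi-integrable thanks to the strong $L^p$ convergence. For the singular term $(\det \nabla y_k)^{-s}$, the key observation is that the hypotheses of Corollary \ref{corollary:HK} are satisfied: $\{y_k\}$ is uniformly bounded in $W^{2,p}(\Omega;\RRR^3)$ (being a convergent sequence), and by assumption $\{(\det \nabla y_k)^{-1}\}$ is uniformly bounded in $L^s(\Omega)$ with $s \geq \frac{3p}{p-3}$. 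Therefore, the Healey--Krömer bound provides a \emph{uniform} $\varepsilon > 0$ such that $\det \nabla y_k \geq \varepsilon$ in $\overline{\Omega}$, which in turn gives a uniform $L^\infty$ bound on $\{(\det \nabla y_k)^{-s}\}$, and hence trivially its equi-integrability.

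With pointwise convergence and equi-integrability in hand, Vitali's convergence theorem delivers $W(\nabla y_k, \nabla^2 y_k) \to W(\nabla y, \nabla^2 y)$ in $L^1(\Omega)$ along the extracted subsequence. Since the limit does not depend on the subsequence, a standard Urysohn-type argument (every subsequence admits a further subsequence converging to the same limit) promotes the convergence to the full sequence, concluding the proof.

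The only delicate point is really the uniform lower bound on $\det \nabla y_k$: without Corollary \ref{corollary:HK}, the hypothesis $(\det \nabla y_k)^{-1} \in L^s$ would only yield $L^1$ boundedness of $(\det \nabla y_k)^{-s}$, insufficient for equi-integrability. The threshold $s \geq \frac{3p}{p-3}$ is precisely what makes the Healey--Krömer argument applicable and thus unlocks the whole proof.
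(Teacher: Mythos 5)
Your proposal is correct and follows essentially the same route as the paper: the paper's proof invokes Corollary \ref{corollary:HK} to get the uniform lower bound on $\det\nabla y_k$, so that $W$ has $p$-growth along the sequence, and then cites the standard continuity argument for Nemytskii operators --- which is exactly the pointwise-convergence-plus-equi-integrability (Vitali/dominated convergence) argument you spell out. Your identification of the uniform determinant bound as the crux, and of the threshold $s \geq \frac{3p}{p-3}$ as what makes it available, matches the paper's reasoning.
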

\begin{proof}
    Thanks to Corollary \ref{corollary:HK}
    the function $W$ has $p$-growth on the sequence \ed $\{y_\tau\}$. \EEE The thesis follows then from the standard argument for the continuity of the Nemytskii operator.
\end{proof}

%%% ==================================================================
\subsection{Strong Convergence Implied by Strict Convexity}
The well-known result for uniformly convex Banach spaces,
that weak convergence together with convergence of the norms implies strong convergence
(see e.g. \cite{FanGli58SGPS}),
has been generalized \ed in \EEE \cite{Visintin1984},
where, roughly speaking,
convergence of the norms is replaced by convergence of a strictly convex functional.
Since the case of multiple variables is not treated there,
we introduce here its slight generalization for integrands which are strictly convex
in the last variable.
The reader interested in the proof is referred to Appendix \ref{app:anal}.
\begin{lemma}[Strong convergence implied by the weak one]
    \label{lemma:Visintin}
    Let $\Omega \subset \RRR^3$ be a bounded Lipschitz domain.
    Assume that $C,c>0$ and
    $h:\Omega \times \RRR \times \RRR^{3\times3} \times \RRR^{3\times3\times3} \to \RRR$
    is a Carathéodory integrand such that for all
    $(x,z,F,G) \in \Omega \times \RRR \times \RRR^{3\times3} \times \RRR^{3\times3\times3}$
    and some $p,q \in (1,+\infty)$
    \begin{align}
        \label{h-growth}
        c(|z|^q + |F|^{p*} + |G|^p) - C
        \le
        h(x,z,F,G)
        \le
        C(1 + |z|^q + |F|^{p^*} + |G|^p)\ , 
    \end{align}
    where $h(x,z,F,\cdot)$ is strictly convex and
    \begin{align*}
        p^*:=
        \begin{cases}
            pn/(n-p) &\text{ if $1 < p < n$},\\
            \text{any number in } [1,+\infty) & \text{otherwise.}
        \end{cases}
    \end{align*}
    If $y_k\wto y$ in $ W^{2,p}(\Omega;\RRR^n)$, $z_k\to z$ in $L^q(\Omega)$, and 
    \begin{align}\label{limit11}
        \int_\Omega h(x,z_k,\nabla y_k,\nabla^2y_k)\, \d x
        \to
        \int_\Omega h(x,z,\nabla y,\nabla^2 y) \, \d x,
    \end{align}
    then $y_k\to y$ in $W^{2,p}(\Omega;\RRR^n)$.
\end{lemma}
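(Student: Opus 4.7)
The plan is to follow the classical Visintin scheme, adapted to the multi-variable integrand at hand: strict convexity of $h$ in its last slot, together with the convergence \eqref{limit11} of the integrals, will force the Young measure generated by $\{\nabla^2 y_k\}$ to collapse to a Dirac mass at $\nabla^2 y(x)$ for almost every $x$, yielding a.e.\ convergence of the Hessians; the lower $p$-growth in \eqref{h-growth} will then provide the equi-integrability needed to upgrade a.e.\ convergence to strong convergence in $L^p$.

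First I would perform the ``soft'' extractions. By the Rellich--Kondrachov theorem, the embedding $W^{2,p}(\Omega;\RRR^n)\hookrightarrow W^{1,p^*}(\Omega;\RRR^n)$ is compact under the convention on $p^*$ used in the statement, so up to a subsequence $\nabla y_k\to\nabla y$ strongly in $L^{p^*}$ and pointwise a.e.; similarly $z_k\to z$ pointwise a.e. The Hessians still converge only weakly in $L^p$, but by the Fundamental Theorem of Young Measures they generate a Young measure $\{\nu_x\}_{x\in\Omega}$ on $\RRR^{3\times 3\times 3}$ with barycenter $\nabla^2 y(x)$ a.e. A Balder--Valadier type lower semicontinuity theorem, applied with the Carathéodory structure of $h$ and the a.e.\ convergence of the slow arguments, gives
\begin{align*}
    \lim_{k\to\infty}\int_\Omega h(x,z_k,\nabla y_k,\nabla^2 y_k)\,\d x
    &\geq \int_\Omega\int_{\RRR^{3\times3\times3}} h\bigl(x,z(x),\nabla y(x),G\bigr)\,\d\nu_x(G)\,\d x \\
    &\geq \int_\Omega h\bigl(x,z,\nabla y,\nabla^2 y\bigr)\,\d x,
\end{align*}
the last step being Jensen's inequality applied to the convex map $h(x,z(x),\nabla y(x),\cdot)$. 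By \eqref{limit11} all three quantities coincide, so equality in Jensen's inequality holds pointwise a.e.; the strict convexity of $h(x,z(x),\nabla y(x),\cdot)$ then forces $\nu_x=\delta_{\nabla^2 y(x)}$ for a.e.\ $x$. Consequently $\nabla^2 y_k\to\nabla^2 y$ in measure, and along a further subsequence also a.e.

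To upgrade convergence in measure to strong convergence in $L^p$, I rewrite the lower bound in \eqref{h-growth} as
\begin{align*}
    c\,|\nabla^2 y_k|^p
    \leq
    h(\cdot,z_k,\nabla y_k,\nabla^2 y_k)
    + C
    - c\,|z_k|^q - c\,|\nabla y_k|^{p^*}.
\end{align*}
The first summand is $L^1$-convergent by \eqref{limit11} (and hence equi-integrable by Vitali's theorem), while $|z_k|^q$ and $|\nabla y_k|^{p^*}$ converge in $L^1$ thanks to the strong convergences of the first paragraph; the whole right-hand side is therefore equi-integrable, and so is $\{|\nabla^2 y_k|^p\}$. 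Vitali's convergence theorem, combined with the a.e.\ convergence established above, yields $\nabla^2 y_k\to\nabla^2 y$ strongly in $L^p$; together with the already-known strong convergence of $\nabla y_k$, this establishes $y_k\to y$ in $W^{2,p}$, and the Urysohn subsequence principle removes the dependence on the particular subsequence. The main technical obstacle I foresee is the justification of the Young-measure lower semicontinuity step in the presence of the sequentially varying slow arguments $(z_k,\nabla y_k)$ under the two-sided growth \eqref{h-growth}: this typically requires a truncation argument replacing $h$ by bounded Carathéodory integrands for which the passage to the limit is standard, and then exploits the lower growth together with \eqref{limit11} to recover the genuine integrand in the limit.
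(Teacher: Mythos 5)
Your argument is essentially the paper's own proof: both use the fundamental theorem of Young measures, lower semicontinuity plus Jensen's inequality, and strict convexity of $h(x,z,F,\cdot)$ to force the Young measure of $\{\nabla^2 y_k\}$ to collapse to $\delta_{\nabla^2 y(x)}$, deduce convergence in measure of the Hessians, and then upgrade to strong $W^{2,p}$ convergence via equi-integrability (where the paper invokes Pedregal's Corollary 6.10 you use Vitali, and the paper's appendix remark sketches the same truncation/equi-integrability route you foresee). One small caveat: \eqref{limit11} alone does not give $L^1$-convergence of $h(\cdot,z_k,\nabla y_k,\nabla^2 y_k)$; to justify the equi-integrability step, combine the a.e.\ convergence of all arguments with the lower bound in \eqref{h-growth} and Scheff\'e's lemma applied to the nonnegative combination $h(\cdot,z_k,\nabla y_k,\nabla^2 y_k)+C-c|z_k|^q-c|\nabla y_k|^{p^*}$.
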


%%% =======================================================

%%% ===============================================
\subsection{Free-Energy Assumptions}
\ed We consider the following \EEE ansatz
\ed for \EEE the energy functional $\EEEE: [0,T] \times \YYYY \times Z \to \mathbb{R}$
\begin{align}
    \label{eq:Energy}
    \mathcal{E}(t,y,z)
    =
		\int_\Omega
		    \gamma(z) W(\nabla y, \nabla^2 y)
		    +
		    \phi \left( (\nabla y)^{-\top} \nabla z \right) \,
		\d x
		-
		\ell (t,y,z),
\end{align}
We will assume that the functions
$\gamma$, $W$, $\phi$ and the loading $\ell$ satisfy the following
coercivity and growth conditions.
The function $\gamma:\RRR \to (0,+\infty)$ satisfies
\begin{align}
	\label{eq:gamma}
	\gamma \in C^1(\RRR) \text{ is positive and convex, and it is constant on } (-\infty,0].
\end{align}
The function $\phi$ is required to be strictly convex
and satisfy polynomial growth assumptions from above and below,
whereas $W$ is assumed to be convex in $\nabla^2 y$
and its growth conditions are compatible with the modeling of impenetrability
and blow-up of the elastic energy under strong compression.
Namely
\begin{align}
	\label{eq:phiCoer}
	c |u|^\alpha - C \leq \phi(u) \leq C (|u|^\alpha + 1),
\end{align}
for every $u\in \mathbb{R}^3$, and
\begin{align}
    \label{eq:WCoer}
	c(|F|^p + (\det F)^{-s} + |G|^p )
	\leq
	W(F,G) 
	\leq C(|F|^p + (\det F)^{-s} + |G|^p + 1),
\end{align}
for every $F\in \mathbb{R}^{3\times 3}$ and $G\in \mathbb{R}^{3\times 3\times 3}$, for some $p > 1$ and $s \geq 0$. Finally, we assume the nonlinear loading to fulfill the following coercivity assumption:
\begin{align}
	\label{eq:LCoer}
	-\ell(t,y,z)
	\geq
	c
	\left(
	    \left| \int_{\Gamma_D} y\,\d\mathcal{H}^2 \right|
	    -\|\nabla y\|^{\tilde{p}}_p
	    -\|\nabla^2 y\|^{\tilde{p}}_p
	    -\|(\det \nabla y)^{-1}\|^{\tilde{s}}_s
	    -\| (\nabla y)^{-\top}\! \nabla z\|^{\tilde{\alpha}}_\alpha
    \right)
    -C,
\end{align}
for every $(y,z)\in \mathcal{Q}$, for some $0 \leq \tilde{p} < p$,
$0 \leq \tilde{s} < s$, and $0 \leq \tilde{\alpha} < \alpha$.

In addition, for the loading $\ell$
we consider a slightly modified version of an assumption in \cite{Roubicek2015}.
We assume that the loading functional $\ell: \RRR \times \YYYY \times Z \to \RRR$
is such that for each $(\tilde{y},\tilde{z}) \in \QQQQ$ \ed both \EEE the map
\begin{align}
\label{eq:l1}
    (t,y,z) \mapsto \ell(t,\tilde{y},z) - \ell(t,y,z)
\end{align}
\ed and \EEE the map
\begin{align}
\label{eq:l2}
    (t,y,z,\tilde{z}) \mapsto \ell(t,y,\tilde{z}) - \ell(t,y,z)
\end{align}
have suitable (lower semi-)continuity properties;
see the hypotheses of Lemmas \ref{lemma:ImprovedZ}, \ref{lemma:ImprovedY},
\ref{lemma:SemiStab}, and \ref{lemma:Stab}.
Finally,
we assume that the loading enjoys the following structural properties
\begin{align}
    \label{eq:DomL}
	\dom \ell = [0,T] &\times \dom \ell(0,\cdot,\cdot), \\
        \label{eq:lw11}\forall (y,z) \in \QQQQ&:
            \quad \ell(\cdot,y,z) \in W^{1,1}(0,T), \\
    \label{eq:TDerEx}
        \forae t \in (0,T),\, \forall (y,z) \in \QQQQ&:
            \quad \partial_t \ell(\cdot,y,z) \text{ exists}.
\end{align}

Conditions \eqref{eq:gamma} - \eqref{eq:TDerEx} imply
the energetic control of~power,
described by the equations below
\begin{align} \label{eq:E2}
    \begin{aligned}
        &\dom \EEEE = [0,T] \times \dom \EEEE(0,\cdot,\cdot), \\
        &\forall (y,z) \in \QQQQ:
            \quad \EEEE(\cdot,y,z) \in W^{1,1}(0,T), \\
        &\forae t \in (0,T),\, \forall (y,z) \in \QQQQ:
            \quad \partial_t \EEEE(\cdot,y,z) \text{ exists}, \\
        &\exists C^0 \in \RRR\,\text{and } \lambda \in L^1(0,T)\text{ such that }
            \forae t \in (0,T)\text{ and } \forall (y,z) \in \QQQQ: \\
            & |\partial_t \EEEE(t,y,z)|
            \leq
            \lambda(t)(\EEEE(t,y,z) + C^0).
    \end{aligned}
\end{align}
The above properties, together with Gronwall's inequality, imply
\begin{align} \label{eq:E2G}
    \EEEE(t,y,z) + C^0
    \leq
    (\EEEE(s,y,z) + C^0) \e^{|\Lambda(t) - \Lambda(s)|},
    \quad
    \text{where }
    \Lambda(t) := \int_0^t \lambda(r) \, \d r,
\end{align}
which finally yields, when combined again with \eqref{eq:E2},
\begin{align} \label{eq:E2AC}
    |\partial_t \EEEE(t,y,z)|
    \leq
    \lambda(t) (\EEEE(s,y,z) + C^0) \e^{|\Lambda(t) - \Lambda(s)|},
\end{align}
for every $s,t\in [0,T]$.
 In particular, $t \mapsto \EEEE(t,y,z)$ is absolutely continuous for every $(y,z)\in \mathcal{Q}$.

%%% ===
\subsection{Assumptions on the Dissipation Potential}

We consider $\RRRR:L^1(\Omega) \to [0,+\infty)$
satisfying
\begin{align}
	\label{eq:RCoer}
	\inf_{v \neq 0} \frac{\RRRR(v)}{\|v\|_1} > 0,
\end{align}
and being lower-semicontinuous with respect to the \ed strong \EEE $L^1$-topology. 

Note that the convexity and positive 1-homogeneity together with $\RRRR(0) = 0$
imply that
\begin{align}
	\label{eq:R1}
		\forall z_1, z_2, z_3 \in Z:&
		\quad
		\RRRR(z_3 - z_1) \leq \RRRR(z_3 - z_2) + \RRRR(z_2 - z_1), \\
	\label{eq:R2}
		\forall v \in Z:&
		\quad
		\RRRR(v) = 0 \Longleftrightarrow v = 0,
\end{align}
i.e. $\RRRR$ defines a \ed norm. \EEE
\begin{remark}[\ed Unidirectional damage \EEE]
    Although in our assumptions the dissipation potential $\RRRR$ is finite everywhere,
    the setting of unidirectional damage can also be included in our analysis,
    see Remark \ref{rem:UniDam}.
\end{remark}

\subsection{Statement of the main result}

With the setting described in the previous subsections, our main result reads as follows.

\begin{theorem}[Existence Result]
    \label{thm:existence}
    Let $\Omega \subset \RRR^3$ be a~bounded Lipschitz domain. Let $\partial \Omega = \Gamma_D \cup \Gamma_N$ be a~measurable partition,
    with $\Gamma_D$ having a~positive two dimensional Hausdorff measure.
	Assume that the damage function $\gamma:\RRR \to (0,+\infty)$
	 is continuous and satisfies \eqref{eq:gamma}, and that
	the stored energy density
	$W: \RRR^{3\times3} \times \RRR^{3\times3\times3} \to [0,+\infty)$
	is continuous in its first argument, is convex in its second argument,
	and satisfies the coercivity condition \eqref{eq:WCoer} with $s\geq\frac{3p}{p-3}$ and $p > 3$.
	Assume also that $\phi : \RRR^{3 \times 3} \to \RRR$
	is strictly convex and fulfills the polynomial growth conditions in \eqref{eq:phiCoer}
	with $\alpha>3$. 
	Let the~dissipation potential $\RRRR: L^1(\Omega) \to [0,+\infty]$ be convex, 
	positively 1-homogeneous, lower semi-continuous
	with respect to the \ed strong \EEE $L^1(\Omega)$ topology,
	and satisfy \eqref{eq:RCoer}, \eqref{eq:R1}, and \eqref{eq:R2}.
    Assume that the coercivity condition \eqref{eq:LCoer} is satisfied,
    \ed that \eqref{eq:l1}--\eqref{eq:lw11} hold \EEE true, and that both
    the reduced power $-\partial_t \ell$ and the loading $\ell$
    are strongly continuous on \ed uniformly bounded \EEE sequences
    in~$\RRR \times \ed \YYYY \EEE \times W^{1,\alpha}(\Omega)$ \ed (see Subsection \ref{subs:state}). \EEE
    Then, problem \eqref{eq:ClassicalSystem}
    has a~separately global solution in~the sense of Definition \ref{def:sep-global},
    where in addition the deformation $y(t)$ is injective \emph{everywhere} in  $\Omega$ for all times $t \in [0,T]$.
\end{theorem}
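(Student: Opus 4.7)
The plan is to follow the classical Rothe-type time-discretization strategy outlined in the introduction, but with several ad hoc modifications tailored to the geometric nonlinearity and the Eulerian coupling $(\nabla y)^{-\top}\nabla z$. First I fix a uniform partition $0=t_0<t_1<\dots<t_N=T$ of step size $\tau=T/N$ and recursively define time-discrete approximations $(y_k^\tau,z_k^\tau)$ by alternating minimization: given $(y_{k-1}^\tau,z_{k-1}^\tau)$, I first choose $z_k^\tau$ as a minimizer of $z\mapsto \EEEE(t_k,y_{k-1}^\tau,z)+\RRRR(z-z_{k-1}^\tau)$ over $Z$, and then $y_k^\tau$ as a global minimizer of $y\mapsto \EEEE(t_k,y,z_k^\tau)$ over $\YYYY$. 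Existence at each incremental step follows from the direct method: the coercivity in \eqref{eq:WCoer}, \eqref{eq:phiCoer}, \eqref{eq:LCoer} together with $p>3$, weak sequential closedness of $\YYYY$ under \eqref{eq:C-N}, convexity of $\EEEE(t,y,\cdot)$, convexity of $W(F,\cdot)$, strict convexity of $\phi$, and lower semicontinuity of $\RRRR$ with respect to $L^1$, guarantee that both minimization problems admit solutions. I also anticipate adding a small regularizing term of the form $\tau\|\nabla z\|_\alpha^\alpha$-type (as alluded to near \eqref{eq:RegS}) at the discrete level, which vanishes in the limit but temporarily upgrades the integrability of $\nabla z_k^\tau$ independently of $\nabla y_k^\tau$.

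Next I build the right- and left-continuous piecewise-constant interpolants $\overline{y}_\tau,\overline{z}_\tau$ and $\underline{y}_\tau,\underline{z}_\tau$, and derive the discrete two-sided inequalities: a global stability for $\overline{y}_\tau(t)$ against $z=\overline{z}_\tau(t)$, a semi-stability for $\underline{z}_\tau(t)$ against $y=\underline{y}_\tau(t)$, and a discrete energy inequality obtained by summing the incremental optimalities. From \eqref{eq:LCoer}--\eqref{eq:WCoer}, \eqref{eq:phiCoer} and the Gronwall bound \eqref{eq:E2G} I deduce uniform bounds for $\overline{y}_\tau$ in $L^\infty(I;W^{2,p})$, for $(\det\nabla\overline{y}_\tau)^{-1}$ in $L^\infty(I;L^s)$, for $(\nabla \overline{y}_\tau)^{-\top}\nabla\overline{z}_\tau$ in $L^\infty(I;L^\alpha)$, and for $\overline{z}_\tau$ in $L^\infty(I;W^{1,\alpha})\cap \BV(I;L^1)$. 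Corollary \ref{corollary:HK} then gives a uniform lower bound on $\det\nabla\overline{y}_\tau$, which in turn upgrades the integrability of $\nabla\overline{z}_\tau$ via the identity $\nabla z=(\nabla y)^\top(\nabla y)^{-\top}\nabla z$.

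With these bounds I select, via Helly's selection principle for the $\RRRR$-variation, a pointwise $L^1$-convergent subsequence $z_\tau(t)\to z(t)$; combined with Corollary \ref{corollary:WConti}, the uniform bound on $\det\nabla y_\tau$, and a Visintin-type argument (Lemma \ref{lemma:Visintin}) driven by the strict convexity of $\phi$ and the convexity of $W(F,\cdot)$, I can upgrade the weak $W^{2,p}$ convergence of selected subsequences of $y_\tau$ to strong $W^{2,p}$ convergence at almost every fixed time. The most delicate point, addressed in Lemmas \ref{lemma:ImprovedZ} and \ref{lemma:ImprovedY}, is that the right and left interpolants of $z_\tau$ need not converge to the same limit everywhere and the natural $\BV$-bound only gives closeness in $L^1$; the regularizing term at the discrete level is precisely designed to furnish extra compactness allowing both interpolants to share the same limit and to pass to the limit in the \emph{coupled} gradient $(\nabla y)^{-\top}\nabla z$. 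Because the elastic variable need not be time-measurable, I follow the approach of \cite{FraMie06ERCR}: for each fixed $t$ I extract a $t$-dependent subsequence $y_{\tau(t)}(t)$ whose limit $y(t)$ is a global minimizer of $\EEEE(t,\cdot,z(t))$, which realizes the pointwise stability.

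The final step is to send $\tau\to 0$ in the discrete semi-stability and in the two-point energy inequality. For semi-stability I use the mutual recovery sequence $\tilde z_\tau:=\tilde z+z_\tau(t)-z(t)$ (or its projection onto $[0,1]$), exploiting the convexity of $\EEEE(t,y,\cdot)$, the continuity of $\RRRR$ with respect to strong $L^1$-convergence, and the continuity properties \eqref{eq:l2} of $\ell$ in $z$. For the energy inequality I combine lower semicontinuity of $\EEEE$ in both arguments, the Helly-type $\liminf$ inequality for $\diss_\RRRR$, and dominated convergence on the power term using \eqref{eq:E2AC} together with the continuity of $\partial_t\ell$ on bounded sets; the two-time-instant character of this inequality is handled by the time-dependent subsequence extraction already mentioned, which still yields a single limit pair $(y,z)$ satisfying the inequality for a.e. $t_1<t_2$. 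Injectivity everywhere at every time then follows from \eqref{eq:C-N} and Healey-Košecká \eqref{eq:H-K}, which is guaranteed by our uniform control of $|\nabla y|^3/\det\nabla y$ via Corollary \ref{corollary:HK}. I expect the main obstacle to be precisely the simultaneous passage to the limit in the coupled term $\phi((\nabla y)^{-\top}\nabla z)$ inside semi-stability and energy inequality, where one must reconcile the mismatch between left- and right-continuous interpolants while maintaining injectivity and the determinant lower bound; this is where the regularization and Lemma \ref{lemma:ImprovedZ} will play a decisive role.
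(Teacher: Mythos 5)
Your overall architecture coincides with the paper's (Rothe discretization with alternating minimization, a vanishing discrete regularization of $\nabla z$, Helly selection, a Visintin-type upgrade to strong convergence, and Francfort--Mielke $t$-dependent subsequences), but three of the key technical devices are mis-specified or missing. First, the regularization cannot be of ``$\tau\|\nabla z\|_\alpha^\alpha$-type'': the term must be $\mathcal{H}_\tau(z)=\tau^\kappa\int_\Omega|\nabla z|^\beta\,\d x$ with an exponent $\beta>\alpha+\kappa$ \emph{strictly larger} than $\alpha$, because its entire purpose is to give the bound $\int_\Omega|\nabla\underline z_\tau(t)|^\beta\,\d x\le C\tau^{-\kappa}$ in a space better than $L^\alpha$, which is then combined with H\"older with exponent $\beta/\alpha>1$ to rule out concentrations of $\nabla\underline z_\tau(t)$ on the small exceptional sets of the decomposition lemma when passing to the limit in the coupled term $\phi((\nabla y)^{-\top}\nabla z)$ for \emph{every} $t$ (Lemma \ref{lemma:ImprovedY}). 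With exponent $\alpha$ the regularization adds nothing to what \eqref{eq:phiCoer} already provides, and that step collapses.

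Second, once the regularization is present, the competitors you propose for the semi-stability are inadmissible: an arbitrary $\tilde z\in Z$, or your mutual recovery sequence $\tilde z+z_\tau(t)-z(t)$, lies only in $W^{1,\alpha}(\Omega)$, so $\mathcal{H}_\tau$ of it is in general infinite and the discrete semi-stability yields no information. The paper inserts instead Lipschitz truncations $(\tilde z)_{\lambda(\tau)}$ (and likewise $(z(t))_{\lambda(\tau)}$ in Lemma \ref{lemma:ImprovedZ} and $(z^0)_{\lambda(\tau)}$ for the initial datum), with $\lambda(\tau)\to\infty$ calibrated so that $\tau^\kappa\lambda(\tau)^\beta\to0$; this is both what makes $\mathcal{H}_\tau$ of the competitor vanish and how the strong $W^{1,\alpha}$ convergence of $\bar z_\tau(t)$ --- which renders a mutual recovery sequence unnecessary --- is obtained in the first place. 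Third, extracting at each $t$ a subsequence whose limit is merely ``a global minimizer of $\EEEE(t,\cdot,z(t))$'' does not close the energy inequality: the $t$-dependent subsequence must be chosen so that $\partial_t\EEEE(t,\bar y_{\tau(t)}(t),\bar z_{\tau(t)}(t))\to\limsup_{\tau}\partial_t\EEEE(t,\bar y_\tau(t),\bar z_\tau(t))$, since only then does the weak $L^1$ limit $\underline\theta$ of the powers on the right-hand side of the discrete inequality satisfy $\underline\theta(t)\le\partial_t\EEEE(t,y(t),z(t))$; moreover, at the second time instant $t_2$ one must invoke the already-established global stability $\EEEE(t_2,y(t_2),z(t_2))\le\EEEE(t_2,\xi(t_2),z(t_2))$ to replace the (generally different) weak cluster point $\xi(t_2)$ obtained along the $t_1$-subsequence. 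A minor further caveat: reversing the order of the alternating minimization as you do ($z$ before $y$) puts the right-continuous interpolant $\underline y_\tau$ into the discrete semi-stability, so you would additionally need improved convergence of the \emph{deformation} shifted interpolant, a complication the paper's ordering avoids.
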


%%% ================================================================
\section{Proof of the Existence Theorem}
\label{sec:proof}
This section is devoted to the proof of Theorem \ref{thm:existence}.
The \ed argument \EEE follows a classical strategy,
namely performing a Rothe time discretization, proving
existence of discrete solutions, showing
compactness of piecewise-constant time-interpolants, improving the convergence,
passing to limit in the (semi-)stability condition,
and proving the energy inequality.
Since we regularize the problem on the discrete level,
we have to modify the initial condition
and hence the last step consists \ed in \EEE proving the convergence of the modified data to the original value $z_0$. \ed For convenience of the reader, the different steps of the proof are carried out into corresponding subsections. To be precise, in Subsection \ref{subs:rothe}, we describe the main properties of time-discrete solutions. Their existence is proved (under slightly more general assumptions) in Subsection \ref{subs:ex-d}. In Subsection \ref{subs:subs} we establish some first compactness properties for piecewise-constant time interpolants. These are  then improved in Subsections \ref{subs:better-z}--\ref{subs:better-y}. Eventually, Subsection \ref{subs:limit} is devoted to the study of limiting time-continuous solutions and is concluded by the proof of Theorem \ref{thm:existence}. \EEE
\subsection{Discrete Space Regularization}
As we will see later in~the~Subsection~\ref{subs:better-y},
the integrability of the \ed time-discrete \EEE solution needs to~be~improved.
We hence introduce the following energy regularization on~the time-discrete level
\begin{align}
    \label{eq:RegS}
    \mathcal{H}_\tau(z)
    :=
    \tau^{\kappa} \int_\Omega |\nabla z|^\beta \, \d x,
\end{align}
where $\kappa>0$ and $\beta > \alpha + \kappa$\ed, where $\alpha>3$ is the parameter in \eqref{eq:phiCoer}. \EEE
The role of this additional regularization
is to improve weak convergence of gradients for the time interpolants
associated to the damage field $z$.

On the other hand, the space regularization
prevents us from using $z \in Z$ as a~competitor in the discrete semi-stability.
We hence use its Lipschitz truncation instead; see Lemma \ref{lemma:Truncation}.
In our application,
we do not rely on the smallness of the set where the functions are truncated,
but only on the quantification of the Lipschitz constant.

%%% ============
\subsection{Rothe Time Discretization}
\label{subs:rothe}
Let $z^0 \in Z$ be the initial condition and let $y^0\in \mathcal{Y}$ be a minimizer of the \ed functional \EEE 
$y \mapsto \mathcal{E}(0, y, z^0)$.
\ed Due to the regularization \eqref{eq:RegS} \EEE we need
to construct our discrete solutions for a slightly modified initial condition.
We thus replace $z_0$ by its Lipschitz truncation $(z^0)_{\lambda(\tau)}$ \ed (see Lemma \ref{lemma:Truncation}), \EEE
where the truncation parameter $\lambda(\tau)$ depends on the time step
$\tau$ in such a way that
\begin{align}
    \label{eq:TruncZ0}
    \mathcal{H}_\tau((z^0)_{\lambda(\tau)})
    =
    \tau^\kappa \int_\Omega |\nabla (z^0)_{\lambda(\tau)}|^\beta \, \d x
    \leq C(\alpha,\Omega) \LLLL^3(\Omega) \tau^\kappa \lambda^\beta(\tau) \to 0,
\end{align}
where $C(\alpha,\Omega)$ is the constant from the bound
on $\|(z^0)_{\lambda(\tau)}\|_{1,\infty}$.
In other words, $\lambda(\tau)$ blows-up sufficiently slow. 
After performing a time discretization of $[0,T]$,
where $k \in \{1,\dots,N\}$, $N \in \NNN$ and $\tau = T/N$,
the regularized \ed time-discrete \EEE problem reads as follows.
\begin{subequations}
\label{eq:DiscreteSystem}
\begin{align}
    \label{eq:DiscreteSystem-IC-z}
    z^0_\tau &= (z^0)_{\lambda(\tau)}, \\
    \label{eq:DiscreteSystem-IC-y}
    y_\tau^0 &\text{ minimizes }
      y \mapsto \mathcal{E}(0, y, z_\tau^0) + \mathcal{H}_\tau(z_\tau^0) \\
    \label{eq:DiscreteSystem-y}
    y_\tau^k &\text{ minimizes }
      y \mapsto \mathcal{E}(k\tau, y, z_\tau^{k-1}) + \mathcal{H}_\tau(z_\tau^{k-1}) \\
    \label{eq:DiscreteSystem-z}
    z_\tau^k &\text{ minimizes }
    z \mapsto \mathcal{E}(k\tau, y_\tau^k, z) + \mathcal{H}_\tau(z)
        + \RRRR(z - z_\tau^{k-1}).
\end{align}
\end{subequations}
We postpone the study of the existence of solutions to \eqref{eq:DiscreteSystem}
to the next subsection and proceed by showing the properties of \ed time-discrete \EEE solutions. Recall \eqref{eq:E2G}.
The next lemma show that, provided these \ed time-discrete \EEE solutions exist, then they
satisfy standard uniform energy estimates.

\begin{lemma}[Properties of discrete solutions] \label{lemma:Discrete}
    Let $\dom \EEEE = [0,T] \times \YYYY \times Z$, assume that
    \eqref{eq:E2} holds, and let $z^0 \in Z$.
    Then any solution of the problem \eqref{eq:DiscreteSystem} satisfies
    for all $k \in \{1,\dots,N\}$ the discrete \ed stability, \EEE
    (semi)-stability, and energy inequality
    \begin{align}
    	\label{eq:DiscreteStab}
        & \forall \tilde{y}:
          && \mathcal{E}(k\tau,y_\tau^k,z_\tau^{k-1}) + \mathcal{H}_\tau(z_\tau^{k-1})
          \leq \mathcal{E}(k\tau,\tilde{y},z_\tau^{k-1}) + \mathcal{H}_\tau(z_\tau^{k-1}), \\
        \label{eq:DiscreteSemiStab}
        & \forall \tilde{z}:
          && \mathcal{E}(k\tau,y_\tau^k,z_\tau^k) + \mathcal{H}_\tau(z_\tau^k)
          \leq \mathcal{E}(k\tau,y_\tau^k,\tilde{z}) + \mathcal{H}_\tau(\tilde{z})
          +
          \RRRR(\tilde{z} - z_\tau^k), \\
        \nonumber
        &
        &&
        \mathcal{E}(k\tau,y_\tau^k,z_\tau^k)
        +
        \mathcal{H}_\tau(z_\tau^k)
        +
        \RRRR(z_\tau^k - z_\tau^{k-1}) \\
        \label{eq:DiscreteEIneq}
        &&& \quad
        \leq
        \mathcal{E}((k-1)\tau,y_\tau^{k-1},z_\tau^{k-1})
        +
        \mathcal{H}_\tau(z_\tau^{k-1})
        +
        \int_{(k-1)\tau}^{k\tau}
          \partial_t \EEEE(t,y_\tau^{k-1},z_\tau^{k-1}) \,
        \d t.
    \end{align}
    In addition, we have the uniform estimate
    \begin{align}
        &\nonumber \EEEE(k\tau,y_\tau^k,z_\tau^k) + \mathcal{H}_\tau(z_\tau^k) + C^0
        +
        \sum_{j=1}^k \RRRR(z_\tau^j - z_\tau^{j-1})\\
        &\label{eq:UniformEstDiscr}\quad\leq
        (\EEEE(0,y^0_\tau,z^0_\tau) + \mathcal{H}_\tau(z^0_\tau) + C^0)
        \e^{\Lambda(k\tau)}.
    \end{align}
\end{lemma}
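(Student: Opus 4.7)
My plan is to derive each of the four assertions directly from the variational definitions \eqref{eq:DiscreteSystem-y}--\eqref{eq:DiscreteSystem-z} combined with the properties of $\RRRR$ collected in \eqref{eq:R1}--\eqref{eq:R2} and the energetic bounds \eqref{eq:E2}--\eqref{eq:E2G}; the only step involving real work is the discrete Gronwall estimate at the end.

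The stability \eqref{eq:DiscreteStab} is immediate from \eqref{eq:DiscreteSystem-y}, with the term $\mathcal{H}_\tau(z_\tau^{k-1})$ appearing on both sides. For the semi-stability \eqref{eq:DiscreteSemiStab}, I test \eqref{eq:DiscreteSystem-z} against an arbitrary $\tilde z$ and invoke the triangle inequality
\begin{equation*}
\RRRR(\tilde z - z_\tau^{k-1}) \le \RRRR(\tilde z - z_\tau^k) + \RRRR(z_\tau^k - z_\tau^{k-1}),
\end{equation*}
guaranteed by \eqref{eq:R1}; subtracting the common term $\RRRR(z_\tau^k - z_\tau^{k-1})$ gives the claim. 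For the one-step energy inequality \eqref{eq:DiscreteEIneq}, I test \eqref{eq:DiscreteSystem-z} against the special competitor $\tilde z = z_\tau^{k-1}$, obtaining
\begin{equation*}
\mathcal{E}(k\tau,y_\tau^k,z_\tau^k) + \mathcal{H}_\tau(z_\tau^k) + \RRRR(z_\tau^k - z_\tau^{k-1}) \le \mathcal{E}(k\tau,y_\tau^k,z_\tau^{k-1}) + \mathcal{H}_\tau(z_\tau^{k-1}).
\end{equation*}
Then I apply \eqref{eq:DiscreteSystem-y} with test $\tilde y = y_\tau^{k-1}$ to replace $y_\tau^k$ by $y_\tau^{k-1}$ in the $y$-argument of the right-hand side. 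Finally, the absolute continuity $\EEEE(\cdot,y_\tau^{k-1},z_\tau^{k-1}) \in W^{1,1}(0,T)$ from \eqref{eq:E2} lets me rewrite $\EEEE(k\tau,y_\tau^{k-1},z_\tau^{k-1}) = \EEEE((k-1)\tau,y_\tau^{k-1},z_\tau^{k-1}) + \int_{(k-1)\tau}^{k\tau} \partial_t \EEEE(t,y_\tau^{k-1},z_\tau^{k-1})\,\d t$, yielding \eqref{eq:DiscreteEIneq}.

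For the uniform estimate \eqref{eq:UniformEstDiscr}, I sum \eqref{eq:DiscreteEIneq} over $j = 1, \ldots, k$: the energies telescope and the dissipations accumulate on the left-hand side, whereas on the right the sum of the power integrals remains. To close by Gronwall, I bound each integrand via \eqref{eq:E2} by $\lambda(t)(\EEEE(t,y_\tau^{j-1},z_\tau^{j-1}) + C^0)$ and then invoke \eqref{eq:E2G} to transport the energy bound from the grid point $(j-1)\tau$ to the running time $t \in [(j-1)\tau,j\tau]$. Adding $C^0$ to both sides and estimating $\mathcal{H}_\tau(z_\tau^j) \ge 0$ on the right where needed, a standard discrete Gronwall argument then produces the exponential factor $\e^{\Lambda(k\tau)}$ and concludes the proof. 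The main technical point here is the transfer step via \eqref{eq:E2G}, which is what makes the passage from the pointwise bound on $\partial_t \EEEE$ to an a priori estimate on the sum of power integrals possible before knowing the energy is already controlled.
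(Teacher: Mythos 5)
Your proof is correct and follows essentially the same route as the paper's: stability and semi-stability come from the minimality in \eqref{eq:DiscreteSystem-y} and \eqref{eq:DiscreteSystem-z} together with the triangle inequality \eqref{eq:R1}, and the one-step energy inequality is obtained by testing \eqref{eq:DiscreteSystem-z} with $z_\tau^{k-1}$, then \eqref{eq:DiscreteSystem-y} with $y_\tau^{k-1}$, and using the absolute continuity of $t\mapsto\EEEE(t,y,z)$ from \eqref{eq:E2}. The only (cosmetic) difference lies in the final estimate, where the paper applies \eqref{eq:E2G} to each one-step inequality (noting that it also holds for $\EEEE+\mathcal{H}_\tau$ since $\mathcal{H}_\tau$ is time-independent) and then telescopes the multiplicative bounds by induction, whereas you sum the additive inequalities first and then invoke a discrete Gronwall argument; both routes yield \eqref{eq:UniformEstDiscr}.
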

\begin{proof}
    The discrete \ed stability and \EEE (semi)-stability \ed(\eqref{eq:DiscreteStab} and \eqref{eq:DiscreteSemiStab}) are direct consequences \EEE 
    of \eqref{eq:DiscreteSystem-y} and \eqref{eq:DiscreteSystem-z},
    respectively.
    Further, testing \eqref{eq:DiscreteSystem-z} by $z_\tau^{k-1}$, then \eqref{eq:DiscreteSystem-y} by $y_\tau^{k-1}$,
    and finally integrating in time (thanks to the absolute continuity of $\EEEE(\cdot,y,z)$ implied by \eqref{eq:E2})
    yield
    \begin{align*}
        &\EEEE(k\tau,y_\tau^k,z_\tau^k) + \mathcal{H}_\tau(z_\tau^k)
        + \RRRR(z_\tau^k - z_\tau^{k-1}) \\
        &\quad
        \leq
        \EEEE(k\tau,y_\tau^k,z_\tau^{k-1}) + \mathcal{H}_\tau(z_\tau^{k-1})
        \leq
        \EEEE(k\tau,y_\tau^{k-1},z_\tau^{k-1}) + \mathcal{H}_\tau(z_\tau^{k-1}) \\
        &\qquad=
        \EEEE((k-1)\tau,y_\tau^{k-1},z_\tau^{k-1}) + \mathcal{H}_\tau(z_\tau^{k-1})
        +
        \int_{(k-1)\tau}^{k\tau} \partial_t \EEEE(t,y_\tau^{k-1},z_\tau^{k-1}) \, \d t.
    \end{align*}
    The former estimate combined with \eqref{eq:E2G}
    implies
    (note that $\mathcal{H}_\tau$ does not depend on $t$ and hence we may apply
    the inequality also for $(\EEEE + \mathcal{H}_\tau)$)
    \begin{align*}
        &\EEEE(k\tau,y_\tau^k,z_\tau^k) + \mathcal{H}_\tau(z_\tau^k) + C^0
        + \RRRR(z_\tau^k - z_\tau^{k-1})\\
        &\qquad
        \leq
        \EEEE(k\tau,y_\tau^k,z_\tau^{k-1}) + \mathcal{H}_\tau(z_\tau^{k-1})+C^0 \\
        &\qquad
        \leq
        (\EEEE((k-1)\tau,y_\tau^{k-1},z_\tau^{k-1}) + \mathcal{H}_\tau(z_\tau^{k-1}) + C^0)
        \e^{\Lambda(k\tau) - \Lambda((k-1)\tau)}.
    \end{align*}
    Since $\RRRR(z_\tau^k - z_\tau^{k-1}) \geq 0$
    and for $k \in \{0,\dots,N\}$ we have
    $\EEEE(k\tau,y_\tau^k,z_\tau^k) + \mathcal{H}_\tau(z_\tau^k) < +\infty$,
    the discrete energy inequality follows arguing by induction;
    see e.g. \cite{FraMie06ERCR} or \cite[Thm 2.1.5]{MieRou15RIST}
    for details.
\end{proof}

The discrete stability for arbitrary time $t \in [0,T]$ will be proven in terms of piecewise-constant time interpolants. We collect the main definitions below.
\noindent
For $(k-1)\tau \leq t < k\tau$ we define
the right-continuous piecewise-constant interpolants
\begin{align}
    \label{eq:RInterp}
    \underline{z}_\tau(t) &:= z_\tau^{k-1},&\underline{y}_\tau(t) &:= y_\tau^{k-1},
\end{align}
while the left-continuous piecewise-constant interpolants are defined
for $(k-1)\tau < t \leq k\tau$ as
\begin{align}
    \label{eq:LInterp}
    \bar{z}_\tau(t) &:= z_\tau^k,
    &
    \bar{y}_\tau(t) &:= y_\tau^k
    &
    \bar{\mathcal{E}}_\tau(t,y,z) &:= \mathcal{E}(k\tau,y,z).
\end{align}
Note that the interpolants are related by
\begin{align} \label{eq:UnderlineBarRelNTK}
    \underline{y}_\tau(t+\tau)
    =
    \bar{y}_\tau(t),
    \quad
    \text{for every } t \in I,\,t\neq k\tau\text{ for }k\in \mathbb{N} ,
\end{align}
they coincide at the nodes, i.e.
\begin{align*}
    \underline{y}_\tau(t) = \bar{y}_\tau(t)
    \quad
    \text{for every } t = k \tau \in I, k \in \NNN,
\end{align*}
and they differ there only by their left/right continuity.
The same holds for $\underline{z}_\tau$ and $\bar{z}_\tau$.
For every $t\in [0,T]$, for notational convenience, we introduce the quantity
\begin{align*}
    \underline{\theta}_\tau(t)
    :=
    \partial_t \EEEE(t,\underline{y}_\tau(t),\underline{z}_\tau(t)),
\end{align*}
which will arise in~the energy inequality,
and its analogue
\begin{align*}
    \bar{\theta}_\tau(t)
    :=
    \partial_t \EEEE(t,\bar{y}_\tau(t),\bar{z}_\tau(t)),
\end{align*}
depending on~the~left-continuous interpolants.
Eventually, we define the pointwise limit
\begin{align*}
	\bar{\theta}^{\text{sup}}(t)
	:=
	\limsup_{\tau \to 0} \bar{\theta}_\tau(t).
\end{align*}
In the next lemma we show that these interpolants satisfy the standard discrete \ed stability, \EEE semi-stability, energy inequality
and uniform estimates.
\begin{lemma}[Properties of piecewise-constant interpolants]
	\label{lemma:InterpSol}
    Let the hypotheses of Lemma \ref{lemma:Discrete} and \eqref{eq:R1} hold.
    Then the interpolants satisfy
    for all $t \in [0,T]$ and all $0 \leq t_1 < t_2 \leq T$
    of~the~form $t_1 = k_1 \tau$ and $t_2 = k_2 \tau$ with $k_1, k_2 \in \NNN$
    \begin{align}
        \label{eq:InterpIC}
        &&& \bar{z}_\tau(0) = \underline{z}_\tau(0) = (z^0)_{\lambda(\tau)}, \\
        \label{eq:InterpStab}
        &\forall \tilde{y}\in \mathcal{Y}:
          && \bar{\mathcal{E}}_\tau(t,\bar{y}_\tau(t),\underline{z}_\tau(t))
          \leq \bar{\mathcal{E}}_\tau(t,\tilde{y},\underline{z}_\tau(t)), \\
        \label{eq:InterpSemiStab}
        & \forall \tilde{z}\in Z:
          && \bar{\mathcal{E}}_\tau(t,\bar{y}_\tau(t),\bar{z}_\tau(t))
            + \mathcal{H}_\tau(\bar{z}_\tau(t))
          \leq \bar{\mathcal{E}}_\tau(t,\bar{y}_\tau(t),\tilde{z})
            + \mathcal{H}_\tau(\tilde{z})
          +
          \RRRR(\tilde{z} - \bar{z}_\tau(t)), \\
        \label{eq:InterpEIneq}
        &&&\bar{\mathcal{E}}_\tau(t_2,\bar{y}_\tau(t_2),\bar{z}_\tau(t_2))
            + \mathcal{H}_\tau(\bar{z}_\tau(t_2))
        +
        \diss_{\RRRR}(\bar{z}_\tau; [t_1,t_2]) \\
        \nonumber
        &&& \quad
        \leq
        \bar{\mathcal{E}}_\tau(t_1,\bar{y}_\tau(t_1),\bar{z}_\tau(t_1))
        +
        \mathcal{H}_\tau(\bar{z}_\tau(t_1))
        +
        \int_{t_1}^{t_2}
          \partial_t \EEEE(t,\underline{y}_\tau(t),\underline{z}_\tau(t)) \,
        \d t.
    \end{align}
    Moreover, for every $t \in [0,T]$
    \emph{both interpolants},
    denoted here by $(\hat{y}_\tau,\hat{z}_\tau)$,
    satisfy the uniform estimates
    \begin{align}
    	&\EEEE(t,\hat{y}_\tau(t),\hat{z}_\tau(t))
    	+
    	\mathcal{H}_\tau(\hat{z}_\tau(t))
    	+
    	C^0
    	+
    	\diss_{\RRRR}(\hat{z}_\tau; [0,t]) \nonumber \\
    	\nonumber
    	&\quad
    	\leq
    	\e^{\Lambda(t)}(\EEEE(0,y^0,z_\tau^0) + \mathcal{H}_\tau(z_\tau^0) + C^0).
    	\label{eq:UniformEst}
    \end{align}
    Finally, if for every $y \in \YYYY$ the energy $\EEEE(0,y,\cdot)$
    is strongly continuous on $Z$, then
    \begin{align}
    	\e^{\Lambda(t)}(\EEEE(0,y^0,z_\tau^0) + \mathcal{H}_\tau(z_\tau^0) + C^0)
    	\to
    	\e^{\Lambda(t)}(\EEEE(0,y^0,z^0) + C^0)
    	< +\infty
    \end{align}
    as $\tau\to 0$, for every $t\in [0,T]$.
\end{lemma}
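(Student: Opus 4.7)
The proof is essentially a bookkeeping exercise that translates the nodal statements of Lemma~\ref{lemma:Discrete} into statements about the piecewise-constant interpolants defined in \eqref{eq:RInterp}--\eqref{eq:LInterp}. First, I would record that the initial condition \eqref{eq:InterpIC} is immediate from \eqref{eq:DiscreteSystem-IC-z} together with the fact that both interpolants are constructed to coincide at the nodes $k\tau$.

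Next, for \eqref{eq:InterpStab} and \eqref{eq:InterpSemiStab}, I would fix $t \in [0,T]$ and locate the unique $k \in \{1,\dots,N\}$ with $(k-1)\tau < t \leq k\tau$. By \eqref{eq:LInterp}, $\bar{y}_\tau(t) = y_\tau^k$, $\bar{z}_\tau(t) = z_\tau^k$, and $\bar{\mathcal{E}}_\tau(t,\cdot,\cdot) = \mathcal{E}(k\tau,\cdot,\cdot)$; while $\underline{z}_\tau(t) = z_\tau^{k-1}$ on this same interval by \eqref{eq:RInterp}. Substituting these identities into \eqref{eq:DiscreteStab} and \eqref{eq:DiscreteSemiStab} yields \eqref{eq:InterpStab} and \eqref{eq:InterpSemiStab} verbatim; the endpoints $t=0$ and $t=k\tau$ where the two interpolants coincide with different indices need to be handled separately, but by the convention on left/right continuity the correct nodal values are selected.

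For the energy inequality \eqref{eq:InterpEIneq} with $t_i = k_i \tau$, I would sum \eqref{eq:DiscreteEIneq} over $k \in \{k_1+1,\dots,k_2\}$, obtaining a telescoping sum on the energy side and the pointwise identity
\[
\sum_{k=k_1+1}^{k_2} \RRRR(z_\tau^k - z_\tau^{k-1}) = \diss_\RRRR(\bar{z}_\tau;[t_1,t_2]),
\]
which holds since $\bar{z}_\tau$ is a step function with jumps only at the nodes, so that the sup in \eqref{eq:diss} is attained on any partition refining $\{k_1\tau,\dots,k_2\tau\}$ by the triangle inequality \eqref{eq:R1}. The time integral on the right-hand side is assembled from the nodal integrals by noting that $\underline{y}_\tau(t) = y_\tau^{k-1}$ and $\underline{z}_\tau(t) = z_\tau^{k-1}$ on $[(k-1)\tau,k\tau)$, which matches the integrand in \eqref{eq:DiscreteEIneq}; the integrability of $\partial_t \EEEE(t,\underline{y}_\tau,\underline{z}_\tau)$ on each subinterval is guaranteed by \eqref{eq:E2AC}. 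The uniform estimate for both interpolants then follows: for $t \in ((k-1)\tau,k\tau]$ one applies \eqref{eq:UniformEstDiscr} at $k$ (for $\bar{y}_\tau, \bar{z}_\tau$) or at $k-1$ combined with \eqref{eq:E2G} on the slab of length $\tau$ (for $\underline{y}_\tau, \underline{z}_\tau$); monotonicity of $\e^{\Lambda(\cdot)}$ in $t$ absorbs the discrepancy into the same upper bound.

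Finally, for the convergence of the right-hand side as $\tau \to 0$, I would invoke \eqref{eq:TruncZ0} to guarantee $\mathcal{H}_\tau(z^0_\tau) \to 0$ by the choice of the truncation parameter $\lambda(\tau)$, and then use Lemma~\ref{lemma:Truncation} to ensure $z^0_\tau = (z^0)_{\lambda(\tau)} \to z^0$ strongly in $Z$; the assumed strong continuity of $\EEEE(0,y^0,\cdot)$ on $Z$ converts this into $\EEEE(0,y^0,z^0_\tau) \to \EEEE(0,y^0,z^0)$, and the latter is finite because $(y^0,z^0) \in \dom \EEEE$. The main technical subtlety, rather than a genuine obstacle, is thus the bookkeeping required to align the left- and right-continuous interpolants appearing in the stability and the energy inequality respectively; all quantitative content is already encapsulated in Lemma~\ref{lemma:Discrete} and the structural hypothesis \eqref{eq:E2}.
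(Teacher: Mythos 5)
Your route is the same as the paper's: \eqref{eq:InterpIC}--\eqref{eq:InterpSemiStab} by substituting the interpolant definitions into Lemma \ref{lemma:Discrete} (with \eqref{eq:R1} behind the discrete semi-stability), \eqref{eq:InterpEIneq} by telescoping \eqref{eq:DiscreteEIneq} together with the identification $\diss_\RRRR(\bar z_\tau;[t_1,t_2])=\sum_{k=k_1+1}^{k_2}\RRRR(z^k_\tau-z^{k-1}_\tau)$ via the triangle inequality, and the final convergence from \eqref{eq:TruncZ0}, Lemma \ref{lemma:Truncation} and the assumed strong continuity of $\EEEE(0,y^0,\cdot)$. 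Two small points you leave implicit but should state: the right-hand side of \eqref{eq:UniformEst} carries $\EEEE(0,y^0,z^0_\tau)$ while \eqref{eq:UniformEstDiscr} carries $\EEEE(0,y^0_\tau,z^0_\tau)$, and the bridge is the minimality of $y^0_\tau$ in \eqref{eq:DiscreteSystem-IC-y}; and at nodal times $t=k\tau$ the pair appearing in \eqref{eq:InterpStab} is $(\bar y_\tau(t),\underline z_\tau(t))=(y^k_\tau,z^k_\tau)$, so the nodal case is not literally contained in \eqref{eq:DiscreteStab} --- the paper glosses over this as well, but ``the convention selects the correct values'' is not an argument.

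The genuine gap is your treatment of \eqref{eq:UniformEst} for the \emph{left}-continuous interpolant at non-nodal times. Estimate \eqref{eq:UniformEstDiscr} at level $k$ controls $\EEEE(k\tau,y^k_\tau,z^k_\tau)$, whereas \eqref{eq:UniformEst} concerns $\EEEE(t,y^k_\tau,z^k_\tau)$ with $t<k\tau$; to move the time argument from $k\tau$ down to $t$ you must invoke \eqref{eq:E2G}/\eqref{eq:E2AC} once more (this is exactly the extra step the paper's proof points to), and this multiplies the nodal bound by $\e^{\Lambda(k\tau)-\Lambda(t)}\geq 1$. Monotonicity of $\e^{\Lambda(\cdot)}$ therefore works \emph{against} you here: it lets you enlarge $\e^{\Lambda(t)}$ to $\e^{\Lambda(k\tau)}$, not the reverse, so it cannot ``absorb the discrepancy''. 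What your argument actually yields for $(\bar y_\tau,\bar z_\tau)$ is the estimate with the exponent evaluated at the node above $t$ (hence uniformly with $\e^{\Lambda(T)}$), which is all that is used later, but not the asserted factor $\e^{\Lambda(t)}$; only for the right-continuous interpolant does your scheme --- nodal estimate at $k-1$ plus \eqref{eq:E2G} on the slab $[(k-1)\tau,t]$ --- reproduce $\e^{\Lambda(t)}$ exactly, since there the Gronwall factor $\e^{\Lambda(t)-\Lambda((k-1)\tau)}$ combines with $\e^{\Lambda((k-1)\tau)}$ in the right direction. Either perform the last incremental step of the chain at time $t$ for the $\underline{\phantom{z}}$-interpolant and accept the nodal exponent for the $\bar{\phantom{z}}$-interpolant, or make explicit that the $t$- and $\tau$-uniform bound is what is actually needed downstream; as written, the sentence ``monotonicity absorbs the discrepancy'' is a step that fails.
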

\begin{proof}
    The stability follows directly by Lemma \ref{lemma:Discrete}
    and by the definition of the interpolants,
    the regularizing term $\mathcal{H}_\tau(\underline{z}_\tau(t))$ is finite
    and cancels.
    For the semi-stability we further use the triangular inequality for $\RRRR$.
    \noindent
    Further, for the uniform estimate for arbitrary $t \in [0,T]$
    we need to exploit \eqref{eq:E2AC} once more;
    see \cite{FraMie06ERCR}.
        Finally, the convergence of the energies associated to the initial conditions follows by the strong convergence of the truncated
        initial conditions, with respect to which the energy is continuous,
        and by the choice of the truncation parameter \eqref{eq:TruncZ0}.
\end{proof}

%%% ============
\subsection{Existence of Discrete Solutions}
\label{subs:ex-d}
We proceed by proving the existence of the discrete solutions
defined in Subsection \ref{subs:rothe}.
We first show that existence of solutions is guaranteed in the setting described
in Subsection \ref{sec:mat-set}.
Then we prove existence of discrete solutions under relaxed
assumptions that do not require coercivity of the energy with respect to the hessian of the deformations.

\begin{lemma}[Existence of Discrete Solutions - \ed Non-simple materials\EEE]
\label{lemma:ex-2-grad}
    \label{lemma:DiscreteScndGrad}
	Let $\Omega \subset \RRR^3$ be a~bounded Lipschitz domain. Further, let $\partial \Omega = \Gamma_D \cup \Gamma_N$ be a~measurable partition,
    with $\Gamma_D$ having a~positive two dimensional Hausdorff measure.
	Assume that the damage function $\gamma:\RRR \to (0,+\infty)$
	 is continuous, and that
	the stored energy density
	$W: \RRR^{3\times3} \times \RRR^{3\times3\times3} \to [0,+\infty)$
	is continuous in its first argument, is convex in its second argument, and satisfies the coercivity condition \eqref{eq:WCoer},
	with $p > 3$. Assume also that
	$\phi : \RRR^{3 \times 3} \to \RRR$
	 is convex,
	and that the~dissipation potential $\RRRR: L^1(\Omega) \to [0,+\infty]$
	is lower semi-continuous with respect to the \ed strong \EEE $L^1$ topology.
    Finally, let $k\tau \in [0,T]$, $y^k_\tau \in \YYYY$, and $z^{k-1}_\tau \in Z$, and assume that the coercivity condition \eqref{eq:LCoer} is satisfied. 
    Under the further assumption that
    the loading $y \mapsto -\ell(k\tau,y,z^{k-1}_\tau)$
    is lower semi-continuous with respect to the weak topology
                on~$W^{2,p}(\Omega)$,
    and that $z \mapsto -\ell(k\tau,y^k_\tau,z)$
    is lower semi-continuous with respect to the weak topology  on~$W^{1,\alpha}(\Omega)$,
    then, the functional
    \begin{align*}
        \GGGG (z)
        &:=
        \EEEE(k\tau,y^k_\tau,z) + \mathcal{H}_\tau(z) + \RRRR(z - z^{k-1}_\tau) \\
        &=
        \int_\Omega
            \left(\gamma(z) W(\nabla y^k_\tau, \nabla^2 y^k_\tau)
            +
            \phi \left( (\nabla y^k_\tau)^{-\top} \nabla z \right)
            +
            \tau^\kappa |\nabla z|^\beta \,\right)
		\d x
        - \ell(k\tau,y^k_\tau,z) \\
        &\quad
        +
        \RRRR(z - z^{k-1}_\tau)
    \end{align*}
    has a minimizer $z^k_\tau \in Z$,
    and the functional
    \begin{align*}
        &\FFFF (y)
        :=
        \EEEE(k\tau,y,z^{k-1}_\tau) + \mathcal{H}_\tau(z^{k-1}_\tau) \\
        &=
        \int_\Omega
            \left(\gamma(z^{k-1}_\tau) W(\nabla y, \nabla^2 y)
            +
            \phi \left( (\nabla y)^{-\top} \nabla z^{k-1}_\tau \right)
            +
            \tau^\kappa |\nabla z^{k-1}|^\beta \,
		\right)\d x
        - \ell(k\tau,y,z^{k-1}_\tau)
    \end{align*}
    has a minimizer $y^k_\tau \in \YYYY$
    which is injective almost everywhere in~$\Omega$.
    If moreover $p > 6$ and $s > 2p / (p - 6)$,
    then the minimizer is injective \emph{everywhere} in $\Omega$.
\end{lemma}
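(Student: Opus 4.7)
The plan is to apply the direct method of the calculus of variations to $\GGGG$ and to $\FFFF$ separately. For the internal variable, the regularizer $\tau^\kappa\int_\Omega|\nabla z|^\beta\,\d x$ gives, for fixed $\tau>0$, coercivity in $W^{1,\beta}(\Omega)$, and since $\beta>\alpha+\kappa>3$, the compact embedding $W^{1,\beta}\hookrightarrow C(\overline\Omega)$ produces, for any minimizing sequence, weak convergence in $W^{1,\beta}$ together with uniform convergence to some $z^*\in Z$, with the pointwise bounds $0\leq z^*\leq 1$ preserved in the limit. Lower semicontinuity of the four contributions to $\GGGG$ is then straightforward: the $\gamma$-term is continuous in $z$ under strong convergence, while $W(\nabla y^k_\tau,\nabla^2 y^k_\tau)\in L^1(\Omega)$ remains frozen; the convexity of $\phi$ (whose argument is linear in $\nabla z$ once $\nabla y^k_\tau$ is fixed) together with convexity of $|\nabla z|^\beta$ gives weak-$W^{1,\beta}$ lower semicontinuity of the gradient terms; $\RRRR$ is lower semicontinuous with respect to the strong $L^1$-topology; and $-\ell(k\tau,y^k_\tau,\cdot)$ is weakly $W^{1,\alpha}$-lower semicontinuous by assumption.

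For the deformation, the positivity $\gamma(z^{k-1}_\tau)\geq\gamma(0)>0$ combined with \eqref{eq:WCoer} controls $\|\nabla y\|_p$, $\|\nabla^2 y\|_p$, and $\|(\det\nabla y)^{-1}\|_s$. The subcritical exponents $\tilde p<p$, $\tilde s<s$, $\tilde\alpha<\alpha$ appearing in \eqref{eq:LCoer} allow me to absorb the loading contributions via Young's inequality, while the controlled boundary mean $|\int_{\Gamma_D}y\,\d\mathcal{H}^2|$ supplies, through a Poincar\'e-type inequality, the missing $L^p$-estimate for $y$ itself. A minimizing sequence $\{y_n\}\subset\YYYY$ is therefore uniformly bounded in $\YYYY$; up to a subsequence $y_n\wto y$ in $W^{2,p}$, and Morrey's embedding (valid because $p>3$) yields $y_n\to y$ in $C^1(\overline\Omega)$. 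Corollary \ref{corollary:HK} then produces a uniform lower bound $\det\nabla y_n\geq\varepsilon>0$, which is inherited by $y$. Weak lower semicontinuity of the elastic part follows from a classical Ioffe-Tonelli argument (convexity of $W$ in $\nabla^2 y$, continuity in $\nabla y$, uniform convergence of $\nabla y_n$); the $\phi$-term actually converges, thanks to the uniform convergence $(\nabla y_n)^{-\top}\to(\nabla y)^{-\top}$ and the freezing of $\nabla z^{k-1}_\tau$; and $-\ell$ is lower semicontinuous by hypothesis.

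The delicate step will be propagating the Ciarlet-Ne\v{c}as constraint \eqref{eq:C-N} to the limit. Uniform convergence $y_n\to y$ implies that $y_n(\Omega)$ is contained in arbitrarily small neighbourhoods of $y(\Omega)$ for large $n$, yielding $\limsup_{n\to\infty}\LLLL^3(y_n(\Omega))\leq\LLLL^3(y(\Omega))$, while uniform convergence of the determinants gives $\int_\Omega\det\nabla y_n\,\d x\to\int_\Omega\det\nabla y\,\d x$; chaining these with \eqref{eq:C-N} written for each $y_n$ produces the same inequality for $y$, so $y\in\YYYY$. Almost-everywhere injectivity of the minimizer $y^k_\tau$ is then an immediate consequence of the original Ciarlet-Ne\v{c}as theorem, applicable since $p>3$. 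Under the stronger assumptions $p>6$ and $s>2p/(p-6)$, Morrey's inequality gives $\nabla y\in L^\infty(\Omega)$, and H\"older's inequality combined with $(\det\nabla y)^{-1}\in L^s$ yields $|\nabla y|^3/\det\nabla y\in L^\delta(\Omega)$ for some $\delta>2$, so that \eqref{eq:H-K} is verified and \cite[Theorem~3.4]{HenKos14LMFD} upgrades the conclusion to injectivity everywhere in $\Omega$.
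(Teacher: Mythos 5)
Your proposal is correct and follows essentially the same route as the paper, whose proof of this lemma is only a two-line sketch (direct method plus Fatou's lemma for existence; injectivity via \eqref{eq:C-N} and \eqref{eq:H-K}): you supply the standard details of coercivity, extraction of weakly/uniformly convergent subsequences, lower semicontinuity of each term, and weak stability of the Ciarlet--Ne\v{c}as constraint. The only caveat is that your appeals to Corollary \ref{corollary:HK} (for the uniform lower bound on $\det\nabla y_n$, hence for $(\nabla y_n)^{-\top}\to(\nabla y)^{-\top}$ in $L^\infty$ and for $(\nabla y^k_\tau)^{-\top}\in L^\infty(\Omega)$ in the $z$-minimization) implicitly require $s\geq 3p/(p-3)$, which is not listed among the hypotheses of this particular lemma --- but the paper's own sketch relies on the same fact, the Fatou route would bypass it for the lower semicontinuity of the $W$-term, and the condition holds wherever the lemma is invoked in the proof of Theorem \ref{thm:existence}.
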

\begin{proof}
Note that for $p > 3$ we have $(\nabla y^k_\tau)^{-\top} \in L^\infty$. The existence of a global minimizer follows thus by Fatou's lemma and by the direct method,
    while the injectivity is assured by \eqref{eq:C-N} and \eqref{eq:H-K}.
    
\end{proof}

The next two lemmas show existence of discrete solutions in the case
in which the elastic energy density satisfies weaker growth assumptions
than those in Lemma \ref{lemma:ex-2-grad}.
Namely, we suppose $W(\nabla y, \nabla^2 y) = W_{\mathrm{d}}(\nabla y)$,
i.e. no coercivity with respect to higher-order derivatives of the deformations is required.
We consider the state space $\QQQQ_{\mathrm{d}}$, defined as 
\begin{align*}
    \QQQQ_{\mathrm{d}}
    :=
    \{
        (y,z) \in \YYYY_{\mathrm{d}} \times Z_{\mathrm{d}}
        :
        \cof \nabla y \nabla z \in L^\alpha(\Omega)
    \}
    \subset
    \YYYY_{\mathrm{d}} \times Z_{\mathrm{d}},
\end{align*}
with
\begin{align*}
    \YYYY_{\mathrm{d}}
    :=
    \{
    	y \in W^{1,p}(\Omega,\RRR^3)
        :
        \cof \nabla y \in L^q(\Omega),
        \det \nabla y \in L^r(\Omega),&\\
        (\det \nabla y)^{-1} \in L^s(\Omega),
        \det \nabla y > 0 \text{ a.e. in } \Omega,
        \eqref{eq:C-N} \text{ holds}&
    \},
\end{align*}
and
\begin{align*}
    Z_{\mathrm{d}}
	:=
	\{
    	z \in W^{1,1}(\Omega) : 0 \leq z \leq 1 \text{ a.e. in } \Omega
	\},
\end{align*}
\ed for $p,q,r > 1$, $s \geq 0$, and $\alpha>3$, \EEE and we replace the energy functional $\EEEE$ by
\begin{align}
\label{eq:E-pc}
    \mathcal{E}_{\mathrm{d}}(t,y,z)
    :=
	\int_\Omega
	    \left(\gamma_{\mathrm{d}}(z) W_{\mathrm{d}}(\nabla y)
	    +
	    \phi_{\mathrm{d}} \left( \cof \nabla y \nabla z \right)\right) \,
	\d x
	-
	\ell_{\mathrm{d}} (t,y,z).
\end{align}
In definition \eqref{eq:E-pc} we suppose that
\begin{align}
	&\label{eq:gammaPC}
	\gamma_{\mathrm{d}} \in C(\RRR) \text{ and positive},
\end{align}
and that the energy density $W_{\mathrm{d}}:\RRR^{3\times 3} \to \RRR$
is polyconvex and satisfies the following coercivity assumptions
\begin{align}
    &\label{eq:WPolyCoer}
	W_{\mathrm{d}}(F) \geq c(|F|^p + |\cof F|^q + (\det F)^r + (\det F)^{-s} - 1),
\end{align}
for every $F\in \mathbb{R}^{3\times 3}$.
Further, we assume that
\begin{align}
	\label{eq:phiPolyCoer}
	\phi_{\mathrm{d}}(u)\geq c (|u|^\alpha - 1),
\end{align}
for every $u\in \RRR^3$, and that the nonlinear loading satisfies
\begin{align}
	\label{eq:LPCCoer}
	&-\ell_{\mathrm{d}}(t,y,z)
	\geq
	c
	\left(
	    \left| \int_{\Gamma_D} y\,\d\mathcal{H}^2 \right|
	    -\|\nabla y\|^{\tilde{p}}_p
	    -\|\cof \nabla y\|^{\tilde{q}}_q
	    -\|\det \nabla y\|^{\tilde{r}}_r\right.\\
	    &\nonumber\quad\left.-\|(\det \nabla y)^{-1}\|^{\tilde{s}}_s
	    -\|\cof \nabla y \nabla z\|^{\tilde{\alpha}}_\alpha
	    -1
    \right),
\end{align}
for every $(y,z)\in \QQQQ_{\mathrm{d}}$, for $0 \leq \tilde{p} < p$, $0 \leq \tilde{q} < q$, $0 \leq \tilde{r} < r$,
$0 \leq \tilde{s} < s$, and $0 \leq \tilde{\alpha} < \alpha$.

\ed In this setting, we seek solutions to the discrete problem
\begin{subequations}
\label{eq:DiscreteSystem-d}
\begin{align}
    \label{eq:DiscreteSystem-IC-z-d}
    z^0_\tau &= z^0, \\
    \label{eq:DiscreteSystem-IC-y-d}
    y_\tau^0 &\text{ minimizes }
      y \mapsto \mathcal{E}_{\rm d}(0, y, z_\tau^0) \\
    \label{eq:DiscreteSystem-y-d}
    y_\tau^k &\text{ minimizes }
      y \mapsto \mathcal{E}_{\rm d}(k\tau, y, z_\tau^{k-1}) \\
    \label{eq:DiscreteSystem-z-d}
    z_\tau^k &\text{ minimizes }
    z \mapsto \mathcal{E}_{\rm d}(k\tau, y_\tau^k, z) 
        + \RRRR(z - z_\tau^{k-1}).
\end{align}
\end{subequations}

We state below the existence results. \EEE We observe that the assumptions
of the \ed next two \EEE lemmas are fulfilled for problem \eqref{eq:DiscreteSystem-d},
provided \ed that \EEE the~initial condition is energetically stable.
For $k=1$, in fact, we can choose $\tilde{y} := y^0$ and $\tilde{z} := (z_0)_{\lambda(\tau)}$
because the domain of~\ed$\EEEE_{\rm d}$ \EEE
is independent of time and $\RRRR(0) = 0$. The same conclusion holds for every larger $k$ arguing by induction.

\begin{lemma}[Existence of Discrete Solutions - Deformations]
	\label{lemma:DiscreteDef}
	Let $\Omega \subset \RRR^3$ be a~bounded Lipschitz domain, let $\partial \Omega = \Gamma_D \cup \Gamma_N$ be a~measurable partition,
    with $\Gamma_D$ having a~positive two dimensional Hausdorff measure. Let
	$\gamma_{\mathrm{d}}:\RRR \to (0,+\infty)$ specify the incomplete damage, and let
	the stored energy density $W_{\mathrm{d}} : \RRR^{3\times3} \to \RRR$ be polyconvex and satisfy the coercivity assumption \eqref{eq:WPolyCoer}
	with $p \geq 2$, $q > p/(p-1)$, $r > 1$, $s \geq 1$.
	Let also $\phi_{\mathrm{d}} : \RRR^{3 \times 3} \to \RRR$ be convex, and satisfy the coercivity condition \eqref{eq:phiPolyCoer},
	where $\alpha$ is such that $1/p + 1/s + 1/\alpha \leq (q-1)/q$.
    Finally, let $k\tau \in [0,T]$ and $z^{k-1}_\tau \in Z_{\mathrm{d}}$ be such that
    there exists $\tilde{y} \in \YYYY_{\mathrm{d}}$ satisfying
	\begin{align}
		\label{eq:yFinE}
		(\tilde{y},z^{k-1}_\tau) \in \QQQQ_{\mathrm{d}},
		\qquad
		\EEEE_{\textrm{d}}(k\tau,\tilde{y},z^{k-1}_\tau) < +\infty.
	\end{align}
    Let the loading $y \mapsto -\ell_{\mathrm{d}}(k\tau,y,z^{k-1}_\tau)$
     be lower semi-continuous with respect to the weak product topology on~$W^{1,p}(\Omega) \times L^q(\Omega) \times L^r(\Omega)
        \ni (y,\cof \nabla y, \det \nabla y)$, and
        satisfy the coercivity condition \eqref{eq:LPCCoer}.

    Then, the functional
    \begin{align}
    \label{eq:fd}
        &\FFFF_{\mathrm{d}} (y)
        :=
        \EEEE_{\textrm{d}}(k\tau,y,z^{k-1}_\tau)
        =
        \int_\Omega
            \left(\gamma_{\mathrm{d}}(z^{k-1}_\tau) W_{\mathrm{d}}(\nabla y)
            +
            \phi_{\mathrm{d}} ( \cof \nabla y \nabla z^{k-1}_\tau )\right) \,
		\d x\\
        &\nonumber\qquad- \ell_{\mathrm{d}}(k\tau,y,z^{k-1}_\tau)
    \end{align}
    has a minimizer $y^k_\tau \in \YYYY_{\mathrm{d}}$ such that
    \begin{align*}
    	(y^k_\tau,z^{k-1}_\tau) \in \QQQQ_{\mathrm{d}},
    	\qquad
    	\EEEE_{\textrm{d}}(k\tau,y^k_\tau,z^{k-1}_\tau) < +\infty.
    \end{align*}
    If $p>3$,
    then the minimizer is injective almost everywhere in~$\Omega$.
    If $p > 6$ and $s > 2p / (p - 6)$,
    then the minimizer is injective \emph{everywhere} in $\Omega$.
\end{lemma}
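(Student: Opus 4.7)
The proof proceeds by the classical direct method of the calculus of variations, in the spirit of Ball's existence theory for polyconvex elastic energies, adapted to handle both the coupling term $\phi_{\mathrm{d}}(\cof\nabla y\,\nabla z_\tau^{k-1})$ and the Ciarlet--Ne\v{c}as constraint. By hypothesis \eqref{eq:yFinE}, $\inf\FFFF_{\mathrm{d}}<+\infty$. Let $\{y_n\}\subset\YYYY_{\mathrm{d}}$ be a minimizing sequence. Since $\gamma_{\mathrm{d}}$ is positive and $z_\tau^{k-1}\in[0,1]$, combining \eqref{eq:WPolyCoer}, \eqref{eq:phiPolyCoer}, and \eqref{eq:LPCCoer} yields uniform bounds on $\|\nabla y_n\|_{L^p}$, $\|\cof\nabla y_n\|_{L^q}$, $\|\det\nabla y_n\|_{L^r}$, $\|(\det\nabla y_n)^{-1}\|_{L^s}$, $\|\cof\nabla y_n\,\nabla z_\tau^{k-1}\|_{L^\alpha}$, and on the boundary integral $|\int_{\Gamma_D}y_n\,\d\mathcal{H}^2|$. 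A trace-based Poincar\'e inequality (valid since $\mathcal{H}^2(\Gamma_D)>0$) then yields a uniform bound in $W^{1,p}(\Omega;\RRR^3)$. Extracting a subsequence, one finds $y_n\rightharpoonup y$ in $W^{1,p}$, $\cof\nabla y_n\rightharpoonup H$ in $L^q$, $\det\nabla y_n\rightharpoonup D$ in $L^r$, $(\det\nabla y_n)^{-1}\rightharpoonup D^{-}$ in $L^s$, and $\cof\nabla y_n\,\nabla z_\tau^{k-1}\rightharpoonup\xi$ in $L^\alpha$.

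By the classical weak continuity of minors of Ball, together with $q>p/(p-1)$ and $r>1$, one identifies $H=\cof\nabla y$ and $D=\det\nabla y$ a.e. The convexity of $t\mapsto t^{-s}$ on $(0,+\infty)$, together with the uniform $L^s$-bound and lower semicontinuity of convex integrands, forces $\det\nabla y>0$ a.e.\ and hence $D^{-}=(\det\nabla y)^{-1}$. Distributional convergence $\cof\nabla y_n\to\cof\nabla y$ (via Ball's Jacobian identities), multiplied by the fixed $\nabla z_\tau^{k-1}$ and combined with the $L^\alpha$-bound and uniqueness of weak limits, gives $\xi=\cof\nabla y\,\nabla z_\tau^{k-1}$. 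Writing the polyconvex decomposition $W_{\mathrm{d}}(F)=\tilde g(F,\cof F,\det F)$ with $\tilde g$ convex, the full integrand $\gamma_{\mathrm{d}}(z_\tau^{k-1}(x))\,\tilde g(F,M,D)+\phi_{\mathrm{d}}(M\nabla z_\tau^{k-1}(x))$ is Carath\'eodory and convex in $(F,M,D)$ for each fixed $x$ (the coupling contribution being convex in $M$ as the composition of $\phi_{\mathrm{d}}$ with a linear map). The Ioffe--Serrin weak lower semicontinuity theorem applied to the weakly convergent triple $(\nabla y_n,\cof\nabla y_n,\det\nabla y_n)\rightharpoonup(\nabla y,\cof\nabla y,\det\nabla y)$ in $L^p\times L^q\times L^r$, together with the assumed weak lower semicontinuity of $-\ell_{\mathrm{d}}$, yields $\FFFF_{\mathrm{d}}(y)\leq\liminf_n\FFFF_{\mathrm{d}}(y_n)=\inf\FFFF_{\mathrm{d}}$, so that $y=:y^k_\tau$ is a minimizer.

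To conclude, one verifies that $y^k_\tau\in\YYYY_{\mathrm{d}}$ and $(y^k_\tau,z_\tau^{k-1})\in\QQQQ_{\mathrm{d}}$: the $L^p$, $L^q$, $L^r$, $L^s$, $L^\alpha$ bounds follow from lower semicontinuity of norms together with the identifications above, while the Ciarlet--Ne\v{c}as inequality \eqref{eq:C-N} is preserved under the weak convergences at hand by the classical argument of \cite{CiaNec87ISCN}, using $\det\nabla y_n\rightharpoonup\det\nabla y$ in $L^r$ and strong convergence $y_n\to y$ in $C(\bar\Omega)$ for $p>3$ (respectively $L^{p^*}$ otherwise) to pass to the limit in $\mathcal{L}^3(y_n(\Omega))$. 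For $p>3$, Ciarlet--Ne\v{c}as then delivers injectivity a.e. For $p>6$ and $s>2p/(p-6)$, H\"older's inequality with exponents $p/3$ and $s$ yields $|\nabla y|^3/\det\nabla y\in L^\delta$ for some $\delta>2$, which is the condition \eqref{eq:H-K}, so \cite[Theorem~3.4]{HenKos14LMFD} upgrades injectivity to hold everywhere in $\Omega$. The main analytical obstacle, foreshadowing the technical compatibility condition in the statement, is the identification of the weak $L^\alpha$ limit $\xi=\cof\nabla y\,\nabla z_\tau^{k-1}$: since $\nabla z_\tau^{k-1}$ is merely $L^1$, direct $L^q$--$L^{q'}$ duality is unavailable, and one must instead combine the distributional convergence of $\cof\nabla y_n$ with the a priori $L^\alpha$-bound, the condition $1/p+1/s+1/\alpha\leq(q-1)/q$ being precisely what ensures consistency of the integrability exponents.
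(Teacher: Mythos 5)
Your overall architecture (direct method, coercivity from \eqref{eq:WPolyCoer}--\eqref{eq:LPCCoer}, weak continuity of minors, convexity-based lower semicontinuity, preservation of Ciarlet--Ne\v{c}as, and the H\"older computation giving \eqref{eq:H-K} for $p>6$, $s>2p/(p-6)$) matches the paper. However, there is a genuine gap at precisely the step you yourself single out as the main obstacle: the identification of the weak $L^\alpha$ limit of $\cof\nabla y_n\,\nabla z^{k-1}_\tau$. You assert that ``direct $L^q$--$L^{q'}$ duality is unavailable'' because $\nabla z^{k-1}_\tau$ is merely $L^1$, and propose instead to ``multiply the distributional convergence of $\cof\nabla y_n$ by the fixed $\nabla z^{k-1}_\tau$.'' As written this is not a proof: weak convergence of $\cof\nabla y_n$ in $L^q$ (or in the sense of distributions) does not survive multiplication by an $L^1$ function, since the product need not pair with any test class. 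The paper's resolution is the exact opposite of your claim: from $(\cof\nabla y)^{-1}=(\nabla y)^\top/\det\nabla y$ one gets the pointwise bound
\begin{align*}
    |\nabla z^{k-1}_\tau|
    \leq
    \frac{C}{\det\nabla y_n}\,|\nabla y_n|\,|\cof\nabla y_n\,\nabla z^{k-1}_\tau|,
\end{align*}
and H\"older's inequality with the a priori bounds $\nabla y_n\in L^p$, $(\det\nabla y_n)^{-1}\in L^s$, $\cof\nabla y_n\,\nabla z^{k-1}_\tau\in L^\alpha$ together with the hypothesis $1/p+1/s+1/\alpha\leq(q-1)/q$ shows that $\nabla z^{k-1}_\tau\in L^{q'}(\Omega)$. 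Hence $\psi\otimes\nabla z^{k-1}_\tau$ with $\psi\in L^\infty$ \emph{is} an admissible test function for $\cof\nabla y_n\rightharpoonup\cof\nabla y$ in $L^q$, and the duality argument you dismissed identifies the limit. This is the whole point of the exponent condition in the statement, which you correctly flag as relevant but do not actually use.

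A repair within your own framework is possible (test against $\chi_{\{|\nabla z^{k-1}_\tau|\leq M\}}\,\psi\otimes\nabla z^{k-1}_\tau\in L^\infty\subset L^{q'}$ and let $M\to\infty$, using that $\cof\nabla y_n\,\nabla z^{k-1}_\tau\rightharpoonup\xi$ in $L^\alpha$ to handle the left-hand side), but you neither carry this out nor invoke the integrability upgrade of $\nabla z^{k-1}_\tau$ that the lemma's hypotheses are designed to deliver. As it stands, the central identification step is unproved.
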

\begin{proof}
    Since $\gamma_{\mathrm{d}}(z^{k-1}_\tau) > 0$, by the growth assumptions in \eqref{eq:LPCCoer} we need to address only the coercivity and  lower-semicontinuity of the $\phi_{\mathrm{d}}$ term. The remaining part of the proof follows
    from standard results
    on~polyconvex energies;
    see e.g. \cite{Ciar88ME1}.
	Let $y_n \rightharpoonup y^k_\tau$ be a minimizing sequence for $\FFFF_{\mathrm{d}}$ having finite energy. Note that its existence is guaranteed by~\eqref{eq:yFinE}. 
	By the coercivity of $\phi_{\mathrm{d}}$ we may further suppose that there exists $f\in L^\alpha(\Omega)$ such that, up to extracting not relabeled subsequences,
	\begin{align*}
	    \cof \nabla y_n \nabla z^{k-1}_\tau \rightharpoonup f
	    \quad
	    \text{in } L^\alpha(\Omega) \text{ as } n\to\infty.
	\end{align*}
	Since $\phi_{\mathrm{d}}$ is convex and finite,
	the second integrand in \eqref{eq:fd} is weakly lower semi-continuous with respect to the weak convergence in $L^\alpha$. Hence, it remains only to identify the limit $f \in L^\alpha(\Omega)$.
	Thanks to the estimate
	\begin{align}
	\label{eq:disc-grad-z}
		|\nabla z^{k-1}_\tau|
	    \leq
	    C
	    \left| \frac{\nabla y_n^\top}{\det \nabla y_n} \right|
	    |\cof \nabla y_n \nabla z^{k-1}_\tau|
	    \leq
	    \frac{C}{\det \nabla y_n} |\nabla y_n| |\cof \nabla y_n \nabla z^{k-1}_\tau|,
	\end{align}
	as well as to the choice of $p$, $s$, and $\alpha$, \ed and the coercivity of $W_{\mathrm{d}}$ and $\phi_{\mathrm{d}}$, \EEE we have $\nabla z^{k-1}_\tau \in L^{q'}(\Omega)$.
	Therefore, for every $\psi \in L^\infty(\Omega)$ we may choose $\varphi := \psi \otimes \nabla z^{k-1}_\tau$
	as a~test function for the weak convergence
	$\cof \nabla y_n \rightharpoonup \cof \nabla y^k_\tau$ in $L^q(\Omega)$. This yields
	\begin{align*}
		f = \cof \nabla y^k_\tau \nabla z^{k-1}_\tau \in L^\alpha(\Omega),
	\end{align*}
	and completes the proof of the lemma.
\end{proof}
\begin{remark}[\ed Assumptions on the exponents - $1$ \EEE]
\label{rk:exp1}
Let us explain how the assumptions on the exponents $p$, $q$, $s$, and $\alpha$ interact. 
Normally the existence of minimizers for polyconvex energies is known even for
$q \geq p/(p-1)$, i.e. $(q-1)/q \geq 1/p$.
However, due to the requirements on $\alpha$,
we have $1/p < 1/p + 1/s + 1/\alpha \leq (q-1)/q$.
Therefore, for smaller values of $s$ or $\alpha$, larger values of $q$ are needed.
\end{remark}

\begin{lemma}[Existence of Discrete Solutions - Damage variables]
	\label{lemma:DiscreteDamage}
	Let $\Omega \subset \RRR^3$ be a~bounded Lipschitz domain.
	Assume that the damage function $\gamma_{\mathrm{d}}:\RRR \to (0,+\infty)$ is continuous, and
	let the stored energy density $W_{\mathrm{d}} : \RRR^{3\times3} \to \RRR$ be bounded from below. 
	Let $\phi_{\mathrm{d}} : \RRR^{3 \times 3} \to \RRR$
	be convex, and satisfy the coercivity condition \eqref{eq:phiPolyCoer},
	with $\alpha$ such that $1/p + 1/s \leq (\alpha-1)/\alpha$.
	Eventually, let the~dissipation potential $\RRRR: L^1(\Omega) \to \ed [0,+\infty)\EEE$ be lower semi-continuous with respect to the \ed strong \EEE $L^1$-topology.
    Let $k\tau \in [0,T]$, $y^k_\tau \in \YYYY_{\mathrm{d}}$, and $z^{k-1}_\tau \in Z_{\mathrm{d}}$
    be such that there exists $\tilde{z} \in Z_{\mathrm{d}}$ satisfying
	\begin{align}
		\label{eq:zFinE}
		(y^k_\tau,\tilde{z}) \in \QQQQ_{\mathrm{d}},
		\qquad
		\EEEE_{\mathrm{d}}(k\tau,y^k_\tau,\tilde{z}) + \RRRR(\tilde{z} - z^{k-1}_\tau) < +\infty.
	\end{align}
    Let the loading $z \mapsto -\ell(k\tau,y^k_\tau,z)$
    be lower semi-continuous with respect to the weak $W^{1,1}$-topology, and
        satisfy the coercivity condition \eqref{eq:LPCCoer}.
    Then, the functional
    \begin{align*}
        \GGGG_{\mathrm{d}} (z)
        &:=
        \EEEE_{\mathrm{d}}(k\tau,y^k_\tau,z) + \RRRR(z - z^{k-1}_\tau) \\
        &=
        \int_\Omega
            \left(\gamma_{\mathrm{d}}(z) W_{\mathrm{d}}(\nabla y^k_\tau)
            +
            \phi_{\mathrm{d}} ( \cof \nabla y^k_\tau \nabla z )\right) \,
		\d x
        - \ell_{\mathrm{d}}(k\tau,y^k_\tau,z)
        +
        \RRRR(z - z^{k-1}_\tau)
    \end{align*}
    has a minimizer $z^k_\tau \in Z_{\mathrm{d}}$ such that
    \begin{align*}
    	(y^k_\tau,z^k_\tau) \in \QQQQ_{\mathrm{d}},
    	\qquad
    	\EEEE_{\mathrm{d}}(k\tau,y^k_\tau,z^k_\tau) + \RRRR(z^k_\tau - z^{k-1}_\tau) < +\infty.
    \end{align*}
\end{lemma}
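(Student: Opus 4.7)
The argument is a straightforward application of the direct method of the calculus of variations, with the only delicate point being the compactness and identification of the weak limit of $\cof \nabla y^k_\tau \nabla z_n$ along a minimizing sequence -- mirroring, in the damage variable, the technique used in Lemma \ref{lemma:DiscreteDef}. Assumption \eqref{eq:zFinE} provides a competitor $\tilde z$ with finite energy, so a minimizing sequence $\{z_n\}\subset Z_{\mathrm{d}}$ for $\GGGG_{\mathrm{d}}$ exists and attains a finite infimum.

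Uniform estimates would be obtained first. The box constraint $0 \leq z_n \leq 1$ gives a uniform $L^\infty(\Omega)$-bound. The coercivity \eqref{eq:phiPolyCoer} of $\phi_{\mathrm{d}}$, together with the lower bound on $W_{\mathrm{d}}$, the positivity and continuity of $\gamma_{\mathrm{d}}$ (which is therefore bounded below on $[0,1]$), the nonnegativity of $\RRRR$, and the coercivity/growth control \eqref{eq:LPCCoer} of $\ell_{\mathrm{d}}$, yield a uniform $L^\alpha(\Omega;\RRR^3)$-bound on $v_n := \cof \nabla y^k_\tau \nabla z_n$. Since $\alpha>1$, by reflexivity I can extract a (non-relabeled) subsequence satisfying
\begin{equation*}
    z_n \wto^\ast z^k_\tau \ \text{in } L^\infty(\Omega), \qquad v_n \wto \xi \ \text{in } L^\alpha(\Omega;\RRR^3),
\end{equation*}
with $0\leq z^k_\tau\leq 1$ a.e.\ in $\Omega$.

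The key step is to identify $\xi = \cof \nabla y^k_\tau \nabla z^k_\tau$, and in particular to show $z^k_\tau \in W^{1,1}(\Omega)$. From the algebraic identity $(\cof A)^{-1} = A^\top /\det A$ I would set $M := (\nabla y^k_\tau)^\top/\det \nabla y^k_\tau$, so that $\nabla z_n = M v_n$. The hypothesis $1/p+1/s\leq (\alpha-1)/\alpha = 1/\alpha'$ together with $y^k_\tau \in \YYYY_{\mathrm{d}}$ and Hölder's inequality give $M \in L^t(\Omega)$ with $t\geq \alpha'$, so the products $M v_n$ are equi-integrable (the $t$-th power of $M$ is integrable and $\{v_n\}$ is reflexively bounded). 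The Dunford--Pettis theorem then implies $\nabla z_n = M v_n \wto M \xi$ in $L^1(\Omega;\RRR^3)$, whence $\{z_n\}$ is bounded in $W^{1,1}(\Omega)$ and by Rellich--Kondrachov $z_n \to z^k_\tau$ strongly in $L^1(\Omega)$ along a further subsequence. A standard distributional test of $\int_\Omega z_n \dive \phi \, \d x = -\int_\Omega (M v_n)\cdot \phi\, \d x$ against $\phi \in C^\infty_c(\Omega;\RRR^3)$ identifies $\nabla z^k_\tau = M\xi$, i.e.\ $\xi = \cof \nabla y^k_\tau \nabla z^k_\tau$, so that $(y^k_\tau,z^k_\tau)\in \QQQQ_{\mathrm{d}}$.

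Lower semicontinuity of $\GGGG_{\mathrm{d}}$ along $\{z_n\}$ is then addressed term by term: dominated convergence handles the $\gamma_{\mathrm{d}}$-term (using continuity of $\gamma_{\mathrm{d}}$ on $[0,1]$ and the fact that $W_{\mathrm{d}}(\nabla y^k_\tau)\in L^1(\Omega)$, which is forced by $\EEEE_{\mathrm{d}}(k\tau,y^k_\tau,\tilde z)<+\infty$ combined with $\gamma_{\mathrm{d}}$ being bounded below away from $0$ on $[0,1]$); convexity of $\phi_{\mathrm{d}}$ gives weak $L^\alpha$-lower semicontinuity of the $\phi_{\mathrm{d}}$-term; the weak $W^{1,1}$-lower semicontinuity of $-\ell_{\mathrm{d}}$ is assumed and applies since I have just shown $z_n \wto z^k_\tau$ in $W^{1,1}(\Omega)$; and $\RRRR(z_n-z^{k-1}_\tau)\to \RRRR(z^k_\tau - z^{k-1}_\tau)$ follows from strong $L^1$-convergence and the assumed $L^1$-lower semicontinuity of $\RRRR$. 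Combining these yields $\GGGG_{\mathrm{d}}(z^k_\tau)\leq \liminf_n \GGGG_{\mathrm{d}}(z_n) = \inf_{Z_{\mathrm{d}}} \GGGG_{\mathrm{d}}$, concluding the proof. The main obstacle is precisely the identification of $\xi$ under the borderline exponent condition $1/p+1/s\leq (\alpha-1)/\alpha$: the bound on $\nabla z_n$ is generally only in $L^1$, so weak compactness in $W^{1,1}$ fails directly and one has to route the argument through the reflexive $L^\alpha$-bound on $\cof \nabla y^k_\tau \nabla z_n$ and Dunford--Pettis.
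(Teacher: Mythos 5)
Your proposal is correct and follows essentially the same route as the paper's proof: direct method with the uniform $L^\alpha$-bound on $\cof \nabla y^k_\tau \nabla z_n$, identification of the weak limit via $(\cof \nabla y^k_\tau)^{-1}=(\nabla y^k_\tau)^\top/\det\nabla y^k_\tau \in L^{\alpha'}(\Omega)$ (using $1/p+1/s\leq 1/\alpha'$), weak $L^1$-convergence of $\nabla z_n$, compactness of $W^{1,1}$ in $L^1$, and term-by-term lower semicontinuity (the paper uses Fatou for the $\gamma_{\mathrm{d}}$-term where you use dominated convergence, and your appeal to Dunford--Pettis is superfluous but harmless). Only cosmetic remark: continuity of $\RRRR$ along the minimizing sequence is claimed but not needed nor implied by the assumptions; lower semicontinuity, which is what you actually use, suffices.
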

\begin{proof}
	Let $\{z_n\}$ be a minimizing sequence for $\GGGG_{\rm d}$. We first observe that,
	thanks to \eqref{eq:zFinE}, $\{z_n\}$ has finite energy.
	By the continuity of $\gamma_{\mathrm{d}}$ and by the boundedness of $W_{\mathrm{d}}$ from below we have
	\begin{align*}
		\int_\Omega \gamma_{\mathrm{d}}(z_n) W_{\mathrm{d}}(\nabla y^k_\tau)\,\d x
		\geq
		-c \LLLL(\Omega) \max_{z \in [0,1]} \gamma_{\mathrm{d}}(z)
		> -\infty.
	\end{align*}
	Since $\RRRR \geq 0$,
	we infer from the coercivity assumptions on $\phi_{\mathrm{d}}$ and $\ell_{\mathrm{d}}$
	that $\{\cof \nabla y^k_\tau \nabla z_n\}$ is bounded in $L^\alpha(\Omega)$.
	Since $\{z_n\}$ is bounded in $L^\infty(\Omega)$ by the definition of $Z_{\mathrm{d}}$,
	up to subsequences we deduce
    \begin{align*}
        \cof \nabla y^k_\tau \nabla z_n &\rightharpoonup
        f
        \quad \text{in } L^\alpha(\Omega), \\
        z_n &\rightharpoonup^* z^k_\tau \quad \text{in } L^\infty(\Omega),
    \end{align*}
    for some $f \in L^\alpha(\Omega)$ and $z^k_\tau \in L^\infty(\Omega)$.

    To conclude, we need to characterize the limit $f$,
    and to show that $\nabla z^k_\tau\in L^1(\Omega)$.
    Thanks to the choice of $p$, $s$, and $\alpha$,
	the bound
	\begin{align*}
		|(\cof \nabla y^k_\tau)^{-T}|
		\leq
		\frac{|\nabla y^k_\tau|}{\det \nabla y^k_\tau}
	\end{align*}
    guarantees that \ed $(\cof \nabla y^k_\tau)^{-\top}  \in L^{\alpha'}(\Omega)$. \EEE
    Hence, we obtain
    \begin{align*}
        \nabla z_n
        \rightharpoonup
        (\cof \nabla y^k_\tau)^{-1} f
        \quad \text{in } L^1(\Omega).
    \end{align*}
    Since also $z_n \rightharpoonup z^k_\tau$ in $L^1(\Omega)$,
    by the definition of the distributional gradient we deduce
    \begin{align*}
        \nabla z^k_\tau
        =
    	(\cof \nabla y^k_\tau)^{-1} f \in L^1(\Omega),
    \end{align*}
    which in turn provides an identification of $f$.
    As a by-product we have also obtained, for a suitable (not relabeled) subsequence,
    that the following convergences hold true
    \begin{align*}
    	z_n \rightharpoonup z^k_\tau
    	\quad
    	\text{in } W^{1,1}(\Omega),
    	\qquad
    	z_n \to z^k_\tau
    	\quad
    	\text{in } L^{\hat{\alpha}}(\Omega),
    	\qquad
    	z_n \to z^k_\tau
    	\quad
    	\text{a.e. in } \Omega,
    \end{align*}
    for every $1 \leq \hat{\alpha} < 3/2$,
    which follows by the~compact embedding
    $W^{1,1}(\Omega) \Subset L^{\hat{\alpha}} (\Omega)$.

	To conclude the proof, it remains to show the lower semicontinuity
	of the functional $\GGGG_{\mathrm{d}}(z)$ with respect to these convergences.
	The lower semicontinuity of the first term is ensured by the continuity
	of $\gamma_{\mathrm{d}}$ and by Fatou's lemma.
	For the second term, we deduce it arguing as in Lemma \ref{lemma:DiscreteDef}.
	Finally, the loading $\ell$ and the dissipation potential $\RRRR$
	are lower semicontinuous with respect to the given convergences by assumption.
\end{proof}

\begin{remark}[Assumptions on the exponents - $2$]
\label{rk:exp2}
    Note that here small values of $p$ or $s$ imply large values of $\alpha$,
    and hence in turn, recalling Remark \ref{rk:exp1}, of $q$.
\end{remark}

%%% ============
\subsection{Selection of Subsequences}
\label{subs:subs}
In this subsection we analyze compactness properties of the sequences of time-interpolants
\eqref{eq:RInterp} and \eqref{eq:LInterp}
as the time-step $\tau$ converges to zero.
We recall that
	although the weak* limits of the left- and right-continuous interpolants
	coincide in~the~appropriate Bochner product space
	(see \cite[proof of Thm. 8.9]{RoubicekNPDE2013}
	which works also for piecewise-constant interpolants),
	the pointwise-in-time limits $\underline{y}(t)$ and $\bar{y}(t)$ of the $t$-dependent
	subsequences may differ in general.

\begin{lemma}[Compactness I]
    \label{lemma:Compactness1}
    Let the hypotheses of Lemmas  \ref{lemma:InterpSol}
	and \ref{lemma:DiscreteScndGrad} be satisfied.
    Let $\phi : \RRR^{3 \times 3} \to \RRR$
	satisfy the coercivity and growth condition \eqref{eq:phiCoer}
	with $\alpha > 1$.
  
    Finally,  assume that \eqref{eq:R1} and \eqref{eq:R2} hold.
    Then, we have the following uniform bounds
    \begin{align}
        \label{eq:n1}\|\bar{y}_\tau\|_{L^\infty((0,T),W^{2,p}(\Omega))} &\leq C,
        &\| (\det \nabla \bar{y}_\tau)^{-1} \|_{L^\infty((0,T),L^s(\Omega)))} &\leq C, \\
        \label{eq:n2}\|\bar{z}_\tau\|_{L^\infty((0,T),L^\infty(\Omega))} &\leq C,
        &\| (\nabla \bar{y}_\tau)^{-\top} \nabla \bar{z}_\tau\|_{L^\infty((0,T),L^\alpha(\Omega))} &\leq C, \\
        \label{eq:n3}\var_{L^1}(\bar{z}_\tau;[0,T]) &\leq C
        &\| \nabla \bar{z}_\tau\|_{L^\infty((0,T),L^\alpha(\Omega))} &\leq C, \\
    	\label{eq:n4}\|\bar{\theta}_\tau\|_{L^1(0,T)} &\leq C,
        &\|\underline{\theta}_\tau\|_{L^1(0,T)} &\leq C,
    \end{align}
    where $\{ |\bar{\theta}_\tau| \}$ and $\{ |\underline{\theta}_\tau| \}$ are
    equiintegrable.
    Up to a (not relabeled) subsequence,
    the following convergences of the piecewise\ed-constant \EEE interpolants associated to the internal variable hold true
    \begin{align}
    	\label{eq:HellyZ}
        \forall t \in [0,T]:&&
        \bar{z}_\tau(t) &\rightharpoonup^* z(t)
        &&\text{ and }
        &\underline{z}_\tau(t) &\rightharpoonup^* \underline{z}(t)
        &&\text{in } L^\infty(\Omega), \\
    	\label{eq:HellyNZ}
        &&
        \nabla \bar{z}_\tau(t) &\rightharpoonup \nabla z(t)
        &&\text{ and }
        &\nabla \underline{z}_\tau(t) &\rightharpoonup \nabla \underline{z}(t)
        &&\text{in } L^\alpha(\Omega).
    \end{align}
   The limiting maps  satisfy $z,\underline{z} \in \BV([0,T];L^1(\Omega)) \cap \B([0,T];Z)$,
    and there exists an at most countable set $J \subset [0,T]$ such that
    \begin{align}
        \label{eq:ZsEqual}
        \forall t \in [0,T] \setminus J:
        \quad
        z(t) = \underline{z}(t).
    \end{align}
   The dissipations associated to the left-continuous piecewise-constant interpolants of the internal variable fulfill
    \begin{align}
        \label{eq:HellyDiss}
        \forall t \in [0,T]&:
        \quad
        \diss_\RRRR(\bar{z}_\tau,[0,t]) \to \delta(t),
        \end{align}
        where $\delta\in \BV[0,T]$, and
        \begin{align}
        \label{eq:HellyDissLSC}
        \forall 0 \leq t_1 < t_2 \leq T&:
        \quad
        \diss_\RRRR(z,[t_1,t_2]) \leq \delta(t_2) - \delta(t_1).
    \end{align}
    The norms of the gradients of the interpolants of the internal variable satisfy
    \begin{align}
        \label{eq:NormZWeak}
        \| \nabla \bar{z}_\tau \|_{L^\alpha(\Omega;\RRR^{3\times 3})} \rightharpoonup^* \bar{f}
        \quad \text{and} \quad
        \| \nabla \underline{z}_\tau \|_{L^\alpha(\Omega;\RRR^{3\times 3})} \rightharpoonup^* \underline{f}
        \qquad
        \text{in } L^\infty(0,T),
    \end{align}
    for some $\bar{f}, \underline{f} \in L^\infty(0,T)$
    such that for almost every $t \in (0,T)$
    \begin{align}
        \label{eq:NormZUnderIneq}
		\underline{f}
		&\geq
		\underline{f}^{\text{inf}}(t)
		:=
		\liminf_{\tau \to 0} \| \nabla \underline{z}_\tau(t) \|_{L^\alpha(\Omega;\RRR^{3\times 3})}.
    \end{align}
    There exist $\bar{\theta},\underline{\theta}\in L^1(0,T)$ such that, up to a (not relabeled) subsequence, 
	\begin{align}
	    \label{eq:ThetaWeak}
		\underline{\theta}_\tau
		\rightharpoonup
		\underline{\theta}
		\quad \text{and} \quad
		\bar{\theta}_\tau
		\rightharpoonup
		\bar{\theta}
		\qquad
		\text{in } L^1(0,T).
	\end{align}
    Eventually, for the regularizing term we have the uniform bounds
    \begin{align}
        \label{eq:RegEBound}
        \mathcal{H}_\tau(\bar{z}_\tau(t)) \leq C,
        \quad
        \text{i.e.}
        \quad
        \int_\Omega |\nabla \underline{z}_\tau(t,x)|^\beta \, \d x
        \leq
        \frac{C}{\tau^\kappa}
    \end{align}
    for almost every $t\in (0,T)$.
\end{lemma}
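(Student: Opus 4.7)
The proof proceeds from the discrete energy inequality \eqref{eq:InterpEIneq} together with the uniform bound \eqref{eq:UniformEstDiscr} in Lemma \ref{lemma:InterpSol}, whose right-hand side is controlled thanks to \eqref{eq:TruncZ0} and the strong continuity of $\mathcal{E}(0,y^0,\cdot)$. Specifically, my plan is to combine this bound with the coercivity assumptions \eqref{eq:WCoer}, \eqref{eq:phiCoer} and the loading coercivity \eqref{eq:LCoer}. Absorbing the lower-order loading terms, via Young's inequality and the fact that the exponents $\tilde{p},\tilde{s},\tilde{\alpha}$ are strictly subcritical, one directly extracts the bounds $\|\bar y_\tau\|_{L^\infty(0,T;W^{2,p})} + \|(\det \nabla \bar y_\tau)^{-1}\|_{L^\infty(0,T;L^s)} + \|(\nabla \bar y_\tau)^{-\top}\nabla \bar z_\tau\|_{L^\infty(0,T;L^\alpha)} \le C$, proving \eqref{eq:n1}--\eqref{eq:n2}. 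The $L^\infty$ bound on $\bar z_\tau$ is simply the definition of $Z$, while the variation estimate $\mathrm{var}_{L^1}(\bar z_\tau;[0,T])\le C$ in \eqref{eq:n3} follows from \eqref{eq:RCoer}, which relates $\mathcal{R}$ to the $L^1$-norm, and the dissipation bound contained in \eqref{eq:UniformEstDiscr}. The regularization bound \eqref{eq:RegEBound} is immediate since $\mathcal{H}_\tau$ is one of the terms controlled on the left-hand side of \eqref{eq:UniformEstDiscr}.

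The key new ingredient is that, by Corollary \ref{corollary:HK}, the uniform $W^{2,p}$-bound on $\bar y_\tau$ together with $\|(\det\nabla\bar y_\tau)^{-1}\|_{L^\infty(0,T;L^s)}\le C$ with $s\ge 3p/(p-3)$ yields $\det \nabla\bar y_\tau \ge \varepsilon > 0$ uniformly in $(t,x)$; combined with the Sobolev embedding $W^{2,p}\hookrightarrow W^{1,\infty}$ for $p>3$, this makes $(\nabla \bar y_\tau)^{-\top}$ uniformly bounded in $L^\infty$. Writing $\nabla \bar z_\tau = (\nabla \bar y_\tau)^{\top}[(\nabla \bar y_\tau)^{-\top}\nabla \bar z_\tau]$ then upgrades the Eulerian gradient bound into $\|\nabla \bar z_\tau\|_{L^\infty(0,T;L^\alpha)}\le C$, completing \eqref{eq:n3}. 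For \eqref{eq:n4} and the equiintegrability, I would invoke \eqref{eq:E2AC} together with the pointwise energy bound \eqref{eq:E2G}: on sets of small measure the estimate $|\bar\theta_\tau(t)|\le \lambda(t)(\mathcal{E}(t,\bar y_\tau,\bar z_\tau) + C^0) \le C\lambda(t)$ inherits equiintegrability from $\lambda\in L^1(0,T)$, and similarly for $\underline\theta_\tau$, which then yields \eqref{eq:ThetaWeak} via the Dunford--Pettis theorem.

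The pointwise-in-time convergence \eqref{eq:HellyZ}--\eqref{eq:HellyNZ} is the heart of the proof and will be obtained through a Helly-type selection principle. The sequence $\bar z_\tau$ takes values in the ball of $W^{1,\alpha}$ determined by \eqref{eq:n2}--\eqref{eq:n3}, which is weakly compact (equivalently, metrizable weakly-$\ast$ in $L^\infty$ on a separable dense subspace of predual), while the total variation with respect to $\mathcal{R}$ is uniformly bounded. A standard diagonal/Helly argument (e.g.\ as in \cite[Thm.~2.1.24]{MieRou15RIST} or \cite{FraMie06ERCR}) then produces, along a common not-relabeled subsequence, pointwise weak limits $z(t)$ and $\underline z(t)$ at every $t\in[0,T]$, all in $W^{1,\alpha}$. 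The limits lie in $\BV([0,T];L^1)\cap \B([0,T];Z)$ by the uniform variation and $L^\infty$ bounds, and they coincide outside a countable set $J$ of jump points of $z$, since \eqref{eq:UnderlineBarRelNTK} and the time-shift by $\tau$ force $\underline z_\tau$ and $\bar z_\tau$ to have the same weak-$\ast$ accumulation at points of weak continuity. This yields \eqref{eq:ZsEqual}.

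For the dissipation convergence \eqref{eq:HellyDiss} I would apply the classical Helly selection theorem to the monotone, uniformly bounded real functions $t\mapsto \diss_\mathcal{R}(\bar z_\tau;[0,t])$, thereby extracting a further (not relabeled) subsequence converging pointwise to some $\delta\in\BV([0,T])$. The inequality \eqref{eq:HellyDissLSC} then follows from the lower semicontinuity of $\mathcal{R}$ with respect to strong $L^1$-convergence, applied to arbitrary partitions of $[t_1,t_2]$ together with the fact that the pointwise weak-$\ast$ limits in $L^\infty$ along a $W^{1,\alpha}$-bounded sequence upgrade to strong $L^1$-convergence via Rellich--Kondrachov. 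Finally, the convergences \eqref{eq:NormZWeak} and the inequality \eqref{eq:NormZUnderIneq} are an immediate consequence of the uniform $L^\infty$-in-time bound in \eqref{eq:n3} and Fatou's lemma applied to $\liminf_\tau \|\nabla \underline z_\tau(t)\|_\alpha$ tested against nonnegative $L^1(0,T)$ test functions. The main technical obstacle is guaranteeing a single subsequence that simultaneously realizes all the $t$-pointwise convergences for both left- and right-continuous interpolants together with the dissipation; this is handled by applying Helly first to $\bar z_\tau$, using the time-shift relation \eqref{eq:UnderlineBarRelNTK} to transfer the convergence to $\underline z_\tau$, and finally applying Helly once more to the dissipations along the already-extracted subsequence.
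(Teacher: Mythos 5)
Your proposal is correct and follows essentially the same route as the paper: uniform bounds from \eqref{eq:UniformEst} plus the coercivity assumptions, the identity $\nabla \bar z_\tau = (\nabla\bar y_\tau)^{\top}\bigl[(\nabla\bar y_\tau)^{-\top}\nabla\bar z_\tau\bigr]$ with the $L^\infty$-bound on $\nabla\bar y_\tau$ from $W^{2,p}\hookrightarrow C^1(\overline\Omega)$, equiintegrability of the powers via \eqref{eq:E2}--\eqref{eq:E2AC}, a generalized Helly selection for the pointwise-in-time convergences and the dissipation, and Fatou for \eqref{eq:NormZUnderIneq}. The only cosmetic difference is that you invoke Corollary \ref{corollary:HK} to bound $(\nabla\bar y_\tau)^{-\top}$, which is not needed for upgrading the Eulerian gradient bound to the Lagrangian one (only $(\nabla\bar y_\tau)^{\top}\in L^\infty$ is used there); this is harmless.
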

\begin{proof}
    For convenience of the reader, we subdivide the proof into three steps.\\
    \textit{Step I: Uniform estimates.}
    Since $\gamma$ is positive and continuous,
    \ed \eqref{eq:n1}, \eqref{eq:n2}, and the first estimate in \eqref{eq:n3} \EEE are direct consequences of \eqref{eq:UniformEst},
    the coercivity of $\phi$ \eqref{eq:phiCoer},
    $W$ \eqref{eq:WCoer}, $\ell$ \eqref{eq:LCoer}, and $\RRRR$,
    and the definition of the space $Z$; see \eqref{eq:def-Z}.
    Note that
    \begin{align*}
        |\nabla \bar{z}_\tau| \leq C |\nabla \bar{y}_\tau^\top||(\nabla \bar{y}_\tau)^{-\top} \nabla \bar{z}_\tau|,
    \end{align*}
    and that thanks to \ed \eqref{eq:n2} \EEE and the embedding of $W^{2,p}(\Omega)$ into $C(\bar{\Omega})$ for $p>3$, 
    we have that $\{\nabla \bar{y}_\tau^\top\}$ is uniformly bounded in $L^\infty((0,T),L^\infty(\Omega))$. Using again the coercivity of $\phi$ \eqref{eq:phiCoer}, we thus obtain \ed the second estimate in \eqref{eq:n3}. \EEE

    \noindent
	Finally, the equiintegrability of $\bar{\theta}_\tau$ and $\underline{\theta}_\tau$, as well as the corresponding uniform bounds \eqref{eq:n4},
	follow by \eqref{eq:E2}, \eqref{eq:E2G} and \eqref{eq:UniformEst}.
	Indeed, for either of the interpolants, denoted by \ed the superscript ``$\hat{\phantom{g}}$'', \EEE we have
	\begin{align*}
	    |\hat{\theta}_\tau(t)|
		&=
		|\partial_t \EEEE(t,\hat{y}_\tau(t),\hat{z}_\tau(t))|
		\leq
		\lambda(t) (\EEEE(t,\hat{y}_\tau(t),\hat{z}_\tau(t)) + C^0) \\
		&\leq
		\lambda(t) \e^{\Lambda(t)} (\EEEE(0,y^0,z^0) + C^0 + 1),
	\end{align*}
	provided $\tau > 0$ is sufficiently small.

    \textit{Step II: Selection of sub-sequences.}
    By the convexity and  lower semicontinuity \ed of $\RRRR$ \EEE with respect to the strong $L^1$-convergence,
    we deduce that $\RRRR$ is lower semi-continuous \ed also \EEE in the \ed weak \EEE $L^1$- topology.
    \ed We observe that $\{\bar{z}_\tau(0)\}$ is uniformly bounded in $L^1(\Omega)$, thus by the first bound in \eqref{eq:n3} we can apply \cite[Lemma 7.2]{dalmaso-desimone-mora} to deduce that $$\bar{z}_\tau(t)\wto^*z(t)\quad\text{weakly* in }\mathcal{M}_b(\Omega)$$
    for every $t\in [0,T]$. \EEE

    Finally, functions $z:\Omega \to [0,1]$ form a~compact subset of $L^\infty(\Omega)$
    in the weak$^*$ topology
    and thanks to the uniform bounds proven in Step I, the sequence $\{\bar{z}_\tau\}$ is also weakly compact in $W^{1,\alpha}$ for almost every $t$.
    We hence apply the generalized Helly's selection principle
    from \cite{MieRou15RIST}
    to obtain a subsequence $\{\bar{z}_\tau\}$, not relabeled,
    satisfying \eqref{eq:HellyZ}, \eqref{eq:HellyNZ}, \eqref{eq:HellyDiss}, and \eqref{eq:HellyDissLSC}.

    \noindent
    Further, the convergence properties of $\{\underline{z}_\tau\}$, as well as the equality of $z$ and $\underline{z}$ almost everywhere in $[0,T]$
    follow from the fact that $\bar{z}_\tau - \underline{z}_\tau \to 0$
    in $L^1((0,T),L^1(\Omega))$; see \cite{Roubicek2015}.
    As both $z$ and $\underline{z}$ are in $\BV([0,T];L^1(\Omega))$
    they may differ only at discontinuity points which form an at most countable subset
    of $[0,T]$.

    \noindent
     \ed Eventually, \eqref{eq:NormZWeak} and \eqref{eq:ThetaWeak} are direct consequences of \eqref{eq:n3} and \eqref{eq:n4}, \EEE 
    while the inequality \eqref{eq:NormZUnderIneq} follows by Fatou's lemma. \ed
    Indeed, for every test function $\varphi\in L^1(0,T)$ with $\varphi\geq 0$ on $(0,T)$, from \eqref{eq:NormZWeak} and from the definition of $\underline{f}^{inf}$, we have
    $$\int_0^T \underline{f}^{inf}(t)\varphi(t)\,\d t\leq \int_0^T \underline{f}(t)\varphi(t)\,\d t.$$
    The thesis follows then by considering a sequence of test functions localizing around a point.
    \EEE

    \textit{Step III: Boundedness of the regularizing term.}
    The bounds \eqref{eq:RegEBound} follows directly from \eqref{eq:RegS} and the uniform estimate \eqref{eq:UniformEst}.
\end{proof}

%%% ============
\subsection{Improving the convergence of the internal variables}
\label{subs:better-z}
In this subsection, we improve the convergence \eqref{eq:HellyNZ} and show that for (almost) every $t \in [0,T]$ the piecewise-constant interpolants of the internal variables converge in the strong $W^{1,\alpha}$-topology. We point out that this property will be instrumental
both for passing to the limit in the right-hand side of the energy inequality,
and for identifying the reduced power.
\begin{lemma}[Improved Convergence in $z$]
    \label{lemma:ImprovedZ}
    Let the hypotheses of Lemma \ref{lemma:Compactness1} hold.
    Further, let $W: \RRR^{3 \times 3} \times \RRR^{3 \times 3 \times 3} \to [0,+\infty)$
    satisfy the coercivity condition \eqref{eq:WCoer},
   with $s \geq \frac{3p}{p-3}$. Let $\phi : \RRR^{3 \times 3} \to \RRR$
    be strictly convex and satisfy the coercivity and growth condition \eqref{eq:phiCoer},
   with $\alpha > 3$.
	Moreover, let the~dissipation potential $\RRRR$
	be continuous with respect to the \ed strong \EEE $L^1$-topology. 
    Finally, let the loading functional $\ell: \RRR \times \YYYY \times Z \to \RRR$
    be such that \ed for almost every $t\in [0,T]$ \EEE
    \begin{align*}
    	\forall \tilde{z}_{\tau} \to z(t) \text{ in } W^{1,\alpha}(\Omega):
    	\quad
        \liminf_{{\tau} \to 0}
            \left(
              \ell(k{\tau},\bar{y}_{\tau}(t),\tilde{z}_{\tau})
            - \ell(k{\tau},\bar{y}_{\tau}(t),\bar{z}_{\tau}(t))
            \right)
        \geq 0.
    \end{align*}
Then,
    \begin{align}
        \label{eq:zBarStrong}
        \forall t \in [0,T]:
        \quad
        \bar{z}_\tau(t) \to z(t)
        \quad
        \text{in } W^{1,\alpha}(\Omega),
    \end{align}
    and there exists a (not relabeled) subsequence such that
    \begin{align}
        \label{eq:zUnderStrong}
        \forae t \in (0,T):
        \quad
        \underline{z}_\tau(t) \to z(t)
        \quad
        \text{in } W^{1,\alpha}(\Omega).
    \end{align}
    Moreover, for the regularizing term we have
    \begin{align}
    \label{eq:reg-zero}
        \forall t \in [0,T]:
        \quad
        \mathcal{H}_\tau(\bar{z}_\tau(t)) \to 0.
    \end{align}
\end{lemma}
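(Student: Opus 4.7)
The plan is to upgrade the weak convergence \eqref{eq:HellyNZ} to strong $W^{1,\alpha}$-convergence via a Visintin-type argument: thanks to the strict convexity of $\phi$, it will suffice to show that $\int_\Omega \phi\bigl((\nabla \bar{y}_\tau(t))^{-\top}\nabla \bar{z}_\tau(t)\bigr)\,\d x$ converges to the corresponding limit integral. The \emph{uniform} convergence $(\nabla \bar{y}_\tau(t))^{-\top}\to(\nabla y(t))^{-\top}$---a consequence of the compact embedding $W^{2,p}\hookrightarrow C^1(\bar\Omega)$ for $p>3$ combined with the uniform determinant lower bound from Corollary \ref{corollary:HK}, applied to a $t$-dependent subsequence of $\bar{y}_\tau(t)$---then allows one to translate the convergence of these integrals into strong $L^\alpha$-convergence of the composite $(\nabla \bar{y}_\tau)^{-\top}\nabla\bar{z}_\tau$. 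Multiplying back by the uniformly convergent $\nabla \bar{y}_\tau^\top$ recovers strong convergence of $\nabla \bar{z}_\tau(t)$ in $L^\alpha$, which together with the Rellich compactness of $W^{1,\alpha}$ delivers \eqref{eq:zBarStrong}.

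The central step is to test the discrete semi-stability \eqref{eq:InterpSemiStab} against a competitor $\tilde{z}_\tau$ converging strongly to $z(t)$ in $W^{1,\alpha}$. Since the regularization $\mathcal{H}_\tau$ forces admissibility in $W^{1,\beta}$ with $\beta>\alpha$, the direct choice $\tilde{z}=z(t)$ is unavailable, so I would instead take $\tilde{z}_\tau:=(z(t))_{\lambda(\tau)}$, the Lipschitz truncation of Lemma \ref{lemma:Truncation}, with parameter $\lambda(\tau)\to\infty$ chosen slow enough that $\tau^\kappa\lambda(\tau)^\beta\to 0$---mirroring the choice \eqref{eq:TruncZ0} used for the initial datum. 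This ensures simultaneously $\tilde{z}_\tau\to z(t)$ in $W^{1,\alpha}$ and $\mathcal{H}_\tau(\tilde{z}_\tau)\to 0$.

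Plugging $\tilde{z}_\tau$ into \eqref{eq:InterpSemiStab}, isolating the $\phi$-integrals on the left and taking $\limsup_{\tau\to 0}$, each remaining term will either vanish or contribute non-positively: $\mathcal{H}_\tau(\tilde{z}_\tau)\to 0$ by design; $\RRRR(\tilde{z}_\tau-\bar{z}_\tau(t))\to 0$ by the $L^1$-continuity of $\RRRR$ and the strong $L^1$-convergence of both $\tilde{z}_\tau$ and $\bar{z}_\tau(t)$ to $z(t)$; the difference $\int_\Omega[\gamma(\tilde{z}_\tau)-\gamma(\bar{z}_\tau)]W(\nabla\bar{y}_\tau,\nabla^2\bar{y}_\tau)\,\d x$ vanishes by Vitali's theorem using continuity of $\gamma$ on $[0,1]$, a.e.\ convergence of $\tilde{z}_\tau$ and $\bar{z}_\tau(t)$ to $z(t)$, and equi-integrability of $W(\nabla\bar{y}_\tau,\nabla^2\bar{y}_\tau)$ supplied by Corollary \ref{corollary:WConti}; and the loading difference is absorbed by the hypothesized $\liminf$ inequality. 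For the right-hand $\phi$-integral, strong $L^\alpha$-convergence of $(\nabla\bar{y}_\tau)^{-\top}\nabla\tilde{z}_\tau$ together with the $\alpha$-growth of $\phi$ yields
\[
\limsup_{\tau\to 0}\int_\Omega \phi\bigl((\nabla \bar{y}_\tau)^{-\top}\nabla \bar{z}_\tau\bigr)\,\d x \ \leq\ \int_\Omega\phi\bigl((\nabla y(t))^{-\top}\nabla z(t)\bigr)\,\d x.
\]
The matching liminf inequality is provided by weak $L^\alpha$-lower semicontinuity, so the $\phi$-integrals converge, and as an immediate by-product of the same chain, $\mathcal{H}_\tau(\bar{z}_\tau(t))\to 0$, which is \eqref{eq:reg-zero}. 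Strict convexity of $\phi$, via the Visintin-type argument of Lemma \ref{lemma:Visintin}, then upgrades weak to strong $L^\alpha$-convergence of $(\nabla\bar{y}_\tau)^{-\top}\nabla\bar{z}_\tau$, completing \eqref{eq:zBarStrong}.

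The assertion \eqref{eq:zUnderStrong} follows by replaying the very same argument at the preceding time node: for a.e.\ $t$, $\underline{z}_\tau(t)=z_\tau^{k-1}$ is semi-stable at $(k-1)\tau$, and by \eqref{eq:ZsEqual} its weak limit coincides with $z(t)$, so the Lipschitz-truncation competitor constructed above applies verbatim. The main technical subtlety of the whole argument lies in calibrating $\lambda(\tau)$: it must diverge fast enough that $\tilde{z}_\tau\to z(t)$ in $W^{1,\alpha}$, yet slowly enough that $\mathcal{H}_\tau(\tilde{z}_\tau)\to 0$, and the feasibility of this compromise is precisely what the standing exponent inequality $\beta>\alpha+\kappa$ encodes.
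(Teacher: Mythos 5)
Your treatment of \eqref{eq:zBarStrong} and \eqref{eq:reg-zero} follows essentially the same route as the paper (Lipschitz-truncated competitor in the discrete semi-stability \eqref{eq:InterpSemiStab}, a limsup/liminf sandwich for the $\phi$-integrals, the strict-convexity upgrade of Lemma \ref{lemma:Visintin}, and multiplication by the uniformly convergent $\nabla\bar y_\tau^{\top}$), but one justification is wrong as stated: for the term $\int_\Omega[\gamma(\tilde z_\tau)-\gamma(\bar z_\tau(t))]\,W(\nabla\bar y_\tau(t),\nabla^2\bar y_\tau(t))\,\mathrm{d}x$ you invoke Vitali with ``equi-integrability of $W(\nabla\bar y_\tau,\nabla^2\bar y_\tau)$ supplied by Corollary \ref{corollary:WConti}''. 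That corollary presupposes \emph{strong} $W^{2,p}$-convergence, which is precisely what is not yet available at this stage (it is only obtained later, in Lemma \ref{lemma:ImprovedY}), and equi-integrability fails in general: by \eqref{eq:WCoer} the integrand is controlled by $|\nabla^2\bar y_\tau|^p$, which is merely bounded in $L^1(\Omega)$ along a weakly $W^{2,p}$-convergent sequence and may concentrate. The step is easily repaired the way the paper does it: bound the term by $\|\gamma(\tilde z_\tau)-\gamma(\bar z_\tau(t))\|_\infty\int_\Omega W(\nabla\bar y_\tau(t),\nabla^2\bar y_\tau(t))\,\mathrm{d}x\le C\,\|\gamma(\tilde z_\tau)-\gamma(\bar z_\tau(t))\|_\infty\to0$, using the \emph{uniform} convergence of both $\tilde z_\tau$ and $\bar z_\tau(t)$ to $z(t)$ coming from the compact embedding $W^{1,\alpha}(\Omega)\hookrightarrow C(\bar\Omega)$ for $\alpha>3$. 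You should also make explicit that the $t$-dependent subsequence is needed only for the deformations, and that the limit of $\nabla\bar z_\tau(t)$ is always $z(t)$, so the full-sequence claim \eqref{eq:zBarStrong} follows by the subsequence principle (in your displayed inequality you already write $(\nabla y(t))^{-\top}$, although $y(t)$ is constructed only in Lemma \ref{lemma:Compactness2}).

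For \eqref{eq:zUnderStrong} your plan (``replay the argument verbatim at the preceding node'') departs from the paper and, under the stated hypotheses, has a gap: the assumption on the loading is formulated only for the left-continuous interpolants, i.e.\ it controls $\ell(k\tau,\bar y_\tau(t),\tilde z_\tau)-\ell(k\tau,\bar y_\tau(t),\bar z_\tau(t))$ at the fixed time $t$, whereas replaying the semi-stability of $z_\tau^{k-1}$ at the node $(k-1)\tau$ requires the analogous inequality for $(\underline y_\tau(t),\underline z_\tau(t))$ with time arguments $(k-1)\tau$, which is not granted by the lemma (it would follow from the stronger continuity of $\ell$ assumed only in Theorem \ref{thm:existence}). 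The paper avoids re-testing the semi-stability altogether: since $L^\alpha$ is uniformly convex, it suffices to show convergence of the gradient norms; from \eqref{eq:zBarStrong} one identifies the weak* limit $\bar f$ in \eqref{eq:NormZWeak} as $\|\nabla z(\cdot)\|_\alpha$ a.e., then proves $\bar f=\underline f$ by the shift relation \eqref{eq:UnderlineBarRelNTK} together with continuity of translations in $L^1(0,T)$, and concludes via \cite[Lemma 3.5]{FraMie06ERCR} that $\|\nabla\underline z_\tau\|_\alpha\to\|\nabla z\|_\alpha$ in $L^1(0,T)$, hence pointwise a.e.\ along a subsequence; this is why \eqref{eq:zUnderStrong} is stated only up to a further subsequence. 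Finally, your closing remark slightly misplaces the role of $\beta>\alpha+\kappa$: within this lemma there is no calibration tension, since the Lipschitz truncations converge in $W^{1,\alpha}$ for \emph{any} $\lambda(\tau)\to\infty$ and one only needs $\tau^\kappa\lambda^\beta(\tau)\to0$; the inequality $\beta>\alpha+\kappa$ is what makes the bad-set estimate in Lemma \ref{lemma:ImprovedY} vanish.
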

\begin{proof}
    We first prove the strong convergence \eqref{eq:zBarStrong}.
    Let $t \in [0,T]$ be given. \ed Recall the definition of $\beta$ below \eqref{eq:NormZWeak}. \EEE We observe that we cannot use the limit $z(t) \in Z$ as a~competitor
    in~the~discrete semi-stability \eqref{eq:InterpSemiStab},
    because it may not lie in $W^{1,\beta}(\Omega)$ \ed and hence the right-hand side of \eqref{eq:InterpSemiStab} would be infinite. \EEE We hence argue with its Lipschitz truncation $\tilde{z}_\tau := (z(t))_{\lambda(\tau)}$
    instead (see Lemma \ref{lemma:Truncation}).
    \ed We recall that \EEE the dependence of the truncation parameter $\lambda$ on the time step $\tau$
    is chosen to blow up sufficiently slow, i.e.
    \begin{align}
    \label{eq:blow-up-speed}
        \lim_{\tau \to 0} \tau^\kappa \lambda^\beta(\tau) = 0.
    \end{align}
    This gives
    \begin{align}
    \label{eq:new-bound}
    \bar{\EEEE}_\tau(t,\bar{y}_\tau(t),\bar{z}_\tau(t))+\mathcal{H}_\tau(\bar{z}_\tau(t))\leq \bar{\EEEE}_\tau(t,\bar{y}_\tau(t),\tilde{z}_\tau(t))+\mathcal{H}_\tau(\tilde{z}_\tau(t))+\RRRR(\tilde{z}_\tau(t)-\bar{z}_\tau(t)).     
    \end{align}

    Thanks to this uniform bound, and to the coercivity assumptions in the statement of the lemma, we extract a~$t$-dependent subsequence $\tau'(t) \to 0$
    (for ease of notation henceforth denoted by $\tau'$)
    for which
    \begin{align}
        \label{eq:y1Weak}
        \bar{y}_{\tau'}(t) &\rightharpoonup \xi(t)
        &&\text{ in } W^{2,p}(\Omega), \\
        \label{eq:Ny1NzWeak}
        (\nabla \bar{y}_{\tau'})^{-\top}\!(t) \nabla \bar{z}_{\tau'}(t)
        & \rightharpoonup (\nabla \xi)^{-\top}\!(t) \nabla z(t)
        &&\text{ in } L^\alpha(\Omega), \\
        \label{eq:RegEstrong}
        \mathcal{H}_{\tau'}(\bar{z}_{\tau'}(t)) &\to E(t) \geq 0, &&
    \end{align}
    for some $\xi(t) \in \YYYY$ \ed and $E(t)\in \RRR$. By \eqref{eq:phiCoer} and by the regularity of $\tilde{z}_{\tau'}$, the term $\int_\Omega \phi((\nabla \xi)^{-\top}(t) \nabla \tilde{z}_{\tau'}) \, \d x$ is finite for every $\tau'$. \EEE
    Subtracting \ed it \EEE
    from both sides of \eqref{eq:new-bound},
    we obtain after a rearrangement
    \begin{align}
    \label{eq:long-ineq}
        &\int_\Omega
          \phi \left((\nabla \bar{y}_{\tau'}(t))^{-\top} \nabla \bar{z}_{\tau'}(t) \right)
          -
          \phi \left((\nabla \xi(t)^{-\top} \nabla \tilde{z}_{\tau'} \right) \, \d x
        +
        \HHHH_{\tau'}(\bar{z}_{\tau'}(t)) \\
        &\nonumber \quad
        \leq
        \int_\Omega
          (\gamma(\tilde{z}_{\tau'}) - \gamma(\bar{z}_{\tau'}(t)))
          W(\nabla \bar{y}_{\tau'}(t), \nabla^2 \bar{y}_{\tau'}(t)) \, \d x \\
        &\nonumber \qquad
        + \RRRR(\tilde{z}_{\tau'} - \bar{z}_{\tau'}(t)) \\
        &\nonumber \qquad
        +
        \int_\Omega
          \phi \left((\nabla \bar{y}_{\tau'}(t))^{-\top} \nabla \tilde{z}_{\tau'} \right)
          -
          \phi \left((\nabla \xi(t))^{-\top} \nabla \tilde{z}_{\tau'} \right) \, \d x \\
        &\nonumber\qquad
        -
        \ell(k{\tau'},\bar{y}_{\tau'}(t),\tilde{z}_{\tau'})
        +
        \ell(k{\tau'},\bar{y}_{\tau'}(t),\bar{z}_{\tau'}(t))
        +
        \mathcal{H}_{\tau'}(\tilde{z}_{\tau'}),
    \end{align}
    where $k{\tau'} \searrow t$ as ${\tau'} \to 0$.

    Since $\phi$ is convex and finite, the first term on the left-hand side of \eqref{eq:long-ineq} is lower semi-continuous
    with respect to the convergence \eqref{eq:Ny1NzWeak}\ed. Recalling that
    \begin{equation}
        \label{eq:trun} \tilde{z}_\tau\to z(t)\quad\text{strongly in }W^{1,\alpha}(\Omega)
    \end{equation}
    (see Lemma \ref{lemma:Truncation}), by \eqref{eq:phiCoer} and \EEE the convergence
    of $\mathcal{H}_{\tau'}(\bar{z}_{\tau'}(t))$ in \eqref{eq:RegEstrong} we have
    \begin{align*}
        0
        \leq
        E(t)
        \leq
        \liminf_{{\tau'} \to 0}
        \int_\Omega
          \phi \left((\nabla \bar{y}_{\tau'})^{-\top}(t) \nabla \bar{z}_{\tau'}(t) \right)
          -
          \phi \left((\nabla \xi)^{-\top}\!(t) \nabla \tilde{z}_{\tau'} \right) \, \d x
        + 
        \mathcal{H}_{\tau'}(\bar{z}_{\tau'}(t)).
    \end{align*}
    Concerning the right-hand side of \eqref{eq:long-ineq},
    from \eqref{eq:y1Weak} and the growth assumptions on $W$ (see \eqref{eq:WCoer}), we deduce
    \begin{align*}
        &\int_\Omega
          (\gamma(\tilde{z}_{\tau'}) - \gamma(\bar{z}_{\tau'}(t)))
          W(\nabla \bar{y}_{\tau'}(t), \nabla^2 \bar{y}_{\tau'}(t)) \, \d x \\
        &\quad
        \leq
        \|\gamma(\tilde{z}_{\tau'}) - \gamma(\bar{z}_{\tau'}(t))\|_\infty
        \int_\Omega
          W(\nabla \bar{y}_{\tau'}(t), \nabla^2 \bar{y}_{\tau'}(t)) \, \d x \\
        &\quad
        \leq
        C
        \|\gamma(\tilde{z}_{\tau'}) - \gamma(\bar{z}_{\tau'}(t))\|_\infty \to 0.
    \end{align*}
    The latter convergence follows from
    the uniform continuity of $\gamma$ on the compact interval $[0,1]$
    and the uniform convergence of $\tilde{z}_{\tau'} - \bar{z}_{\tau'}(t)$ to zero.
    This, in turn, is achieved by the strong convergence of the Lipschitz truncations
    and the weak convergence \eqref{eq:HellyNZ} of $\bar{z}_{\tau'}(t)$
    in $W^{1,\alpha}(\Omega)$,
    both combined with the (compact) Sobolev embedding of $W^{1,\alpha}(\Omega)$ into $C(\bar{\Omega})$ for $\alpha > 3$.

    Concerning the third term on the right-hand side of \eqref{eq:long-ineq},
    the growth condition \eqref{eq:phiCoer} for $\phi$ implies continuity with respect to the following convergences
    \begin{align*}
        (\nabla \bar{y}_{\tau'})^{-\top}\!(t) \nabla \tilde{z}_{\tau'}
        &\to
        (\nabla \xi)^{-\top}\!(t) \nabla z(t)
        \quad
        \text{in } L^\alpha(\Omega), \\
        (\nabla \xi)^{-\top}\!(t) \nabla \tilde{z}_{\tau'}
        &\to
        (\nabla \xi)^{-\top}\!(t) \nabla z(t)
        \quad
        \text{in } L^\alpha(\Omega),
    \end{align*}
    which in turn hold thanks to \eqref{eq:trun}
    and  \eqref{eq:y1Weak}.
    By the Cramer's rule,
    \begin{align*}
        (\nabla \bar{y}_{\tau'})^{-\top}
        =
        \frac{\cof \nabla \bar{y}_{\tau'}}{\det \nabla \bar{y}_{\tau'}}.
    \end{align*}
    \ed On the other hand, by \eqref{eq:y1Weak},
    Corollary \eqref{corollary:HK}, and by the growth condition \eqref{eq:WCoer}, we deduce \EEE
    \begin{align*}
       \sup_{\tau>0} \left\| \frac{1}{\det \nabla \bar{y}_{\tau'}} \right\|_\infty \leq C.
    \end{align*}
    \ed Thus, by the compact embedding of $W^{2,p}(\Omega)$ into $C(\bar{\Omega})$ for $p > 3$, \EEE we obtain
    \begin{equation}
        \label{eq:Ny1Strong}
        (\nabla \bar{y}_{\tau'})^{-\top}\!(t)
        \to (\nabla \xi)^{-\top}\!(t)
        \quad\text{ in } L^\infty(\Omega).
\end{equation}
    \ed By \eqref{eq:trun} and \EEE
    by the assumptions on the loading we infer that
    \begin{align*}
        &\limsup_{{\tau'} \to 0}
            \left(
            - \ell(k{\tau'},\bar{y}_{\tau'}(t),\tilde{z}_{\tau'})
            + \ell(k{\tau'},\bar{y}_{\tau'}(t),\bar{z}_{\tau'}(t))
            \right)
        \leq 0.
    \end{align*}

    Finally, the regularization term on the right-hand side satisfies
    \begin{align}
    \label{eq:reg-zero1}
        \mathcal{H}_{\tau'}(\tilde{z}_{\tau'})
        =
        (\tau')^\kappa \int_\Omega |\nabla \tilde{z}_{\tau'}|^\beta \, \d x
        \leq
        C \LLLL^3(\Omega) (\tau')^\kappa \|\tilde{z}_{\tau'}\|_{1,\infty}^\beta
        \leq
        C \LLLL^3(\Omega) (\tau')^\kappa \lambda^\beta({\tau'})
        \to 0,
    \end{align}
    owing to \eqref{eq:blow-up-speed}.

    Altogether we have
    \begin{align*}
        \limsup_{{\tau'} \to 0}
        \int_\Omega
          \phi \left((\nabla \bar{y}_{\tau'})^{-\top}\!(t) \nabla \bar{z}_{\tau'}(t) \right)
          -
          \phi \left( (\nabla \xi)^{-\top}\!(t) \nabla z(t) \right) \, \d x
        \leq
        0,
    \end{align*}
    as well as $E(t) = 0$.
    Consequently, we deduce
    \begin{align*}
        \int_\Omega
          \phi \left( (\nabla \bar{y}_{\tau'})^{-\top}\!(t) \nabla \bar{z}_{\tau'}(t) \right) \, \d x
        \to
        \int_\Omega
          \phi \left( (\nabla \xi)^{-\top}\!(t) \nabla z(t) \right) \, \d x.
    \end{align*}
    By the strict convexity of $\phi$
    it even holds (cf. \cite{Visintin1984})
    \begin{align*}
        (\nabla \bar{y}_{\tau'})^{-\top}\!(t) \nabla \bar{z}_{\tau'}(t)
        \to
        (\nabla \xi)^{-\top}\!(t) \nabla z(t)
        \quad
        \text{in } L^\alpha(\Omega).
    \end{align*}
    \ed By \eqref{eq:y1Weak}, and by the compact embedding of $W^{1,p}(\Omega)$ into $C(\bar{\Omega})$, we find \EEE
    $(\nabla \bar{y}_{\tau'})^{\top}\!(t) \to (\nabla \xi)^{\top}\!(t)$ in $L^\infty(\Omega)$.
    \ed Thus, we conclude that \EEE
    \begin{align}
    \label{eq:conv-grad-z}
        \nabla \bar{z}_{\tau'}(t) \to \nabla z(t)
        \quad
        \text{in } L^{\alpha}(\Omega).
    \end{align}
   Since the same procedure applies to any subsequence $\tau' \to 0$, we deduce that
  the convergence holds for the full sequence $\{\bar{z}_\tau\}$.

    \ed We proceed by showing \EEE the strong convergence
    of $\{ \underline{z}_\tau(t) \}$ in $W^{1,\alpha}(\Omega)$.
    Since for $1 < \alpha < +\infty$ the space $L^\alpha(\Omega)$
    is uniformly elliptic,
    it suffices to show the convergence of the norms of the gradients.
    Using the already proven strong convergence \eqref{eq:zBarStrong}
    and Lebesgue's Theorem we infer \ed that
    for almost every $t \in [0,T]$ there holds $\bar{f}(t) = \| \nabla z(t) \|_\alpha$, where $\bar{f}$ is the map introduced in \eqref{eq:NormZWeak}.
    The weak convergence \eqref{eq:HellyNZ} implies
    \begin{align*}
        \forall t \in [0,T] \setminus J:
        \quad
        \| \nabla z(t) \|_\alpha
        \leq
        \liminf_{\tau \to 0} \| \nabla \underline{z}_{\tau}(t) \|_\alpha
        =
        \underline{f}^{\text{inf}}(t),
    \end{align*}
    where we used  \eqref{eq:ZsEqual}.
    Recalling \eqref{eq:NormZUnderIneq} we have
    for almost every $t \in (0,T)$
    \begin{align}
        \label{eq:fBarUnderIneq}
        \bar{f}(t)
        = \| \nabla z(t) \|_\alpha
        \leq \underline{f}^{\text{inf}}(t) \leq \underline{f}(t).
    \end{align}
    \ed To conclude the proof, we need \EEE to show the equality of $\bar{f}$ and $\underline{f}$ in $L^\infty(0,T)$.
    This is based on the relation \eqref{eq:UnderlineBarRelNTK}
    between the left- and right-continuous interpolants,
    holding almost everywhere in $(0,T)$.
    For every $\varphi \in L^1(0,T)$ we use the substitution
    $t = s + \tau$ and obtain
    \begin{align*}
        &\int_0^T \| \nabla \underline{z}_\tau(t) \|_\alpha \varphi(t) \, \d t
        =
        \int_0^\tau
            \| \nabla \underline{z}_\tau(t) \|_\alpha \varphi(t) \,
        \d t
        +
        \int_0^T
            \| \nabla \bar{z}_\tau(s) \|_\alpha
            \chi_{(0,T-\tau)}(s) \varphi(s+\tau) \,
        \d s \\
        & \quad
        \to
        \int_0^T \bar{f} \varphi(s) \, \d s,
    \end{align*}
    where the convergence follows by the absolute continuity of the Lebesgue integral
    and the strong continuity of translations in $L^1(0,T)$.
    Hence the inequalities in \eqref{eq:fBarUnderIneq} are in fact equalities
    and we apply \cite[Lemma 3.5]{FraMie06ERCR} to conclude that
    \begin{align*}
        \| \nabla \underline{z}_\tau \|_\alpha
        \to
        \| \nabla z \|_\alpha
        \quad
        \text{in } L^1(0,T).
    \end{align*}
    In particular, there is a~subsequence such that for almost every $t \in (0,T)$ we have
    $\| \nabla \underline{z}_\tau(t) \|_\alpha \to \| \nabla z(t) \|_\alpha$.
\end{proof}

\begin{remark}[Unidirectional Damage]
    \label{rem:UniDam}
    In the setting of unidirectional damage, for which
    $\dom \RRRR = \{v \in L^1(\Omega): v \leq 0 \text{ a.e. in } \Omega\}$\ed, instead of arguing with the Lipschitz truncations provided by Lemma \ref{lemma:Truncation},
    we can use a slight modification of a construction in
    \cite{MielkeRoubicek2006} and set
    \begin{align*}
        \tilde{z}_\tau :=
        \left(
            z(t))_{\lambda(\tau)}
            -\|\underline{z}_\tau(t) - (z(t))_{\lambda(\tau)}\|_\infty
        \right)^+
        \to
        z(t) \quad \text{in } W^{1,\alpha}(\Omega).
    \end{align*}
    \ed Since in our proof we need to have $\alpha>3$ ,
    \ed we cannot profit from the construction in \cite{MielkeThomas2010} which would have required a lower values of $\alpha$. \EEE
\end{remark}

%%% ============
\subsection{Selection of $t$-Dependent Subsequences}
\label{subs:t-dep}

Starting from the subsequence $\tau \to 0$ identified in Lemma \ref{lemma:ImprovedZ}, and for which
in particular \eqref{eq:zUnderStrong} holds,
we now select further $t$-dependent subsequences whose limits will define the solution
$y \in \B([0,T];\YYYY)$.
As in \cite{FraMie06ERCR}, this procedure allows to work only with subsequences which in the limit
`maximize the \ed work of the loading' \EEE.
\begin{lemma}[Compactness II - $t$-Dependent subsequences]
    \label{lemma:Compactness2}
    Let the hypotheses of Lemma \ref{lemma:ImprovedZ} be satisfied.
    Then for almost every $t \in (0,T)$
	\begin{align}
		\label{eq:ThetaIneq}
		\bar{\theta}(t)
		\leq
		\bar{\theta}^{\text{sup}}(t)
		:=
		\limsup_{\tau \to 0} \bar{\theta}_\tau(t),
	\end{align}
	and for all $t \in [0,T]$ there exists a $t$-dependent subsequence $\{\tau(t)\}$ such that
    \begin{align}
        \label{eq:ThetaPW}
        \bar{\theta}_{\tau(t)}(t) &\to \bar{\theta}^\text{sup}(t), && \\
        \label{eq:yWeak}
        \bar{y}_{\tau(t)}(t) &\rightharpoonup y(t)
        &&\text{ in } W^{2,p}(\Omega),
    \end{align}
    where $y \in \B([0,T];\YYYY)$. Consequently,
    \begin{align}
        \label{eq:NyStrong}
        (\nabla \bar{y}_{\tau(t)})^{-\top}\!(t)
        & \to (\nabla y)^{-\top}\!(t)
        &&\text{ in } L^\infty(\Omega), \\
        \label{eq:NyNzWeak}
        (\nabla \bar{y}_{\tau(t)})^{-\top}\!(t) \nabla \bar{z}_{\tau(t)}(t)
        & \to (\nabla y)^{-\top}\!(t) \nabla z(t)
        &&\text{ in } L^\alpha(\Omega).
    \end{align}
    Moreover, for every $t\in [0,T]$ there exists a~sequence
    $\{v_{\tau(t)}(t)\} \subset W^{1,\alpha}(\Omega)$ such that
    \begin{align}
        \label{eq:w1alpha}\|v_{\tau(t)}(t)\|_{1,\infty} &\leq \frac{C(\alpha,\Omega,t)}{\tau(t)}, \\
        \label{eq:meas-Mtau}\LLLL^3 (M_{\tau(t)}(t)) &\leq C(t) (\tau(t))^\alpha, \\
        \label{eq:equiint}\{|\nabla v_{\tau(t)}(t)|^\alpha\} & \text{ is equiintegrable},
    \end{align}
    where
    $M_{\tau(t)}(t) := \{ x \in \Omega : \underline{z}_{\tau(t)}(t) \neq v_{\tau(t)}(t)
        \text{ or } \nabla \underline{z}_{\tau(t)}(t) \neq \nabla v_{\tau(t)}(t) \}$.
\end{lemma}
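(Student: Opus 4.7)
The strategy is to treat the four groups of conclusions in order, essentially mimicking Step~II of Lemma~\ref{lemma:Compactness1} but at the level of each fixed time~$t$.

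\emph{Step 1: Inequality \eqref{eq:ThetaIneq}.} From \eqref{eq:ThetaWeak}, $\bar\theta_\tau \rightharpoonup \bar\theta$ in $L^1(0,T)$, and from Lemma~\ref{lemma:Compactness1} the family $\{|\bar\theta_\tau|\}$ is equiintegrable. Hence, splitting $\bar\theta_\tau$ into its part bounded by $M>0$ (on which the reverse Fatou lemma applies directly) and a small remainder (controlled uniformly by equiintegrability) and sending $M\to\infty$, one obtains for every measurable $A\subset (0,T)$
\[
\int_A \bar\theta\, \d t
\;=\;\lim_{\tau\to 0}\int_A \bar\theta_\tau\, \d t
\;\leq\;\int_A \limsup_{\tau\to 0}\bar\theta_\tau\, \d t
\;=\;\int_A \bar\theta^{\mathrm{sup}}\, \d t.
\]
Since $A$ is arbitrary, $\bar\theta(t)\leq \bar\theta^{\mathrm{sup}}(t)$ for almost every $t\in(0,T)$.

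\emph{Step 2: Extraction of $\tau(t)$, \eqref{eq:ThetaPW}, \eqref{eq:yWeak}, and $y\in \B([0,T];\YYYY)$.} Fix $t\in[0,T]$. By the very definition of $\bar\theta^{\mathrm{sup}}(t)$, pick $\tau_n=\tau_n(t)\to 0$ with $\bar\theta_{\tau_n}(t)\to \bar\theta^{\mathrm{sup}}(t)$. The uniform bound \eqref{eq:n1} together with reflexivity of $W^{2,p}(\Omega)$ allows to extract a further, not relabelled, subsequence $\tau(t):=\tau_n$ such that $\bar y_{\tau(t)}(t)\wto y(t)$ in $W^{2,p}(\Omega)$. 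To check that $y(t)\in \YYYY$: Corollary~\ref{corollary:HK} upgrades the uniform estimate $\|(\det\nabla \bar y_{\tau})^{-1}\|_{L^s(\Omega)}\leq C$ into a uniform pointwise lower bound $\det\nabla \bar y_{\tau}(t)\geq\varepsilon>0$ on $\overline{\Omega}$, which passes to the limit thanks to the compact embedding $W^{2,p}(\Omega)\Subset C^1(\overline{\Omega})$ for $p>3$; lower semicontinuity of the $L^s$-norm gives $(\det\nabla y(t))^{-1}\in L^s(\Omega)$; and the Ciarlet--Ne\v{c}as condition \eqref{eq:C-N} is preserved under $W^{2,p}$-weak convergence by the classical argument of \cite{CiaNec87ISCN}. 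All pointwise estimates are uniform in $t$, so $y\in \B([0,T];\YYYY)$.

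\emph{Step 3: Convergences \eqref{eq:NyStrong} and \eqref{eq:NyNzWeak}.} The compact embedding $W^{2,p}(\Omega)\Subset C^1(\overline{\Omega})$ turns \eqref{eq:yWeak} into $\nabla \bar y_{\tau(t)}(t)\to \nabla y(t)$ in $L^\infty(\Omega)$, and hence also $\cof\nabla \bar y_{\tau(t)}(t)\to \cof\nabla y(t)$ in $L^\infty(\Omega)$. Combined with the uniform pointwise lower bound on the determinant from Corollary~\ref{corollary:HK} and Cramer's rule $(\nabla y)^{-\top}=\cof\nabla y/\det\nabla y$, this yields \eqref{eq:NyStrong}. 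The product in \eqref{eq:NyNzWeak} then converges strongly in $L^\alpha(\Omega)$ upon multiplying the $L^\infty$-convergent sequence $(\nabla\bar y_{\tau(t)})^{-\top}(t)$ by $\nabla\bar z_{\tau(t)}(t)$, whose strong convergence in $L^\alpha(\Omega)$ was established in Lemma~\ref{lemma:ImprovedZ} (see \eqref{eq:zBarStrong} and \eqref{eq:zUnderStrong}, using that $z(t)=\underline z(t)$ off the jump set $J$).

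\emph{Step 4: Construction of $v_{\tau(t)}(t)$.} For each $t$ we apply Lemma~\ref{lemma:Truncation} to $\underline z_{\tau(t)}(t)\in W^{1,\alpha}(\Omega)$ at the Lipschitz level $\lambda=1/\tau(t)$, setting $v_{\tau(t)}(t):=(\underline z_{\tau(t)}(t))_{1/\tau(t)}$. The Lipschitz-truncation bound gives \eqref{eq:w1alpha}; the standard Chebyshev-type estimate for the bad set on which $v_{\tau(t)}(t)$ differs from $\underline z_{\tau(t)}(t)$ reads
\[
\LLLL^3(M_{\tau(t)}(t))\;\leq\;\frac{C}{\lambda^\alpha}\,\|\nabla \underline z_{\tau(t)}(t)\|_{L^\alpha(\Omega)}^{\alpha}\;\leq\; C(t)\,\tau(t)^\alpha,
\]
where we used the uniform bound \eqref{eq:n3} on $\|\nabla \underline z_{\tau}\|_{L^\infty((0,T);L^\alpha(\Omega))}$, yielding \eqref{eq:meas-Mtau}. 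Equiintegrability of $\{|\nabla v_{\tau(t)}(t)|^\alpha\}$ is a standard by-product of the truncation procedure (cf.\ Acerbi--Fusco), giving \eqref{eq:equiint}.

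The main subtlety is Step~2: one needs $y(t)\in\YYYY$ pointwise in~$t$, which is a property that is not obviously preserved under the sole weak $W^{2,p}$-convergence. The Healey--Kr\"omer enhancement of $(\det\nabla \bar y_\tau)^{-1}\in L^s$ into a uniform pointwise lower bound on the determinant is exactly what makes Steps~2 and 3 go through, and in particular what permits the strong $L^\infty$-convergence of the cofactor--over--determinant in \eqref{eq:NyStrong}, on which the product convergence \eqref{eq:NyNzWeak} crucially rests.
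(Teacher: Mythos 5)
Steps 1--3 of your proposal are correct and essentially coincide with the paper's argument: \eqref{eq:ThetaPW} and \eqref{eq:yWeak} by definition of the $\limsup$ plus the uniform bounds \eqref{eq:n1}; membership $y(t)\in\YYYY$ via the determinant control (the paper uses pointwise convergence of $\det\nabla\bar y_{\tau(t)}(t)$ and Fatou, you use the uniform Healey--Kr\"omer lower bound --- both work); \eqref{eq:NyStrong} via Cramer's rule, the compact embedding $W^{2,p}\Subset C^1(\overline\Omega)$ and Corollary \ref{corollary:HK}; and \eqref{eq:NyNzWeak} from \eqref{eq:zBarStrong}. Your extra detail on \eqref{eq:ThetaIneq} (reverse Fatou after truncation) is fine; note it is even simpler here because Step I of Lemma \ref{lemma:Compactness1} provides the explicit integrable majorant $\lambda(t)\e^{\Lambda(t)}(\EEEE(0,y^0,z^0)+C^0+1)$.

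Step 4, however, contains a genuine gap. The plain Lipschitz truncation of Lemma \ref{lemma:Truncation} at level $\lambda=1/\tau(t)$ delivers \eqref{eq:w1alpha} and \eqref{eq:meas-Mtau}, but it does \emph{not} deliver the equiintegrability \eqref{eq:equiint}: outside the exceptional set $M_{\tau(t)}(t)$ one has $\nabla v_{\tau(t)}(t)=\nabla\underline z_{\tau(t)}(t)$, and the family $\{|\nabla\underline z_{\tau}(t)|^\alpha\}$ is only bounded in $L^1(\Omega)$, so it may concentrate (e.g.\ $|\nabla\underline z_{\tau}|\sim a_\tau\chi_{E_\tau}$ with $a_\tau^\alpha\LLLL^3(E_\tau)=1$ and $a_\tau\ll 1/\tau$ survives the truncation untouched and is not $\alpha$-equiintegrable); the extra bound \eqref{eq:RegEBound} in $L^\beta$ does not help because it blows up like $\tau^{-\kappa}$. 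Removing such concentrations is precisely the content of the Fonseca--M\"uller--Pedregal Decomposition Lemma (Lemma \ref{lemma:Decomposition}), which the paper invokes here: it requires passing to a \emph{further} subsequence of $\{\underline z_{\tau(t)}(t)\}$ (a biting-lemma step) before truncating, and only then are the truncated gradients $\alpha$-equiintegrable. Your construction should be replaced by an application of Lemma \ref{lemma:Decomposition} to the bounded sequence $\{\underline z_{\tau(t)}(t)\}\subset W^{1,\alpha}(\Omega)$, refining $\tau(t)$ once more; the quantitative bounds \eqref{eq:w1alpha}--\eqref{eq:meas-Mtau} then follow after relabelling the indices in terms of $\tau(t)$. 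Since \eqref{eq:equiint} is used in an essential way in Lemma \ref{lemma:ImprovedY}, this is not a cosmetic issue.
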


\begin{proof}
    The convergence \eqref{eq:ThetaPW} can be achieved by a proper choice
    of the subsequence. Thanks to the uniform estimates proven in Lemma \ref{lemma:Compactness1}, a further selection procedure
    yields \eqref{eq:yWeak}.
    To prove that $y(t) \in \YYYY$ we first observe that the weak convergence in $W^{2,p}(\Omega)$ in \eqref{eq:yWeak}
    yields pointwise convergence of $\{\det \nabla \bar{y}_{\tau(t)}(t)\}$.
   Hence the uniform energy bounds in Lemma \ref{lemma:InterpSol}, together with Fatou's lemma, imply
    \begin{align*}
        +\infty >\liminf_{\tau\to 0} \int_\Omega \frac{\d x}{(\det \nabla \bar{y}_{\tau(t)}(t))^s}
        \geq
        \int_\Omega \frac{\d x}{(\det \nabla y(t))^s},
    \end{align*}
    i.e. $(\det \nabla y(t))^{-1} \in L^s(\Omega)$ and consequently
    $\det \nabla y(t) > 0$ a.e. in $\Omega$.
    Second, we recall that the Ciarlet-Nečas condition is stable under weak $W^{1,p}$
    convergence; c.f. \cite{CiaNec87ISCN}.

    The strong convergence of $\{(\nabla \bar{y}_{\tau(t)})^{-\top}(t)\}$ in $L^\infty(\Omega)$
    follows by the very same argument as in the proof of Lemma \ref{lemma:ImprovedZ}.
    Property \eqref{eq:NyNzWeak} follows  from \eqref{eq:zBarStrong}.
    The existence of the sequence $\{ v_{\tau(t)}(t) \}$ is a consequence of Lemma \ref{lemma:Decomposition}.
\end{proof}

%%% ============
\subsection{Improving the convergence of the elastic variables}
\label{subs:better-y}

As we argued in Lemma \ref{lemma:ImprovedZ} for the damage variables, we show that for every $t\in [0,T]$ we can extract further $t$-dependent subsequences satisfying stronger compactness properties than those in Lemma \ref{lemma:Compactness1}. Recall the limiting maps identified in Lemma \ref{lemma:Compactness1}.

\begin{lemma}[Improved convergence of the elastic energies]
	\label{lemma:ImprovedY}
    Let the hypotheses of Lemma \ref{lemma:ImprovedZ} hold, and for every $t\in [0,T]$ let $\{\tau(t)\}$ be the subsequence identified in Lemma \ref{lemma:Compactness2}.	
    Let the loading functional $\ell: \RRR \times \YYYY \times Z \to \RRR$
    be such that
        $\liminf_{\tau \to 0}
            \left(
              \ell(k\tau,y(t),\underline{z}_\tau(t))
            - \ell(k\tau,\bar{y}_\tau(t),\underline{z}_\tau(t))
            \right)
        \geq 0.$

    Then, for all $t \in [0,T]$ we have
    \begin{align}
    \label{eq:lim-en}
        \int_\Omega
          \gamma(\underline{z}_{\tau(t)}(t)) W(\nabla \bar{y}_{\tau(t)}(t), \nabla^2 \bar{y}_{\tau(t)}(t)) \, \d x
        \to
        \int_\Omega
          \gamma(\underline{z}(t)) W(\nabla y(t), \nabla^2 y(t)) \, \d x,
    \end{align}
    and
    \begin{align}
        \label{eq:yStrong}
        \bar{y}_{\tau(t)}(t) \to y(t)
        \quad \text{in } W^{2,p}(\Omega).
    \end{align}
    In particular, for every $t \in [0,T]$
    \begin{align}
        \label{eq:ElastConv}
        \int_\Omega
          \gamma(\bar{z}_{\tau(t)}(t)) W(\nabla \bar{y}_{\tau(t)}(t), \nabla^2 \bar{y}_{\tau(t)}(t)) \, \d x
        \to
        \int_\Omega
          \gamma(z(t)) W(\nabla y(t), \nabla^2 y(t)) \, \d x.
    \end{align}
\end{lemma}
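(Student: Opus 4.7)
The plan is to exploit the discrete stability \eqref{eq:InterpStab} along the $t$-dependent subsequence $\{\tau(t)\}$ supplied by Lemma \ref{lemma:Compactness2}, with $\tilde y=y(t)\in\mathcal Y$ as competitor (admissible by the same lemma). Cancelling the regularisation $\mathcal{H}_\tau(\underline z_\tau(t))$, which appears on both sides, one obtains an inequality controlling the bulk $W$-energy by a sum involving $W$ evaluated at the frozen limit $y(t)$, the coupling term $\phi$ with two different deformations, and the loading difference $\ell(k\tau,\bar y_{\tau(t)}(t),\underline z_{\tau(t)}(t))-\ell(k\tau,y(t),\underline z_{\tau(t)}(t))$.

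Next I would pass to the limit term by term. On the right-hand side, the first contribution $\int\gamma(\underline z_{\tau(t)})\,W(\nabla y(t),\nabla^2 y(t))\,dx$ converges to $\int\gamma(\underline z(t))\,W(\nabla y(t),\nabla^2 y(t))\,dx$ because Lemma \ref{lemma:ImprovedZ} yields $\underline z_{\tau(t)}(t)\to\underline z(t)$ strongly in $W^{1,\alpha}$, hence uniformly since $\alpha>3$, while $W(\nabla y(t),\nabla^2 y(t))\in L^1(\Omega)$. Both $\phi$-terms converge to the common limit $\int\phi((\nabla y(t))^{-\top}\nabla\underline z(t))\,dx$: the strong $L^\alpha$-convergence of $\nabla\underline z_{\tau(t)}(t)$ combined with the strong $L^\infty$-convergence \eqref{eq:NyStrong} of $(\nabla\bar y_{\tau(t)})^{-\top}(t)$, which in turn rests on the uniform Healey–Krömer bound (Corollary \ref{corollary:HK}), gives strong $L^\alpha$ convergence of the arguments of $\phi$, so the $\alpha$-growth \eqref{eq:phiCoer} transfers the limit to the integrals. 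The loading difference has $\limsup\le 0$ by the explicit hypothesis on $\ell$. On the left-hand side, I would freeze $\gamma(\underline z(t))$ (the error against $\gamma(\underline z_{\tau(t)})$ vanishes because $W(\nabla\bar y_{\tau(t)},\nabla^2\bar y_{\tau(t)})$ is bounded in $L^1$ and $\gamma$ is uniformly continuous on $[0,1]$), and use the convexity of $W$ in its Hessian argument together with the uniform convergence of $\nabla\bar y_{\tau(t)}$ (from Rellich–Kondrachov in $W^{2,p}$ with $p>3$) to deduce
\begin{equation*}
\liminf_{\tau\to 0}\int_\Omega\gamma(\underline z_{\tau(t)}(t))\,W(\nabla\bar y_{\tau(t)}(t),\nabla^2\bar y_{\tau(t)}(t))\,dx\ \ge\ \int_\Omega\gamma(\underline z(t))\,W(\nabla y(t),\nabla^2 y(t))\,dx.
\end{equation*}
Since the two $\phi$-terms have the same limit and thus cancel asymptotically, the sandwich closes and delivers \eqref{eq:lim-en}.

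The strong convergence \eqref{eq:yStrong} then follows by applying Lemma \ref{lemma:Visintin} with $h(x,z,F,G)=\gamma(z)\,W(F,G)$, strictly convex in $G$ thanks to the higher-order contribution in $W$, together with the uniform convergence of $\underline z_{\tau(t)}(t)$ and the weak $W^{2,p}$-convergence of $\bar y_{\tau(t)}(t)$. Once $\bar y_{\tau(t)}(t)\to y(t)$ strongly in $W^{2,p}$ is established, Corollary \ref{corollary:WConti} yields $W(\nabla\bar y_{\tau(t)},\nabla^2\bar y_{\tau(t)})\to W(\nabla y,\nabla^2 y)$ in $L^1(\Omega)$, and combining this with the uniform convergence $\gamma(\bar z_{\tau(t)}(t))\to\gamma(z(t))$—from the strong $W^{1,\alpha}$-convergence \eqref{eq:zBarStrong} and the embedding $W^{1,\alpha}\hookrightarrow C(\overline\Omega)$—proves \eqref{eq:ElastConv} immediately.

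The main obstacle is ensuring that the two $\phi$-terms on opposite sides of the discrete stability share the same limit: this requires the full strength of the strong convergence $(\nabla\bar y_{\tau(t)})^{-\top}(t)\to(\nabla y(t))^{-\top}$ in $L^\infty(\Omega)$, which in turn is non-trivial because it relies on the uniform lower bound on $\det\nabla\bar y_\tau$ provided by the Healey–Krömer argument and on the compact Sobolev embedding $W^{2,p}\hookrightarrow C^1(\overline\Omega)$ for $p>3$. A secondary subtlety is the admissibility of $y(t)$ as a competitor, which rests on Lemma \ref{lemma:Compactness2} confirming $y(t)\in\mathcal Y$—in particular, on the stability of the Ciarlet–Nečas condition under weak $W^{1,p}$-convergence and on Fatou's lemma applied to $(\det\nabla\bar y_{\tau(t)})^{-s}$.
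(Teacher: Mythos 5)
Your overall architecture matches the paper's: test the discrete stability \eqref{eq:InterpStab} with $y(t)$, pass to the limit term by term, close the sandwich with weak lower semicontinuity on the left, then invoke Lemma \ref{lemma:Visintin} for \eqref{eq:yStrong} and Corollary \ref{corollary:WConti} for \eqref{eq:ElastConv}. However, there is a genuine gap in your treatment of the $\phi$-difference
\begin{equation*}
\int_\Omega \phi\bigl((\nabla y)^{-\top}(t)\nabla\underline{z}_\tau(t)\bigr)
- \phi\bigl((\nabla\bar{y}_\tau)^{-\top}(t)\nabla\underline{z}_\tau(t)\bigr)\,\d x .
\end{equation*}
You send both terms to the common limit $\int_\Omega\phi((\nabla y(t))^{-\top}\nabla\underline{z}(t))\,\d x$ using the \emph{strong} $L^\alpha$-convergence of $\nabla\underline{z}_{\tau(t)}(t)$, i.e.\ \eqref{eq:zUnderStrong}. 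But \eqref{eq:zUnderStrong} holds only for \emph{almost every} $t\in(0,T)$, whereas the lemma asserts \eqref{eq:lim-en}--\eqref{eq:ElastConv} for \emph{all} $t\in[0,T]$ — and the all-$t$ version is what Lemma \ref{lemma:EnIneq} actually needs, since the energy inequality is evaluated at arbitrary endpoints $t_1,t_2$. At an exceptional $t$ you only have the weak convergence \eqref{eq:HellyNZ}, and then, even though $(\nabla\bar{y}_\tau)^{-\top}(t)\to(\nabla y)^{-\top}(t)$ uniformly, a concentration of $\{\nabla\underline{z}_\tau(t)\}$ on a small set can keep the $\phi$-difference bounded away from zero (the integrand is of order $|\nabla\underline{z}_\tau(t)|^\alpha$ times a small uniform factor, which is not enough without equiintegrability of $|\nabla\underline{z}_\tau(t)|^\alpha$).

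The paper's proof closes exactly this hole, and it is the raison d'\^etre of the discrete regularization $\mathcal{H}_\tau$: one splits $\Omega$ via the equiintegrable substitutes $v_{\tau(t)}(t)$ from Lemma \ref{lemma:Compactness2} (i.e.\ the Decomposition Lemma \ref{lemma:Decomposition}). On the bad set $M_{\tau}(t)$ one combines the growth \eqref{eq:phiCoer}, the measure bound $\LLLL^3(M_\tau(t))\leq C\tau^\alpha$, and the regularization estimate $\int_\Omega|\nabla\underline{z}_\tau(t)|^\beta\,\d x\leq C\tau^{-\kappa}$ from \eqref{eq:RegEBound}, so that H\"older's inequality yields a bound of order $\tau^{\alpha(1-(\alpha+\kappa)/\beta)}\to 0$; on the good set one proves convergence in measure of the integrand (via Markov's inequality and uniform continuity of $\phi$ on compacta) and concludes by Vitali's theorem using the equiintegrability of $|\nabla v_{\tau}(t)|^\alpha$. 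Without this (or an equivalent device), your argument only establishes the lemma for a.e.\ $t$, which is insufficient for the subsequent passage to the energy inequality. The remaining steps of your proposal (the $\gamma$-term via uniform convergence of $\underline{z}_\tau(t)$, the loading term, the lower semicontinuity of the left-hand side, and the deduction of \eqref{eq:yStrong} and \eqref{eq:ElastConv}) are in line with the paper.
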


\begin{proof}
    Let $t \in [0,T]$.
    For ease of notation throughout this proof we will simply write
    $\tau$ instead of $\tau(t)$.
    We point out, nevertheless, that all sequences will be $t$-dependent.
    By using the limit $y(t) \in \YYYY$ as a~competitor
    in~the~discrete stability \eqref{eq:InterpStab}
    and subtracting from both sides the term $\int_\Omega \gamma(\underline{z}(t)) W(\nabla y(t), \nabla^2 y(t))\,\d x$,
    we obtain after rearranging some terms
    \begin{align}
    \label{eq:improved-z}
        &\int_\Omega
          \left(\gamma(\underline{z}_\tau(t)) W(\nabla \bar{y}_\tau(t), \nabla^2 \bar{y}_\tau(t))
          -
          \gamma(\underline{z}(t)) W(\nabla y(t), \nabla^2 y(t))\right) \, \d x \\
          &\nonumber \quad
          \leq
          \int_\Omega
            (\gamma(\underline{z}_\tau(t)) - \gamma(\underline{z}(t)))
            W(\nabla y(t), \nabla^2 y(t)) \, \d x \\
          &\nonumber \qquad
          +
          \int_\Omega
            \phi \left( (\nabla y)^{-\top}\!(t) \nabla \underline{z}_\tau(t) \right)
            -
            \phi
            \left(
                (\nabla \bar{y}_\tau)^{-\top}\!(t) \nabla \underline{z}_\tau(t)
            \right) \, \d x \\
          &\nonumber \qquad
          -
          \ell(k\tau,y(t),\underline{z}_\tau(t))
          +
          \ell(k\tau,\bar{y}_\tau(t),\underline{z}_\tau(t)),
    \end{align}
    where $k\tau \searrow t$ as $\tau \to 0$.
    Note that the regularizing terms $\mathcal{H}_\tau(\underline{z}_\tau(t))$
    are finite thanks to~the~uniform estimate \eqref{eq:UniformEst},
    and hence have canceled each other.
    Since $\gamma$ and $W$ are lower semi-continuous, by the convexity of $W$ in its last argument and by classical Sobolev embeddings,
    we have lower semicontinuity of the left-hand side of \eqref{eq:improved-z}
    with respect to the convergences \eqref{eq:yWeak} and \eqref{eq:HellyNZ}. Namely,
    \begin{align}
    \label{eq:liminf-en}
        0
        \leq
        \liminf_{\tau \to 0}
        \int_\Omega
          \left(\gamma(\underline{z}_\tau(t)) W(\nabla \bar{y}_\tau(t), \nabla^2 \bar{y}_\tau(t))
          -
          \gamma(\underline{z}(t)) W(\nabla y(t), \nabla^2 y(t))\right)\,\d x;
    \end{align}
    see e.g. \cite{Eise79SLSMSLSC} or  \cite[Cor. 7.9]{FonLeo07MMCV}. 

    Concerning the right-hand side of \eqref{eq:improved-z},
    for every $t\in[0,T]$ we have
    \begin{align*}
        &\int_\Omega
          (\gamma(\underline{z}_\tau(t)) - \gamma(\underline{z}(t)))
          W(\nabla y(t), \nabla^2 y(t))\,\d x \\
        &\quad
        \leq
        \|\gamma(\underline{z}_\tau(t)) - \gamma(\underline{z}(t))\|_\infty
        \int_\Omega
          W(\nabla y(t), \nabla^2 y(t))\,\d x \\
        &\quad
        \leq
        C
        \|\gamma(\underline{z}_\tau(t)) - \gamma(\underline{z}(t))\|_\infty \to 0,
    \end{align*}
    since $\gamma$ is continuous and hence uniformly continuous on $[0,1]$. The second inequality is a consequence of the fact that $y(t)\in \mathcal{Y}$ for almost every $t\in[0,T]$, and of the growth condition \eqref{eq:WCoer}.
    The uniform convergence $\|\underline{z}_\tau(t) - \underline{z}(t)\|_\infty \to 0$
    follows by \eqref{eq:zUnderStrong}, and by the compact Sobolev embedding of $W^{1,\alpha}(\Omega)$ into $C(\bar{\Omega})$ for $\alpha > 3$. 

    For the second term on the right-hand side of \eqref{eq:improved-z},
    we need to exclude concentrations of $\{\nabla \underline{z}_\tau(t)\}$.
    Let us recall that the improved convergence of $\nabla \underline{z}_\tau(t)$
    in \eqref{eq:zUnderStrong} holds only almost everywhere in $(0,T)$,
    while we aim to prove \eqref{eq:ElastConv} for all times.
    Although $\{\nabla \bar{y}_\tau(t)\}$ converges uniformly,
    a blow up
    of $\{\nabla \underline{z}_\tau(t)\}$ may still keep the difference bounded away from zero.
    We hence split this term using the equiintegrable sequence $\{\nabla v_\tau(t)\}$
    provided by Lemma \ref{lemma:Compactness2}, and satisfying \eqref{eq:w1alpha}--\eqref{eq:equiint}.
    On the `bad' set $M_\tau(t)$ we use the growth and coercivity condition
    \eqref{eq:phiCoer}, as well as the bound \eqref{eq:RegEBound}
    in $L^\beta(\Omega)$ and the convergence in \eqref{eq:NyStrong}. By \eqref{eq:meas-Mtau} we obtain
    \begin{align*}
        &\limsup_{\tau \to 0}
        \int_{M_\tau(t)}
            \left|
                \phi
                \left(
                    (\nabla y)^{-\top}\!(t) \nabla \underline{z}_\tau(t)
                \right)
                -
                \phi
                \left(
                    (\nabla \bar{y}_\tau)^{-\top}\!(t) \nabla \underline{z}_\tau(t)
                \right)
            \right| \,
        \d x \\
        &\quad
        \leq
        C
        \limsup_{\tau \to 0}
        \int_{M_\tau(t)}
            |\nabla \underline{z}_\tau(t)|^\alpha \,
        \d x
        \leq
        C
        \limsup_{\tau \to 0}
        \mathcal{L}^3(M_\tau(t))^{1-\frac{\alpha}{\beta}}
        \left(
            \int_\Omega |\nabla \underline{z}_\tau(t)|^\beta
        \right)^{\frac{\alpha}{\beta}}  \\
        &\quad
        \leq
        C
        \limsup_{\tau \to 0}
        |\tau^\alpha|^{1-\frac{\alpha}{\beta}}
        \left( \tau^{-\kappa} \right)^{\frac{\alpha}{\beta}}
        \leq
        C
        \limsup_{\tau \to 0}
        \tau^{\alpha(1-(\alpha+\kappa)/\beta)}
        =
        0,
    \end{align*}
    for $\alpha + \kappa < \beta$.
    On the good set $\Omega\setminus M_{\tau}(t)$, in view of \eqref{eq:phiCoer}, \eqref{eq:NyStrong}, and \eqref{eq:equiint}, the sequence $\{\phi
            \left(
                (\nabla y)^{-\top}\!(t) \nabla \underline{z}_\tau(t)
            \right)
            -
            \phi
            \left(
                (\nabla \bar{y}_\tau)^{-\top}\!(t) \nabla \underline{z}_\tau(t)
            \right)\}$ is equiintegrable. 
    Therefore, since $\LLLL^3(\Omega) < +\infty$ it suffices to show that
    \begin{align}
    \label{eq:claim-meas-conv}
        \left|
            \phi
            \left(
                (\nabla y)^{-\top}\!(t) \nabla \underline{z}_\tau(t)
            \right)
            -
            \phi
            \left(
                (\nabla \bar{y}_\tau)^{-\top}\!(t) \nabla \underline{z}_\tau(t)
            \right)
        \right|
        \to 0
        \quad
        \text{in measure}.
    \end{align}
    Equivalently, we need to show that for every
    $\varepsilon > 0$ and $n \in \NNN$
    there exists $\tau_0 > 0$ such that for all $0<\tau<\tau_0$,
    \begin{align*}
        \LLLL^3(E_{\varepsilon,\tau})
        :=
        \LLLL^3
        \left(
        \left\{
            \left|
            \phi
            \left(
                (\nabla y)^{-\top}\!(t) \nabla \underline{z}_\tau(t)
            \right)
            -
            \phi
            \left(
                (\nabla \bar{y}_\tau)^{-\top}\!(t) \nabla \underline{z}_\tau(t)
            \right)
        \right|
        \geq
        \varepsilon
        \right\}
        \right)
        <
        \frac{1}{n}.
    \end{align*}
    Let then $\varepsilon > 0$ and $n \in \NNN$.
    By the Markov's inequality and the uniform bound
    on $\{\|\nabla \underline{z}_\tau(t)\|_\alpha\}$ given by \eqref{eq:zUnderStrong}, there exists $C_n>0$
    such that for all $\tau > 0$ \ed there holds \EEE
    \begin{align*}
        \LLLL^3(E_\tau)
        :=
        \LLLL^3(\{ |\nabla \underline{z}_\tau(t)| \geq C_n \}) < \frac{1}{n}.
    \end{align*}
    Since $\{\nabla \bar{y}_\tau(t)\}$ is bounded in $L^\infty(\Omega)$ by \eqref{eq:NyStrong},
    we have that in the set $\Omega \setminus E_\tau$
    both $\{(\nabla y)^{-\top}\!(t) \nabla \underline{z}_\tau(t)\}$
    and $\{(\nabla \bar{y}_\tau(t))^{-\top}\!(t) \nabla \underline{z}_\tau(t)\}$
    take value in a bounded, and hence compact, subset $K$ of $\RRR^3$.
    On the one hand, the continuous function $\phi$ is  uniformly continuous on $K$, and there exists $\delta>0$ such that for every $x_1,x_2\in K$ with $|x_1-x_2|<\delta$ there holds $|\phi(x_1)-\phi(x_2)|<\varepsilon$. On the other hand, by 
    \eqref{eq:NyStrong},
    there exists $\tau_0>0$ dependent on $n$ and such that for all $0 < \tau < \tau_0$
    \begin{align*}
        |(\nabla \bar{y}_\tau(t))^{-\top}\!(t) - (\nabla y(t))^{-\top}\!(t)|
        |\nabla \underline{z}_\tau(t)| < \delta
    \end{align*}
    on $\Omega\setminus E_\tau$.
    For these $\tau$ we therefore have $E_{\varepsilon,\tau} \subset E_\tau$
    and hence $\LLLL^3(E_{\varepsilon,\tau}) \leq \LLLL^3(E_\tau) < 1/n$. This in turn yields \eqref{eq:claim-meas-conv}.

    Finally, by assumption the loading satisfies
    \begin{align*}
        &\limsup_{\tau \to 0}
            \left(
            - \ell(k\tau,y(t),\underline{z}_\tau(t))
            + \ell(k\tau,\bar{y}_\tau(t),\underline{z}_\tau(t))
            \right)
        \leq 0.
    \end{align*}

    Altogether, we conclude that
    \begin{align}
    \label{eq:limsup-en}
        \limsup_{\tau \to 0}
        \int_\Omega
          \gamma(\underline{z}_\tau(t)) W(\nabla \bar{y}_\tau(t), \nabla^2 \bar{y}_\tau(t))
          -
          \gamma(\underline{z}(t)) W(\nabla y(t), \nabla^2 y(t)) \,
          \d x
        \leq
        0.
    \end{align}
    Combining \eqref{eq:liminf-en} with \eqref{eq:limsup-en} we deduce \eqref{eq:lim-en}.
    By Lemma \ref{lemma:Visintin}, we infer \eqref{eq:yStrong}.
    The convergence in \eqref{eq:ElastConv} follows by \eqref{eq:yStrong}, Corollary \ref{corollary:WConti}, and by \eqref{eq:zUnderStrong}, which in turn guarantees the uniform convergence
    of $\{\gamma(\bar{z}_\tau(t))\}$ to $\gamma(z(t))$.
\end{proof}

%%% ============
\subsection{Passage to the Limit}
\label{subs:limit}
\ed In this last subsection, relying on the compactness properties established in Subsections \ref{subs:subs}--\ref{subs:better-y}, we show that the limiting pair $(y,z)$ satisfies Definition \ref{def:sep-global}. We begin by proving the \EEE semistability of the limiting pair $(y,z)$. We point out that, in comparison with the previous lemmas, the result holds under slightly more general lower-semicontinuity assumptions on the loading.

\begin{lemma}[Semi-Stability]
	\label{lemma:SemiStab}
    Let the \ed hypotheses \EEE of Lemma \ref{lemma:ImprovedZ} hold,
    and let the loading functional $\ell: \RRR \times \YYYY \times Z \to \RRR$
    be such that
    the map
            $(t,y,z,\tilde{z}) \mapsto \ell(t,y,\tilde{z}) - \ell(t,y,z)$
            is lower semi-continuous on~\ed uniformly bounded \EEE sequences
            in~$\RRR \times \ed \YYYY \EEE \times W^{1,\alpha}(\Omega) \times W^{1,\alpha}(\Omega)$ \ed (see Subsection \ref{subs:state}) \EEE
            that 
            converge
            in~the~strong product topology.
    Then,
    \begin{align}
    \label{eq:semi-staiblity}
        &\forall t \in [0,T], \, \forall \tilde{z} \in Z:
        &&
        \mathcal{E}(t,y(t),z(t))
        \leq
        \mathcal{E}(t,y(t),\tilde{z})
        +
        \mathcal{R}(\tilde{z}-z(t)).
    \end{align}
\end{lemma}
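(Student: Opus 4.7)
The plan is to pass to the limit in the discrete semi-stability \eqref{eq:InterpSemiStab} along the $t$-dependent subsequence $\{\tau(t)\}$ of Lemma \ref{lemma:Compactness2}, after replacing an arbitrary competitor $\tilde z \in Z$ by a Lipschitz truncation. Fix $t \in [0,T]$ and $\tilde z \in Z$. Since $\tilde z$ need not belong to $W^{1,\beta}(\Omega)$, the value $\mathcal{H}_\tau(\tilde z)$ can be infinite and $\tilde z$ cannot be inserted directly into \eqref{eq:InterpSemiStab}; mimicking the strategy of Lemma \ref{lemma:ImprovedZ}, I would test with $\tilde z_\tau := (\tilde z)_{\lambda(\tau)}$ from Lemma \ref{lemma:Truncation}, choosing $\lambda(\tau)$ so that \eqref{eq:blow-up-speed} holds. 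This yields $\tilde z_\tau \to \tilde z$ strongly in $W^{1,\alpha}(\Omega)$ and $\mathcal{H}_\tau(\tilde z_\tau) \to 0$.

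Along $\{\tau(t)\}$ I then pass to the limit in each term of \eqref{eq:InterpSemiStab} with $\tilde z$ replaced by $\tilde z_\tau$, exploiting: (i) the strong convergence $\bar y_{\tau(t)}(t) \to y(t)$ in $W^{2,p}(\Omega)$ from Lemma \ref{lemma:ImprovedY}, so by Corollary \ref{corollary:HK} also $(\nabla \bar y_{\tau(t)}(t))^{-\top} \to (\nabla y(t))^{-\top}$ in $L^\infty(\Omega)$; (ii) the strong convergence $\bar z_{\tau(t)}(t) \to z(t)$ in $W^{1,\alpha}(\Omega)$ from Lemma \ref{lemma:ImprovedZ}; (iii) uniform continuity of $\gamma$ on $[0,1]$ combined with the compact embedding $W^{1,\alpha}(\Omega) \Subset C(\overline\Omega)$ valid for $\alpha>3$. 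The two elastic integrals converge in $L^1(\Omega)$ by Corollary \ref{corollary:WConti} multiplied by the uniformly convergent $\gamma(\bar z_{\tau(t)}(t))$ and $\gamma(\tilde z_{\tau(t)})$, respectively. The two gradient-damage integrals converge by continuity of $\phi$, its growth bound \eqref{eq:phiCoer}, and the strong $L^\alpha$-convergences of $(\nabla \bar y_{\tau(t)}(t))^{-\top} \nabla \bar z_{\tau(t)}(t)$ and $(\nabla \bar y_{\tau(t)}(t))^{-\top} \nabla \tilde z_{\tau(t)}$. The two $\mathcal{H}$-contributions vanish by \eqref{eq:reg-zero} and the choice of $\lambda(\tau)$, while $\RRRR(\tilde z_{\tau(t)} - \bar z_{\tau(t)}(t)) \to \RRRR(\tilde z - z(t))$ by the $L^1$-continuity of $\RRRR$.

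The only term needing extra care is the loading difference $\ell(k\tau, \bar y_{\tau(t)}(t), \tilde z_{\tau(t)}) - \ell(k\tau, \bar y_{\tau(t)}(t), \bar z_{\tau(t)}(t))$, which after rearranging \eqref{eq:InterpSemiStab} sits on the side of the inequality along which a $\liminf$ must be taken. This is precisely the direction that the lower semi-continuity hypothesis on $(t,y,z,\tilde z) \mapsto \ell(t,y,\tilde z) - \ell(t,y,z)$ supplies. Assembling all the convergences produces $\mathcal{E}(t,y(t),z(t)) \leq \mathcal{E}(t,y(t),\tilde z) + \RRRR(\tilde z - z(t))$, which is \eqref{eq:semi-staiblity}. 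The main obstacle is the mismatch between the regularization $\mathcal{H}_\tau$, which is finite only on $W^{1,\beta}$-functions, and the competitor class $Z \subset W^{1,\alpha}$; the Lipschitz truncation with scaling \eqref{eq:blow-up-speed} is the key device, making both $\mathcal{H}_\tau$-contributions vanish in the limit while preserving enough regularity to handle the nonlinear coupling $(\nabla y)^{-\top}\nabla z$.
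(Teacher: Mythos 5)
Your proposal is correct and follows essentially the same route as the paper: test the discrete semi-stability \eqref{eq:InterpSemiStab} along the $t$-dependent subsequence with the Lipschitz truncation $(\tilde z)_{\lambda(\tau)}$ scaled so that both $\mathcal{H}_\tau$-terms vanish, use the strong convergences from Lemmas \ref{lemma:ImprovedZ} and \ref{lemma:ImprovedY} together with Corollary \ref{corollary:WConti} and the $L^1$-continuity of $\RRRR$ for the energy and dissipation terms, and invoke the lower semi-continuity hypothesis on the loading difference for the remaining term. The only cosmetic difference is that the paper treats the left-hand side by lower semicontinuity with respect to the weak convergences \eqref{eq:yWeak}--\eqref{eq:NyNzWeak}, whereas you use the available strong convergences throughout; both suffice.
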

\begin{proof}
    Let $t \in [0,T]$ and $\tilde{z}\in Z$. 
   Let $\tau(t)$ be the $t$-dependent subsequence extracted in Lemma \ref{lemma:ImprovedY}.
    With a slight abuse of notation throughout the proof we omit the explicit $t$-dependence
    and simply write $\tau$.
    As in Lemma \ref{lemma:ImprovedZ}, we cannot use  $\tilde{z} \in Z$ directly as a~competitor
    in~the~discrete semi-stability \eqref{eq:InterpSemiStab},
    because it may not lie in $W^{1,\beta}(\Omega)$. Thus, we work instead with its Lipschitz truncation
     $\tilde{z}_\tau := (\tilde{z})_{\lambda(\tau)}$.
    The dependence of the truncation parameter $\lambda$ on the time step $\tau$
    is again chosen so that $\mathcal{H}_\tau(\tilde{z}_\tau) \to 0$ (see \eqref{eq:blow-up-speed} and \eqref{eq:reg-zero}).
    After moving the loading to the right-hand side, we obtain
    \begin{align*}
        &
        \int_\Omega
          \gamma(\bar{z}_\tau(t))
          W(\nabla \bar{y}_\tau(t), \nabla^2 \bar{y}_\tau(t))
          +
          \phi \left((\nabla \bar{y}_\tau)^{-\top}(t) \nabla \bar{z}_\tau(t) \right) \,
        \d x \\
        &\quad
        +
        \mathcal{H}_\tau(\bar{z}_\tau(t)) \\
        & \quad
        \leq
        \int_\Omega
          \gamma(\tilde{z}_\tau)
          W(\nabla \bar{y}_\tau(t), \nabla^2 \bar{y}_\tau(t))
          +
          \phi \left((\nabla \bar{y}_\tau)^{-\top}(t) \nabla \tilde{z}_\tau \right) \,
        \d x \\
        &\quad
        -
        \ell(k\tau,\bar{y}_\tau(t),\tilde{z}_\tau)
        +
        \ell(k\tau,\bar{y}_\tau(t),\bar{z}_\tau(t))
        + \RRRR(\tilde{z}_\tau - \bar{z}_\tau(t))
        + \mathcal{H}_\tau(\tilde{z}_\tau).
    \end{align*}
    On the left-hand side, the regularization $\mathcal{H}_\tau(\bar{z}_\tau(t))$ converges to zero owing to Lemma \ref{lemma:ImprovedZ}. The remaining energy terms are lower semi-continuous with respect to \eqref{eq:yWeak}-- \eqref{eq:NyNzWeak}.
    The first two terms on the right-hand side are continuous
    with respect to the strong convergence of deformations and damage variables owing to Lemmas \ref{lemma:ImprovedZ} and \ref{lemma:ImprovedY}, Corollary \ref{corollary:WConti},
    and to standard result on Nemytskii operators.
    Finally, by the lower-semicontinuity of the loading functional
    we have
    \begin{align*}
        &\limsup_{\tau \to 0}
            \left(
            - \ell(k\tau,\bar{y}_\tau(t),\tilde{z}_\tau)
            + \ell(k\tau,\bar{y}_\tau(t),\bar{z}_\tau(t))
            \right)
        \leq
        - \ell(t,y(t),\tilde{z})
        + \ell(t,y(t),z(t)).
    \end{align*}
    This completes the proof of \eqref{eq:semi-staiblity}.
\end{proof}

The proof of the stability condition follows by similar arguments.
\begin{lemma}[Stability]
	\label{lemma:Stab}
    Let the hypotheses of Lemma \ref{lemma:ImprovedY} hold
    and let the loading functional $\ell: \RRR \times \YYYY \times Z \to \RRR$
    be such that for each $\tilde{y} \in \YYYY$ the map
            $(t,y,z) \mapsto \ell(t,\tilde{y},z) - \ell(t,y,z)$
            is lower semi-continuous
            on~$\RRR \times W^{2,p}(\Omega) \times W^{1,\alpha}(\Omega)$
            with respect to~the~strong product topology.
    Then,
    \begin{align}
        \label{eq:stability}&&
        \forae t \in [0,T], \, \forall \tilde{y} \in \YYYY:
        &&
        \mathcal{E}(t,y(t),z(t)) \leq
        \mathcal{E}(t,\tilde{y},z(t))
        &&
    \end{align}
\end{lemma}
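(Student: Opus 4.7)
The plan is to pass to the limit in the time-discrete stability \eqref{eq:InterpStab} along the $t$-dependent subsequence $\{\tau(t)\}$ extracted in Lemma \ref{lemma:Compactness2}. Fix $t$ in the set of full measure where $\underline{z}_\tau(t) \to z(t)$ strongly in $W^{1,\alpha}(\Omega)$ (by \eqref{eq:zUnderStrong} combined with \eqref{eq:ZsEqual}), and fix an arbitrary competitor $\tilde{y} \in \YYYY$. Since the regularization $\mathcal{H}_\tau(\underline{z}_\tau(t))$ is finite by \eqref{eq:RegEBound}, it cancels from both sides of \eqref{eq:InterpStab}, leaving
\begin{align*}
    &\int_\Omega \gamma(\underline{z}_\tau(t))\bigl[W(\nabla \bar{y}_\tau(t),\nabla^2 \bar{y}_\tau(t)) - W(\nabla \tilde{y},\nabla^2 \tilde{y})\bigr]\,\d x \\
    &\quad + \int_\Omega \bigl[\phi((\nabla \bar{y}_\tau(t))^{-\top}\nabla \underline{z}_\tau(t)) - \phi((\nabla \tilde{y})^{-\top}\nabla \underline{z}_\tau(t))\bigr]\,\d x \\
    &\quad + \bigl(\ell(k\tau,\tilde{y},\underline{z}_\tau(t)) - \ell(k\tau,\bar{y}_\tau(t),\underline{z}_\tau(t))\bigr) \leq 0,
\end{align*}
with $k\tau \searrow t$, and the goal is to take $\liminf_{\tau\to 0}$ of this chain.

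For the first integral, I would invoke Lemma \ref{lemma:ImprovedY} \eqref{eq:yStrong}, which gives $\bar{y}_{\tau(t)}(t) \to y(t)$ strongly in $W^{2,p}(\Omega)$, so Corollary \ref{corollary:WConti} yields $W(\nabla \bar{y}_{\tau(t)}(t),\nabla^2 \bar{y}_{\tau(t)}(t)) \to W(\nabla y(t),\nabla^2 y(t))$ in $L^1(\Omega)$; combined with the uniform convergence $\gamma(\underline{z}_{\tau(t)}(t)) \to \gamma(z(t))$ provided by the compact embedding $W^{1,\alpha}(\Omega) \Subset C(\overline{\Omega})$ for $\alpha > 3$ and continuity of $\gamma$, the first integral passes to the limit (the $\tilde{y}$-piece being fixed in $L^1$). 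For the second integral, Cramer's rule together with the Healey--Kr\"omer Corollary \ref{corollary:HK} and the strong $W^{2,p}$-convergence of $\bar{y}_{\tau(t)}(t)$ imply $(\nabla \bar{y}_{\tau(t)}(t))^{-\top} \to (\nabla y(t))^{-\top}$ in $L^\infty(\Omega)$, while $(\nabla \tilde{y})^{-\top} \in L^\infty(\Omega)$ is fixed; combined with $\nabla \underline{z}_{\tau(t)}(t) \to \nabla z(t)$ in $L^\alpha(\Omega)$, both products converge strongly in $L^\alpha(\Omega)$, and the polynomial growth \eqref{eq:phiCoer} together with continuity of the induced Nemytskii operator deliver $L^1$-convergence of the $\phi$-difference.

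For the loading, I would apply the assumed lower semicontinuity of $(t,y,z)\mapsto \ell(t,\tilde{y},z)-\ell(t,y,z)$ along the strong convergence $(k\tau,\bar{y}_{\tau(t)}(t),\underline{z}_{\tau(t)}(t)) \to (t,y(t),z(t))$. Taking $\liminf$ of the displayed inequality then yields
\begin{align*}
    \mathcal{E}(t,y(t),z(t)) \leq \mathcal{E}(t,\tilde{y},z(t)).
\end{align*}
Since the exceptional set excluded consists only of the countable jump set $J$ from \eqref{eq:ZsEqual} together with the null set outside of which \eqref{eq:zUnderStrong} holds, and this set depends on $t$ alone, the inequality holds simultaneously for every $\tilde{y}\in\YYYY$. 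A structural point worth emphasizing (as contrasted with Lemma \ref{lemma:SemiStab}) is that no Lipschitz truncation of the competitor is needed here: $\tilde{y}$ enters only through fixed $L^1$/$L^\infty$ quantities and the regularization $\mathcal{H}_\tau$ acts on the internal variable $\underline{z}_\tau(t)$, which is identical on both sides of \eqref{eq:InterpStab}. The main (mild) obstacle is thus the careful bookkeeping of the strong convergences purchased in Lemmas \ref{lemma:ImprovedZ}--\ref{lemma:ImprovedY}; once these are in hand, the passage to the limit is essentially routine.
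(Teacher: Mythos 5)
Your proposal is correct and follows essentially the same route as the paper: insert the fixed competitor $\tilde{y}$ into the discrete stability \eqref{eq:InterpStab}, note that the regularizer $\mathcal{H}_\tau(\underline{z}_\tau(t))$ cancels, restrict to the full-measure set of times where \eqref{eq:zUnderStrong} and $z(t)=\underline{z}(t)$ hold, and pass to the limit along the $t$-dependent subsequence using the strong convergences of Lemmas \ref{lemma:ImprovedZ}--\ref{lemma:ImprovedY}, \eqref{eq:NyStrong}, the growth condition \eqref{eq:phiCoer}, the uniform convergence of $\gamma(\underline{z}_\tau(t))$, and the assumed lower semicontinuity of the loading difference. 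The only cosmetic difference is that you exploit \eqref{eq:yStrong} to get full convergence of the elastic term on the left-hand side, whereas the paper only invokes lower semicontinuity there; both are available and lead to the same conclusion.
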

\begin{proof}
    In proving \ed that the limiting map satisfies \eqref{eq:stability}, \EEE
    we have to restrict ourselves to points $t \in [0,T]$ for which $z(t) = \underline{z}(t)$
    and the strengthened convergence \eqref{eq:zUnderStrong} holds.
    In what follows, we will directly work with the subsequence $\tau(t)$
    identified in Lemma \ref{lemma:ImprovedY},
    and for ease of notation we will omit the $t$-dependence simply
    denoting it by $\tau$.
    Fix $\tilde{y} \in \YYYY$. By using it as a~competitor
    in~the~discrete stability \eqref{eq:InterpStab}
    we obtain, after moving the loading to the right-hand side,
    \begin{align*}
        &\int_\Omega
          \gamma(\underline{z}_\tau(t))
          W(\nabla \bar{y}_\tau(t), \nabla^2 \bar{y}_\tau(t))
          +
          \phi
          \left(
            (\nabla \bar{y}_\tau)^{-\top}\!(t) \nabla \underline{z}_\tau(t)
          \right) \, \d x \\
          & \quad
          \leq
          \int_\Omega
            \gamma(\underline{z}_\tau(t)) W(\nabla \tilde{y}, \nabla^2 \tilde{y})
            +
            \phi \left( (\nabla \tilde{y})^{-\top} \nabla \underline{z}_\tau(t) \right) \,
          \d x \\
          & \qquad
          -
          \ell(k\tau,\tilde{y},\underline{z}_\tau(t))
          + \ell(k\tau,\bar{y}_\tau(t),\underline{z}_\tau(t)).
    \end{align*}
    \ed Owing to \EEE Lemmas \ref{lemma:ImprovedZ} and \ref{lemma:ImprovedY}, we pass to the limit in the left-hand side by standard lower-semicontinuity arguments.
    By the continuity of $\gamma$, the convergence in \eqref{eq:zUnderStrong}, and the fact that
    $\gamma(\underline{z}_\tau(t))W(\nabla \tilde{y},\nabla^2\tilde{y})\leq (\max_{[0,1]} \gamma ) 
    W(\nabla \tilde{y},\nabla^2 \tilde{y})$, the first term on the right-hand side satisfies
    \begin{align*}
        \int_\Omega
            \gamma(\underline{z}_\tau(t))
            W(\nabla \tilde{y}, \nabla^2 \tilde{y}) \,
        \d x
        \to
        \int_\Omega
            \gamma(z(t))
            W(\nabla \tilde{y}, \nabla^2 \tilde{y}) \,
        \d x.
    \end{align*}
    By the lower-semicontinuity assumptions on the loading with respect to the given convergences, we have,
    up to a countable subset $J \subset [0,T]$
    \begin{align*}
        \limsup_{\tau \to 0}
            \left(
            - \ell(k\tau,\tilde{y},\underline{z}_\tau(t))
            + \ell(k\tau,\bar{y}_\tau(t),\underline{z}_\tau(t))
            \right)
        \leq
        - \ell(t,\tilde{y},z(t))
        + \ell(t,y(t),z(t)).
    \end{align*}
    The passage to the limit in the $\phi$-term follows by the strong convergences \eqref{eq:NyStrong} and
    \eqref{eq:zUnderStrong} and the growth condition \eqref{eq:phiCoer}.
\end{proof}

\begin{remark}[\ed Limiting stability\EEE]
    As opposed to Lemma \ref{lemma:ImprovedZ},
    where the strong convergence of the deformations is proven for all times $t \in [0,T]$,
    here the stability holds only almost everywhere in $(0,T)$.
    The difference between the two proof strategies consists in the fact that
    here we needed to prevent both concentrations and oscillations
    of $\{ \nabla \underline{z}_\tau(t) \}$,
    while in Lemma \ref{lemma:ImprovedZ} the possible oscillations
    have been suppressed \emph{for every} $t\in [0,T]$ owing to the uniform convergence
    of $\{\nabla \bar{y}_\tau(t)\}$.
    Consequently, we \ed obtain that the limiting pair $(y,z)$ satisfies the semi-stability and energy inequality for all times $t\in [0,T]$, but that the limiting stability condition only holds for almost every $t\in [0,T]$.\EEE
\end{remark}

We proceed by identifying the limiting \ed work of the loading, \EEE under a further
continuity assumption on $\partial \ell$.
\begin{lemma}[Identification of the limiting \ed work of the loading\EEE]
    \label{lemma:RedPow}
    Let the hypotheses of Lemmas \ref{lemma:SemiStab} and \ref{lemma:Stab} hold.
    Let the reduced power $-\partial_t \ell: [0,T] \times \YYYY \times Z \to \RRR$
        be continuous on \ed uniformly bounded \EEE strongly converging sequences
        in~$\RRR \times \ed \YYYY\EEE \times W^{1,\alpha}(\Omega)$
        \ed (see Subsection \ref{subs:state}). \EEE
    Then,
    \begin{align*}
        \underline{\theta} = \bar{\theta}
        \quad
        \text{in } L^1(0,T)\quad\text{and}\quad\forall t \in [0,T]:
        \quad
        \bar{\theta}^{\text{sup}}(t)
        =
        \partial_t \EEEE(t,y(t),z(t)).
    \end{align*}
\end{lemma}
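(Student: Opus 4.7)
The plan is to split the statement into its two claims and prove each separately, using the strong continuity of the reduced power combined with the $t$-dependent strong convergences established earlier.

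For the pointwise identification $\bar{\theta}^{\text{sup}}(t) = \partial_t\EEEE(t,y(t),z(t))$, I would fix $t\in[0,T]$ and invoke the $t$-dependent subsequence $\{\tau(t)\}$ from Lemma~\ref{lemma:Compactness2}: along it, \eqref{eq:ThetaPW} gives $\bar{\theta}_{\tau(t)}(t)\to \bar{\theta}^{\text{sup}}(t)$, Lemma~\ref{lemma:ImprovedY} upgrades the weak convergence \eqref{eq:yWeak} to $\bar{y}_{\tau(t)}(t)\to y(t)$ strongly in $W^{2,p}(\Omega)$, and \eqref{eq:zBarStrong} yields $\bar{z}_{\tau(t)}(t)\to z(t)$ strongly in $W^{1,\alpha}(\Omega)$. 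The assumed continuity of $-\partial_t\ell$ on uniformly bounded strongly converging sequences in $\RRR\times\YYYY\times W^{1,\alpha}(\Omega)$ would then force $\bar{\theta}_{\tau(t)}(t) = -\partial_t\ell(t,\bar{y}_{\tau(t)}(t),\bar{z}_{\tau(t)}(t))$ to converge to $\partial_t\EEEE(t,y(t),z(t))$, and uniqueness of limits gives the identification.

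For the $L^1$-equality $\underline{\theta}=\bar{\theta}$, my plan is to show $\int_0^T(\underline{\theta}_\tau-\bar{\theta}_\tau)\varphi\,\d t \to 0$ for each $\varphi\in C_c^\infty((0,T))$, which together with \eqref{eq:ThetaWeak} and density of $C_c^\infty((0,T))$ in $L^1(0,T)$ yields the claim. Using the identity \eqref{eq:UnderlineBarRelNTK} and the substitution $s=t-\tau$,
\begin{align*}
    \int_\tau^T \underline{\theta}_\tau(t)\varphi(t)\,\d t
    = \int_0^{T-\tau} \partial_t\EEEE(s+\tau,\bar{y}_\tau(s),\bar{z}_\tau(s))\varphi(s+\tau)\,\d s.
\end{align*}
Subtracting $\int_0^T \bar{\theta}_\tau\varphi\,\d t$ and rearranging would leave me with three pieces: (i) boundary contributions on $(0,\tau)\cup(T-\tau,T)$ bounded by $C\|\varphi\|_\infty\int_{(0,\tau)\cup(T-\tau,T)}\lambda\,\d s \to 0$, thanks to the uniform majorization $|\underline{\theta}_\tau|,|\bar{\theta}_\tau|\leq C\lambda$ implied by \eqref{eq:E2AC} and the uniform energy estimate behind \eqref{eq:UniformEst}; (ii) a mollification error $\int_0^{T-\tau}\bar{\theta}_\tau(s)[\varphi(s+\tau)-\varphi(s)]\,\d s$ controlled by $C\tau\|\varphi'\|_\infty\|\lambda\|_1 \to 0$; and (iii) the remainder
\begin{align*}
    B_\tau := \int_0^{T-\tau} \bigl[\partial_t\EEEE(s+\tau,\bar{y}_\tau(s),\bar{z}_\tau(s)) - \partial_t\EEEE(s,\bar{y}_\tau(s),\bar{z}_\tau(s))\bigr]\varphi(s+\tau)\,\d s.
\end{align*}

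The hard part will be showing $B_\tau\to 0$, which I plan to handle via Vitali's convergence theorem. Equiintegrability of the integrand follows from the pointwise bound $C[\lambda(s)+\lambda(s+\tau)]\|\varphi\|_\infty$ together with the equiintegrability of translations of $L^1$ functions. For pointwise convergence to zero at almost every $s$, I would invoke a subsequence principle: given any $\tau_k\to 0$, extract a further subsequence $\tau_{k_j}$ (depending on $s$) along which $\bar{y}_{\tau_{k_j}}(s)\rightharpoonup \tilde{y}$ in $W^{2,p}(\Omega)$ with $\tilde{y}\in\YYYY$ (the latter via lower semicontinuity of $(\det\nabla y)^{-1}$ and the stability of the Ciarlet--Ne\v{c}as condition under weak $W^{1,p}$-convergence), and then mimic the proof of Lemma~\ref{lemma:ImprovedY}---using $\tilde{y}$ as competitor in the discrete stability \eqref{eq:InterpStab} and the lower-semicontinuity assumption on $\ell$ inherited from Lemma~\ref{lemma:Stab}---to upgrade the weak convergence to strong convergence in $W^{2,p}(\Omega)$. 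Strong convergence $\bar{z}_{\tau_{k_j}}(s)\to z(s)$ in $W^{1,\alpha}(\Omega)$ is already available from \eqref{eq:zBarStrong}. The continuity of $-\partial_t\ell$ would then send both terms in the bracket of $B_\tau$ to the common value $\partial_t\EEEE(s,\tilde{y},z(s))$, so that their difference vanishes along $\{\tau_{k_j}\}$; the subsequence principle promotes the convergence to the full sequence, and Vitali closes the argument.
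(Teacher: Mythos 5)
Your argument for the pointwise identification $\bar{\theta}^{\text{sup}}(t)=\partial_t\EEEE(t,y(t),z(t))$ is exactly the paper's: pass along $\{\tau(t)\}$, use \eqref{eq:ThetaPW}, \eqref{eq:yStrong}, \eqref{eq:zBarStrong} and the assumed continuity of $-\partial_t\ell$. For the $L^1$-equality the skeleton is also the same --- shift $\underline{\theta}_\tau$ by $\tau$ via \eqref{eq:UnderlineBarRelNTK} so that only boundary contributions and a residual term with a time-shift in the \emph{first} argument of $\partial_t\EEEE$ survive --- but the implementation differs. The paper tests against grid-aligned characteristic functions $c\chi_{[k_1\tau_0,k_2\tau_0]}$ with $\tau_0/\tau\in\NNN$, so the sum telescopes exactly and no translation error in the test function appears; you test against $\varphi\in C^\infty_c$ and pay an extra mollification term $\int\bar{\theta}_\tau[\varphi(\cdot+\tau)-\varphi]$, correctly controlled by $C\tau\|\varphi'\|_\infty\|\lambda\|_1$. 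Both routes are valid; yours avoids the (harmless) divisibility assumption on the time steps at the price of one more error term. For the residual $B_\tau$ the paper invokes dominated convergence and simply cites \eqref{eq:yStrong} and \eqref{eq:zBarStrong} for the pointwise convergence; since \eqref{eq:yStrong} is only a $t$-dependent-subsequence statement, your subsequence-extraction argument at a.e.\ fixed $s$ (extract a weak $W^{2,p}$ cluster point $\tilde y\in\YYYY$, upgrade to strong convergence by re-running the Lemma \ref{lemma:ImprovedY} argument with $\tilde y$ as competitor in \eqref{eq:InterpStab}, then let the continuity of $-\partial_t\ell$ kill the bracket) is actually a more careful justification of the same step.

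One caveat: the lower-semicontinuity hypothesis on the loading is stated in Lemma \ref{lemma:ImprovedY} only for the specific competitor $y(t)$ against the specific sequence $\{\bar y_{\tau(t)}(t)\}$, and in Lemma \ref{lemma:Stab} only along \emph{strongly} convergent sequences. Your upgrade step applies it to an arbitrary weak cluster point $\tilde y$ along an arbitrary sub-subsequence, which is a mild strengthening of the literal hypotheses (satisfied by the concrete loadings such as \eqref{eq:Loading}, but worth flagging). With that proviso, the proposal is sound.
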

\begin{proof}
    Fix $t\in [0,T]$.
    Let $\tau(t)$ be the $t$-dependent subsequence
    identified in Lemma \ref{lemma:ImprovedY},
    which is, however, for simplicity henceforth denoted by $\tau$.
    Choose $\{k(\tau)\}\subset \mathbb{N}$,
    also just denoted by $\{k\}$ for simplicity,
    such that $t\in ((k-1)\tau,k\tau] \subset [0,T]$ and $k\tau \searrow t$.
    On the one hand,  by \eqref{eq:ThetaPW} we have
    \begin{align*}
        -\partial_t \ell(k\tau,\bar{y}_\tau(t),\bar{z}_\tau(t))
        =
        \partial_t \EEEE(k\tau,\bar{y}_\tau(t),\bar{z}_\tau(t))
        =
        \bar{\theta}_\tau(t)
        \to
        \bar{\theta}^\text{ sup}(t).
    \end{align*}
    On the other hand, by the improved  convergences \eqref{eq:yStrong} and \eqref{eq:zBarStrong}
    and by the  continuity assumptions on the loading,
    \begin{align*}
        -\partial_t \ell(k\tau,\bar{y}_\tau(t),\bar{z}_\tau(t))
        \to
        -\partial_t \ell(t,y(t),z(t))
        =
        \partial_t \EEEE(t,y(t),z(t)).
    \end{align*}
This yields the second part of the statement.
    \ed In order to prove \EEE the equality of $\bar{\theta}$ and $\underline{\theta}$
    we show that $\underline{\theta}_\tau - \bar{\theta}_\tau \rightharpoonup 0$
    weakly in $L^1(0,T)$,
    by modifying the proof strategy in \cite[proof of Thm.\ 8.9]{RoubicekNPDE2013}.
    We first observe that by a density argument it is enough to consider test functions of the form
    \begin{align*}
        \varphi(t) := c\chi_{[k_1\tau_0,k_2\tau_0]}(t),
    \end{align*}
    where $\tau_0$ is some fixed time step from the discretization,
    $k_1,k_2 \in \NNN$ are such that $0 < k_1\tau_0 < k_2 \tau_0 < T$,
    and $c \in \RRR$.

    Thanks to the equiintegrability of $\bar{\theta}_\tau$ and $\underline{\theta}_\tau$ (see Lemma \ref{lemma:Compactness1}),
    we may suppose without loss of generality that $\frac{\tau_0}{\tau}$ is an integer number.
    Then for all $\tau<\tau_0$ we have
    \begin{align}
        \nonumber
        \int_0^T (\underline{\theta}_\tau(t) - \bar{\theta}_\tau(t)) \varphi(t) \, \d t
        &=
        \int_{k_1\tau_0}^{k_2\tau_0}
            (\underline{\theta}_\tau(t) - \bar{\theta}_\tau(t)) c \,
        \d t \\ \nonumber
        &=
        c
        \sum_{k = k_1\frac{\tau_0}{\tau} + 1}^{k_2\frac{\tau_0}{\tau}}
        \int_{(k-1)\tau}^{k\tau}
            \partial_t \EEEE(t,y^{k-1}_\tau,z^{k-1}_\tau)
            -
            \partial_t \EEEE(t,y^k_\tau,z^k_\tau) \,
        \d t \\ \nonumber
        &=
        c
        \sum_{k = k_1\frac{\tau_0}{\tau} + 1}^{k_2\frac{\tau_0}{\tau}}
        \int_{(k-1)\tau}^{k\tau}
            \partial_t \EEEE(t,y^{k-1}_\tau,z^{k-1}_\tau)
            -
            \partial_t \EEEE(t+\tau,y^k_\tau,z^k_\tau) \,
        \d t \\ \nonumber
        &\quad
        +
        c
        \sum_{k = k_1\frac{\tau_0}{\tau} + 1}^{k_2\frac{\tau_0}{\tau}}
        \int_{(k-1)\tau}^{k\tau}
            \partial_t \EEEE(t+\tau,y^{k}_\tau,z^{k}_\tau)
            -
            \partial_t \EEEE(t,y^{k}_\tau,z^{k}_\tau) \,
        \d t \\ \nonumber
        &=
        c
        \sum_{k = k_1\frac{\tau_0}{\tau} + 1}^{k_2\frac{\tau_0}{\tau}}
        \int_{(k-1)\tau}^{k\tau}
            \partial_t \EEEE(t,\underline{y}_\tau(t),\underline{z}_\tau(t))
            -
            \partial_t \EEEE(t+\tau,\underline{y}_\tau(t+\tau),\underline{z}_\tau(t+\tau)) \,
        \d t \\
        \label{eq:int4.6c}
        &\quad
        +
        c
        \int_{k_1\tau_0}^{k_2\tau_0}
            \partial_t \EEEE(t+\tau,\bar{y}_\tau(t),\bar{z}_\tau(t))
            -
            \partial_t \EEEE(t,\bar{y}_\tau(t),\bar{z}_\tau(t)) \,
        \d t.
    \end{align}
    For the first term on the right-hand side of \eqref{eq:int4.6c}, we infer that
    \begin{align*}
        &\sum_{k = k_1\frac{\tau_0}{\tau} + 1}^{k_2\frac{\tau_0}{\tau}}
        \int_{(k-1)\tau}^{k\tau}
            \partial_t \EEEE(t,\underline{y}_\tau(t),\underline{z}_\tau(t))
            -
            \partial_t \EEEE(t+\tau,\underline{y}_\tau(t+\tau),\underline{z}_\tau(t+\tau)) \,
        \d t \\
        &=
        \sum_{k = k_1\frac{\tau_0}{\tau}+1}^{k_2\frac{\tau_0}{\tau}}
        \int_{(k-1)\tau}^{k\tau}
            \partial_t \EEEE(t,\underline{y}_\tau(t),\underline{z}_\tau(t)) \,
        \d t
        -
        \sum_{k = k_1\frac{\tau_0}{\tau} + 2}^{k_2\frac{\tau_0}{\tau}+1}
        \int_{(k-1)\tau}^{k\tau}
            \partial_t \EEEE(t,\underline{y}_\tau(t),\underline{z}_\tau(t)) \,
        \d t \\
        &=
        \int_{k_1\tau_0}^{k_1\tau_0+\tau}
            \underline{\theta}_\tau(t) \,
        \d t
        -
        \int_{k_2\tau_0}^{k_2\tau_0+\tau}
            \underline{\theta}_\tau(t) \,
        \d t.
    \end{align*}
    Owing to the equiintegrability of~$\underline{\theta}_\tau$
    these two integrals can be made arbitrarily small as $\tau \to 0$.
    Eventually, the second term on the
    right-hand side of \eqref{eq:int4.6c} converges to zero by the Dominated Convergence Theorem.
    Indeed, the pointwise convergence follows by the strong continuity of $\partial_t \ell$, as well as by the  convergences \eqref{eq:yStrong} and \eqref{eq:zBarStrong},
    and by the uniform bound of $(\det \nabla \bar{y}_\tau(t))^{-1}$ in $L^s(\Omega)$ provided by the growth conditions in \eqref{eq:WCoer} and by \eqref{eq:ElastConv}.
    The integrable majorant is obtained arguing exactly as in Step 1 of the proof
    of Lemma \ref{lemma:Compactness1}.
\end{proof}

The next lemma shows that the limiting pair $(y,z)$ satisfies an energy inequality. We point out that, in proving the (semi)-stability condition, lower
semicontinuity of the loading with respect to the weak topology was needed, whereas
for obtaining the energy inequality  we need to enforce continuity  with respect to the strong product topology.

%%%
\begin{lemma}[Energy Inequality]
    \label{lemma:EnIneq}
    Let the hypotheses of Lemma \ref{lemma:RedPow} hold
    and let the loading $\ell:[0,T] \times \YYYY \times Z \to \RRR$
    be continuous on \ed uniformly bounded \EEE
    sequences
            in~$\RRR \times \ed \mathcal{Y}\EEE\times W^{1,\alpha}(\Omega)$
    that converge in the strong product topology \ed (recall \eqref{eq:def-Y} and Subsection \ref{subs:state}).\EEE 
    Then,
    \begin{align*}
      &\forall t_1, t_2 \in I, \, t_1 < t_2:
      &
      \mathcal{E}(t_2,y(t_2),z(t_2))
      +
      \diss_\mathcal{R}(z;[t_1,t_2]) 
      \leq
      \mathcal{E}(t_1,y(t_1),z(t_1))
      +
      \int_{t_1}^{t_2}
        \partial_t \EEEE (t,y(t),z(t)) \, \d t.
    \end{align*}
\end{lemma}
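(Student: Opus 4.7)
The plan is to pass to the limit $\tau \to 0$ in the discrete energy inequality \eqref{eq:InterpEIneq}, exploiting the compactness and improved convergence results collected in Lemmas \ref{lemma:Compactness1}--\ref{lemma:RedPow}. Fix $0 \leq t_1 < t_2 \leq T$. Since \eqref{eq:InterpEIneq} is available only between nodes of the discretization, I will set $t_i^\tau := \lceil t_i/\tau \rceil \tau$ for $i=1,2$, so that $t_i^\tau \to t_i$, $t_i^\tau \geq t_i$, and, by the definition \eqref{eq:LInterp} of the left-continuous interpolants, $\bar{y}_\tau(t_i) = \bar{y}_\tau(t_i^\tau)$ and $\bar{z}_\tau(t_i) = \bar{z}_\tau(t_i^\tau)$. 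Applying \eqref{eq:InterpEIneq} on $[t_1^\tau, t_2^\tau]$ and exploiting the fact that $\bar{z}_\tau$ is constant on $[t_1,t_1^\tau]$ (so that $\diss_\RRRR(\bar{z}_\tau;[t_1^\tau,t_2^\tau]) = \diss_\RRRR(\bar{z}_\tau;[t_1,t_2^\tau]) \geq \diss_\RRRR(\bar{z}_\tau;[t_1,t_2])$), yields
\begin{align*}
  & \EEEE(t_2^\tau,\bar{y}_\tau(t_2),\bar{z}_\tau(t_2)) + \mathcal{H}_\tau(\bar{z}_\tau(t_2)) + \diss_\RRRR(\bar{z}_\tau;[t_1,t_2]) \\
  & \qquad \leq \EEEE(t_1^\tau,\bar{y}_\tau(t_1),\bar{z}_\tau(t_1)) + \mathcal{H}_\tau(\bar{z}_\tau(t_1)) + \int_{t_1^\tau}^{t_2^\tau} \underline{\theta}_\tau(t) \, \d t.
\end{align*}

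Next, I perform a diagonal extraction: starting from the $t_2$-dependent subsequence $\tau(t_2)$ of Lemma \ref{lemma:Compactness2}, a further extraction yields a common subsequence (still denoted $\tau$) along which both $\bar{y}_\tau(t_i) \to y(t_i)$ strongly in $W^{2,p}(\Omega)$ for $i=1,2$ and the convergences of Lemma \ref{lemma:ImprovedY} hold. Combining the strong convergence of $\bar{y}_\tau(t_i)$ with the strong convergence $\bar{z}_\tau(t_i) \to z(t_i)$ in $W^{1,\alpha}(\Omega)$ from \eqref{eq:zBarStrong}, Corollary \ref{corollary:HK}, the continuity of $\gamma$, the growth condition \eqref{eq:phiCoer} (which, together with \eqref{eq:NyStrong}, gives continuity of the $\phi$-term), \eqref{eq:ElastConv} for the $\gamma W$-contribution, the continuity assumption on $\ell$, and the bound \eqref{eq:E2AC} (which controls the time increment $|\EEEE(t_i^\tau,\cdot,\cdot) - \EEEE(t_i,\cdot,\cdot)|$ as $t_i^\tau \to t_i$), I obtain
\[
  \EEEE(t_i^\tau, \bar{y}_\tau(t_i), \bar{z}_\tau(t_i)) \to \EEEE(t_i, y(t_i), z(t_i)), \qquad i=1,2.
\]
The regularizing terms $\mathcal{H}_\tau(\bar{z}_\tau(t_i))$ vanish by \eqref{eq:reg-zero}.

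For the dissipation term I combine \eqref{eq:HellyDiss} and \eqref{eq:HellyDissLSC}: since $\diss_\RRRR(\bar{z}_\tau;[t_1,t_2]) = \diss_\RRRR(\bar{z}_\tau;[0,t_2]) - \diss_\RRRR(\bar{z}_\tau;[0,t_1]) \to \delta(t_2)-\delta(t_1)$, and $\delta(t_2)-\delta(t_1) \geq \diss_\RRRR(z;[t_1,t_2])$, one has
\[
  \liminf_{\tau \to 0} \diss_\RRRR(\bar{z}_\tau;[t_1,t_2]) \geq \diss_\RRRR(z;[t_1,t_2]).
\]
Finally, for the power term I use Lemma \ref{lemma:RedPow}: the weak $L^1$-limit $\underline{\theta}$ of $\underline{\theta}_\tau$ equals $\bar{\theta}$, which by \eqref{eq:ThetaIneq} satisfies $\bar{\theta}(t) \leq \bar{\theta}^{\sup}(t) = \partial_t \EEEE(t,y(t),z(t))$ for almost every $t$. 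Consequently, the weak $L^1$-convergence of $\underline{\theta}_\tau$ tested against $\chi_{[t_1^\tau,t_2^\tau]}$ (which, since $t_i^\tau \to t_i$, is admissible modulo a vanishing boundary term by equiintegrability, see \eqref{eq:n4}) gives
\[
  \int_{t_1^\tau}^{t_2^\tau} \underline{\theta}_\tau(t) \, \d t \to \int_{t_1}^{t_2} \underline{\theta}(t) \, \d t \leq \int_{t_1}^{t_2} \partial_t \EEEE(t,y(t),z(t)) \, \d t.
\]
Collecting all the passages, the claimed energy inequality follows. The main obstacle I anticipate is bookkeeping at the boundary of the interval: aligning the discrete node times $t_i^\tau$ with $t_i$ while preserving the correct one-sided continuity of the interpolants, and ensuring that no dissipation or power contribution is silently lost in the $\tau \to 0$ limit; once this alignment is performed as above, the rest reduces to invoking the continuity and (lower) semicontinuity properties already established.
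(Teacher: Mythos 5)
There is a genuine gap at the heart of your argument, namely the ``diagonal extraction'' step. You start from the $t_2$-dependent subsequence $\tau(t_2)$ of Lemma \ref{lemma:Compactness2} and claim that a further extraction yields a common subsequence along which $\bar{y}_\tau(t_i)\to y(t_i)$ strongly in $W^{2,p}(\Omega)$ for \emph{both} $i=1,2$. This cannot be guaranteed: the map $t\mapsto y(t)$ is defined only pointwise in time, through the $t$-dependent subsequences of Lemma \ref{lemma:Compactness2} chosen so as to maximize the work of the loading at that particular time, and since $\EEEE(t,\cdot,z)$ is not convex in the deformation, minimizers are not unique. Along a sub-subsequence of $\tau(t_2)$ the values $\bar{y}_\tau(t_1)$ will converge (weakly, or after an argument \`a la Lemma \ref{lemma:ImprovedY} even strongly) to \emph{some} cluster point $\tilde{\xi}(t_1)$, but there is no reason why $\tilde{\xi}(t_1)=y(t_1)$, nor even why $\EEEE(t_1,\tilde{\xi}(t_1),z(t_1))=\EEEE(t_1,y(t_1),z(t_1))$: the stability \eqref{eq:stability} holds only for a.e.\ $t$, while the energy inequality must be proven for \emph{all} $t_1<t_2$, and even where it holds, minimality of $y(t_1)$ would give $\EEEE(t_1,y(t_1),z(t_1))\le\EEEE(t_1,\tilde{\xi}(t_1),z(t_1))$, i.e.\ an estimate in the wrong direction for the right-hand side. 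Consequently your limit inequality features $\EEEE(t_1,\tilde{\xi}(t_1),z(t_1))$ rather than $\EEEE(t_1,y(t_1),z(t_1))$, and the statement is not obtained. This non-uniqueness phenomenon is exactly what the Remark following the lemma in the paper points to.

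The paper circumvents this by orienting the extraction the other way. It works along the subsequence $\tau(t_1)$ adapted to $t_1$ (with the same node bookkeeping $k_1\tau\searrow t_1$, $k_2\tau\searrow t_2$ and equiintegrability of $\bar{\theta}_\tau,\underline{\theta}_\tau$ that you use), so that by \eqref{eq:ElastConv}, \eqref{eq:yStrong}, \eqref{eq:zBarStrong} and \eqref{eq:reg-zero} the right-hand side converges to the correct value $\EEEE(t_1,y(t_1),z(t_1))$. At $t_2$ it extracts only \emph{weak} convergence $\bar{y}_{\tau'}(t_2)\rightharpoonup\xi(t_2)$, handles the left-hand side by lower semicontinuity of $\EEEE(t_2,\cdot,\cdot)$ and of the dissipation via \eqref{eq:HellyDiss}--\eqref{eq:HellyDissLSC}, and then bridges from the unknown cluster point to the solution through the global stability \eqref{eq:stability}, i.e.\ $\EEEE(t_2,y(t_2),z(t_2))\le\EEEE(t_2,\xi(t_2),z(t_2))$. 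This use of stability at the \emph{later} time is the key ingredient your proposal is missing; without it, exact identification of the limit at both endpoints along a single subsequence would be needed, and that is precisely what the non-convexity in $y$ forbids. Your treatment of the dissipation, of the regularizing terms, and of the power term via Lemma \ref{lemma:RedPow} and \eqref{eq:ThetaIneq} does coincide with the paper's.
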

\begin{proof}
    Fix $t_1,t_2\in [0,T]$ with $t_1<t_2$,
    and let $\tau(t_1)$ be the subsequence from Lemma \ref{lemma:Compactness2}
    for which the convergences \eqref{eq:ElastConv} and \eqref{eq:yStrong}
    for \ed $\{\bar{y}_{\tau(t_1)}(t_1)\}$ \EEE hold at time $t_1$.
    \ed By the uniform bound in \eqref{eq:new-bound} and by the coercivity assumptions in the statement of the lemma, we extract a~further subsequence $\tau'(t_1)$ such that
    we have also \EEE
    \begin{align}
        \label{eq:y2Weak}
        \bar{y}_{\tau'(t_1)}(t_2) &\rightharpoonup \xi(t_2)
        &&\text{ in } W^{2,p}(\Omega).
        \end{align}
       \ed Arguing as in Lemma \ref{lemma:Compactness1}, we can assume that we also have \EEE \begin{align}\label{eq:Ny2NzStrong}
        (\nabla \bar{y}_{\tau'(t_1)})^{-\top}\!(t_2) \nabla \bar{z}_{\tau'(t_1)}(t_2)
        & \to (\nabla \xi)^{-\top}\!(t_2) \nabla z(t_2)
        &&\text{ in } L^\alpha(\Omega).
    \end{align}
    
    For ease of notation we omit the explicit dependence of this subsequence
    on $t_1$ and
    denote it by $\tau'$
    in the following.
    Let also $\{k_1({\tau'})\},\, \{k_2({\tau'})\}\subset \mathbb{N}$,
    also denoted simply by $\{k_1\}$ and $\{k_2\}$,
    be such that $t_1 \in ((k_1-1){\tau'},k_1{\tau'}] \subset [0,T]$
    and $t_2 \in ((k_2-1){\tau'},k_2{\tau'}] \subset [0,T]$
    with $k_1{\tau'} \searrow t_1$ and $k_2{\tau'} \searrow t_2$.
    Using the definition of the discrete dissipation, the
    absolute continuity of the energy \eqref{eq:E2},
    and the discrete energy inequality \eqref{eq:DiscreteEIneq},
    we obtain
    \begin{align}
    \label{eq:provisional-en}
        &\EEEE(t_2,\bar{y}_{\tau'}(t_2),\bar{z}_{\tau'}(t_2))
        +
        \diss_{\RRRR}(\bar{z}_{\tau'};[t_1,t_2]) \\
        \nonumber&=
        \EEEE(k_2{\tau'},\bar{y}_{\tau'}(k_2{\tau'}),\bar{z}_{\tau'}(k_2{\tau'}))
        +
        \diss_{\RRRR}(\bar{z}_{\tau'};[k_1{\tau'},k_2{\tau'}])
        -
        \int_{t_2}^{k_2{\tau'}} \bar{\theta}_{\tau'}(t) \, \d t \\
        \nonumber&\leq
        \EEEE(k_1{\tau'},\bar{y}_{\tau'}(k_1{\tau'}),\bar{z}_{\tau'}(k_1{\tau'}))
        +
        \int_{k_1{\tau'}}^{k_2{\tau'}} \underline{\theta}_{\tau'}(t) \, \d t
        -
        \int_{t_2}^{k_2{\tau'}} \bar{\theta}_{\tau'}(t) \, \d t +\HHHH_{\tau'} (\bar{z}_{\tau'}(k_1{\tau'}))-\HHHH_{\tau'}(\bar{z}_{\tau'}(k_2{\tau'}))\\
        \nonumber&=
        \EEEE(t_1,\bar{y}_{\tau'}(t_1),\bar{z}_{\tau'}(t_1))
        +
        \int_{t_1}^{t_2} \underline{\theta}_{\tau'}(t) \, \d t
        -
        \int_{t_1}^{k_1{\tau'}}\underline{\theta}_{\tau'}(t) \, \d t
        -
        \int_{t_2}^{k_2{\tau'}}
            \bar{\theta}_{\tau'}(t) - \underline{\theta}_{\tau'}(t) \, \d t \\
        \nonumber& \quad
        + \HHHH_{\tau'} (\bar{z}_{\tau'}(k_1{\tau'}))
        - \HHHH_{\tau'}(\bar{z}_{\tau'}(k_2{\tau'})).
    \end{align}

    For the first term on the right-hand side of \eqref{eq:provisional-en} we apply
    the convergence of elastic energies in \eqref{eq:ElastConv},
    as well as the continuity of the phase field energy $\phi$ and of the loading $\ell$
    with respect to the strong convergence \eqref{eq:yStrong}
    and \eqref{eq:zBarStrong} of $\bar{y}_{\tau'}(t_1)$ and $\bar{z}_{\tau'}(t_1)$.
    For the second term we exploit the convergence of the \ed work of the loading \EEE
    \eqref{eq:ThetaWeak}. The third and fourth energy terms converge to zero \ed due \EEE to the equiintegrability
    of $\{\bar{\theta}_{\tau'}\}$ and $\{\underline{\theta}_{\tau'}\}$.
    Eventually, the regularization terms converge to zero owing to \eqref{eq:reg-zero}.
    
    On the left-hand side of \eqref{eq:provisional-en} we exploit the lower semicontinuity of the energy $\EEEE$
    with respect to the strong convergences \eqref{eq:zBarStrong}
    and \eqref{eq:Ny2NzStrong},
    and to the weak convergence \eqref{eq:y2Weak}.
    The lower-semicontinuity of the dissipation $\diss_\RRRR(\bar{z}_{\tau'};[0,t])$ is a consequence of \eqref{eq:HellyDiss} and \eqref{eq:HellyDissLSC}.
    Thus, using in addition the stability condition \eqref{eq:stability},
    we obtain
    \begin{align*}
        \EEEE(t_2,y(t_2),z(t_2))
        +
        \diss_{\RRRR}(z;[t_1,t_2])
        &\leq
        \EEEE(t_2,\xi(t_2),z(t_2))
        +
        \diss_{\RRRR}(z;[t_1,t_2]) \\
        &\leq
        \EEEE(t_1,y(t_1),z(t_1))
        +
        \int_{t_1}^{t_2} \underline{\theta}(t) \, \d t.
    \end{align*}
    To conclude we recall that by Lemma \ref{lemma:RedPow}
    we have $\underline{\theta} = \bar{\theta} \leq \bar{\theta}^\text{ sup}$
    a.e. in $(0,T)$
    and that $\bar{\theta}^\text{ sup}(t) = \partial_t \EEEE(t,y(t),z(t))$
    for every $t\in [0,T]$.
\end{proof}

\begin{remark}[\ed Limiting energy inequality \EEE]
    When analyzing the proof we see that \textbf{energy equality} is bound to break
    not only when the weak limit
    of $\partial_t \EEEE(t,\bar{y}_\tau(t),\bar{z}_\tau(t))$
    is strictly less than the maximal power $\partial_t \EEEE(t,y(t),z(t))$,
    but also whenever $\EEEE(t_2,y(t_2),z(t_2)) < \EEEE(t_2,\xi(t_2),z(t_2))$,
    i.e. all the cluster points of $\bar{y}_{\tau}(t_2)$
    maximizing the \ed work of the loading \EEE at time $t_2$
    have strictly less energy than any limit $\xi(t_2)$
    which maximizes the reduced power at time $t_1$.
\end{remark}

We are finally in a position to prove our main result.
\begin{proof}[Proof of Theorem \ref{thm:existence}]
    Most of the statement has been proven in Lemmas
    \ref{lemma:Discrete}--\ref{lemma:ex-2-grad}
    and \ref{lemma:Compactness1}--\ref{lemma:EnIneq}.
    The injectivity of the limiting deformation $y$ follows directly from the definition of the space $\YYYY$
    in \eqref{eq:def-Y}, the Ciarlet--Nečas condition \eqref{eq:C-N},
    and the integrability of the distortion coefficient \eqref{eq:H-K},
    which in turn is a consequence of Hölder's inequality.
    Due to the regularization $\HHHH_\tau$
    we also have to show that the limiting damage variable $z$ satisfies the initial condition.
    This can be inferred by the equality
    $\bar{z}_\tau(0) = \underline{z}_\tau(0) = z^0_\tau = (z^0)_{\lambda(\tau)}$,
     and by the convergences
    \begin{align*}
        \bar{z}_\tau(0) \rightharpoonup^* z(0) \quad \text{in } L^\infty(\Omega),
    \end{align*}
     and 
    \begin{align*}
        (z^0)_{\lambda(\tau)} \to z^0 \quad \text{in } W^{1,\alpha}(\Omega),
    \end{align*}
    being a consequence \ed of \EEE \eqref{eq:HellyZ} and of  Lemma \ref{lemma:Truncation}, respectively.
\end{proof}

%%% ============================================================================
\section{Discussion of the results}
\label{sec:disc}
We generalized the concept of local solutions,
developed in \cite{Toader-Zanini,Roubicek2015}
 for separately convex energies 
by introducing the notion of separately global solutions in Definition \ref{def:sep-global}. \ed This novel class of local solution allows to encompass in the analysis energy densities which are non-convex with respect to the deformation variable and where convexity is only enforced with respect to higher-order derivatives. Large-strain formulations including injectivity constraints and blow up for extreme compressions, as well as mixed Eulerian-Lagrangian energetic contributions are also incorporated within the framework described in this paper.
Unlike energetic solutions, separately global solutions do not force `too early jumps'
and have therefore a broad range of application. 

Our main result in Theorem \ref{thm:existence} is the existence of separately global solutions in the setting of gradient bulk damage and for non-simple materials. 

The mathematical regularization provided by the coercivity and convexity of the energy with respect to higher-order derivatives of the deformations is essential to handle nonlinear Eulerian-Lagrangian couplings between the deformations and the internal variables. This higher-order formulation is indeed necessary for passing from time-discrete interpolants to time-continuous solutions.
It is, on the other hand, not needed for proving existence of time-discrete solutions, for which
polyconvex energies dependent only on the gradient of the deformations can be included in the analysis.
This is a consequence of the alternating minimization scheme \eqref{eq:DiscreteSystem}
which makes the discrete problem effectively decoupled.

A further key point of our analysis is the fact that,
at the time-discrete level, different piecewise-constant interpolants of the damage variable come into play in the discrete stability \eqref{eq:InterpStab} and semi-stability \eqref{eq:InterpSemiStab}
conditions. To pass to the time-continuous setting in the semi-stability condition and to obtain the limiting energy inequality, on the other hand, 
the selection of suitable time-dependent subsequences of the piecewise-constant interpolants is needed in order to guarantee strong convergence of deformations and internal variables in the appropriate topologies. This in turn, is related to the fact that in the energy inequality in Lemma \ref{lemma:EnIneq}, both the left- and the right-hand side of the estimate involve energy contributions evaluated at arbitrary times $t\in [0,T]$, so that
on the left-hand side mere lower semicontinuity does not suffice.

For this reason, the selection of time-dependent subsequences in Lemmas \ref{lemma:ImprovedY} and \ref{lemma:ImprovedZ} is hinged upon novel techniques,  
not being simply based on error estimates between the right- and left-continuous piecewise-constant
interpolants. For proving the energy inequality,
apart from selecting deformations that maximize the work of the loading
in the sense of \eqref{eq:ThetaPW},
we also strongly rely on the global minimality of the elastic variable, cf. \eqref{eq:stability}.
These issues are connected to the non-uniqueness of deformations caused by the lack of convexity of the energy, 
which makes the selection of $t$-dependent subsequences in Lemma \ref{lemma:Compactness2}
quite subtle (see also \cite{FraMie06ERCR}).\EEE

\ed Although the ansatz for the stored energy density in \eqref{eq:Energy}
is rather specific,
the analysis is prone to be extended to different mathematical models in which the highest order terms for the elastic and internal variables are decoupled.
Possible forthcoming generalizations include \EEE e.g. phase transitions and finite plasticity.
In this latter setting, the convexity in the internal variable would be lost
and hence the connection with the local stability \eqref{localstability} broken.
Nevertheless the existence theory based on global semi-stability,
replacing the dissipation potential with a~dissipation distance,
would be in principle feasible.

\ed A further generalization concerns
the regularization of the energy by the hessian of the deformations. Working with the nonlinear coupling  it would be natural to replace the coercivity requirement with respect to the full hessian with the notion of gradient polyconvexity (GPC), see \cite{BeKrSc17NLMGP}.
On the one hand, in fact, \EEE GPC allows for the existence of minimizers under weaker regularity
assumptions on the deformations.
\ed On the other hand, \EEE when the boundedness of $\det \nabla y$ from zero is assumed 
(which was \ed a \EEE key ingredient in our analysis),
GPC still guarantees integrability of the second gradient.
It remains an open question, whether the existence theory
\ed can be fully extended to the GPC setting. \EEE

%%% ============================================================================
%
%    APPENDIX
%
%%% ============================================================================
\appendix
\section{Analytical Tools}
\label{app:anal}

\begin{proof}[Proof of Corollary \ref{corollary:HK}]
   As $p>3$,  $y\in\YYYY$ can be considered  continuous. Assume without loss of generality that $0\in\Omega$, and assume by contradiction that $\det\nabla y(0)=0$.  Then we have for every $x\in\Omega$ that 
   \begin{align}
       |\det \nabla y (x)|
       \le
       C(1+|\nabla y(x)|^2 +|\nabla y(0)|^2)|\nabla y(x)-\nabla y(0)|
       \le \tilde{C} |x|^\alpha\ , 
   \end{align}
   where $\alpha= 1-3/p$ is given by the continuous embedding
   $W^{2,p}(\Omega) \hookrightarrow C^{0,\alpha}(\Omega)$, and where the first inequality follows from the local Lipschitz property of $F\mapsto\det F$, see also \cite[Prop.~2.32]{Daco89DMCV}.
   Hence, 
   \begin{align*}
        \int_{\Omega}
            \frac{{\rm d} x}{(\det\nabla y)^s} \, \d x
        &\ge
        \int_{B(0,r)} \frac{{\rm d} x}{(\det\nabla y)^s} \, \d x
        \ge
        \int_{B(0,r)}\frac{{\rm d} x}{\tilde{C}^s|x|^{\alpha s}}\\
        &\ge
        \frac{4}{3}\pi r^3\frac{1}{\tilde{C}^s r^{\alpha s}}\ .
   \end{align*}
   However, the last expression  diverges for $r\to 0$ if $3-\alpha s<0$, i.e.,
   if $s>3p/(p-3)$. If $s=3p/(p-3)$ then we have that for every $r>0$ small enough
   \begin{align*}
    \int_{B(0,r)}\frac{{\rm d} x}{(\det\nabla y)^s}\ge\frac{4\pi}{3\tilde{C}^s}\ ,
   \end{align*}
   which is not possible if $(\det\nabla y)^{-1}\in L^s(\Omega)$,
   because the Lebesgue integral is absolutely continuous. 
   Altogether, we proved that  $y\not\in \YYYY$.
   If $0\in\partial\Omega$ then we proceed in a similar way.
   Namely, as  $\Omega$ is Lipschitz, it has the cone property.
   This implies that $\mathcal{L}^3((B(0,r)\cap\Omega)\ge \tilde C r^3$
   for $r>0$ small and some $\tilde C>0$ independent of $r$.
   Hence $\det\nabla y>0$ in $\bar\Omega$ and it is a continuous function.
   Inevitably, it is bounded from below by a positive constant in $\bar\Omega$.

    For the proof of the uniform bound we refer
    to~\cite[Proof of Prop. 5.1]{BeKrSc17NLMGP},
    from which we can apply the procedure for an infinite number
    of minimizers.
\end{proof}
\ed We proceed by providing a proof of Lemma \ref{lemma:Visintin}. \EEE

\begin{proof}
    Consider the sequence $\{(z_k,\nabla y_k, \nabla^2 y_k)\}$.
    By the fundamental theorem on Young measures,
    see \cite[Thm. 6.2]{Pedregal1997} or \cite{Ball81GISF},
    there is a subsequence generating a Young measure $\nu$.
    Thanks to \cite[Prop.~6.13]{Pedregal1997} the strong convergence
    of $z_\tau$ and $\nabla y_\tau$ implies
    \begin{align*}
        \nu_x=\delta_{z(x)} \otimes \delta_{\nabla y(x)} \otimes \mu_x
        \quad
        \text{for a.e. } x \in \Omega
    \end{align*}
    where $\mu={\mu_x}$ is the Young measure generated by $(\nabla^2 y_k)_{k\in\NNN}$.
    In view of \eqref{limit11} and
    \cite[Thm.~6.11]{Pedregal1997} applied to $\tilde h:=h+c$ we get 
    \begin{align*}
        \int_\Omega \tilde h(x,z(x),\nabla y(x),\nabla^2 y(x)) \, \d x
        &=
        \liminf_{k\to\infty}
            \int_\Omega \tilde h(x,z_k(x),\nabla y_k(x),\nabla^2y_k(x)) \, \d x \\
        &\ge
        \int_\Omega \int_{\RRR^{3\times 3\times 3}}
            \tilde h(x,z(x),\nabla y(x),G)\mu_x(\d G) \, \d x\ .
    \end{align*}
    The strict convexity of $\tilde h$ in in the last variable together with Jensen's inequality imply that
    $\mu_x=\delta_{\nabla^2y(x)}$ for almost every $x\in\Omega$.
    This further implies that $\nabla ^2 y_k$ converges to $\nabla ^2 y$ in measure; see \cite[Cor.3.2]{Mueller1999}.
    Finally, \eqref{limit11} with \cite[Corollary 6.10]{Pedregal1997}
    yields the strong convergence of $y_k \to y$ in $W^{1,p}(\Omega;\RRR^3)$.
\end{proof}

\begin{remark}
    For a reader not familiar with Young measures we sketch a different proof,
    which, however, still relies on the coercivity and growth of the energy,
    and on its equiintegrability,
    and which resembles in many aspects the original proof.
    The key idea is to use the classical result by A. Visintin \cite{Visintin1984} to infer that
    \begin{align*}
        \int_\Omega h(x,z(x),\nabla y(x),\nabla^2 y_k(x))\, \d x
        \to
        \int_\Omega h(x,z(x),\nabla y(x),\nabla^2 y(x)) \, \d x.
    \end{align*}
    To prove the thesis we then need to show that
    \begin{align*}
        \int_\Omega
            \left(h(x,z_k(x),\nabla y_k(x),\nabla^2 y_k(x))
            -
            h(x,z(x),\nabla y(x),\nabla^2 y_k(x))\right) \,
        \d x
        \to 0.
    \end{align*}
    Proving that the integrand converges in measure is standard, it thus remains only to prove the equiintegrability of $h(x,z_k(x),\nabla y_k(x),\nabla^2 y_k(x))$.
    The latter follows by the convergence \eqref{limit11}.
    Indeed, the lower semicontinuity of the functional implies convergence
    on all measurable subset of $\Omega$.
    Hence the limit of the integrands in the sense of the biting convergence coincides with $h(x,z(x),\nabla y(x),\nabla^2 y(x))$ (see \cite[Theorem 6.6]{Pedregal1997}).
    Since \eqref{limit11} holds, we have by \cite[Lemma 6.9]{Pedregal1997} that
    the integrands converge in the weak $L^1$-topology. This, in turn, yields the equiintegrability of the sequence.
\end{remark}

\ed We conclude this appendix with the statements of two technical lemmas which have been instrumental for the proof of our main result. \EEE

\begin{lemma}[Decomposition Lemma; \cite{FoMuPe98ACOEGG}]
    \label{lemma:Decomposition}
    Let $\Omega \subset \RRR^3$ be a Lipschitz domain
    and let $\{w_i\} \subset W^{1,\alpha}(\Omega)$, $\alpha > 1$, be bounded.
    Then there exists a subsequence $\{w_j\}$ and a sequence
    $\{v_j\} \subset W^{1,\alpha}(\Omega)$ such that
    \begin{align*}
        \|v_j\|_{1,\infty} &\leq C(\alpha,\Omega) j, \\
        \LLLL^3 (M_j) &\leq \frac{C}{j^\alpha}, \\
        \{|\nabla v_j|^\alpha\} & \text{ is equiintegrable},
    \end{align*}
    where
    $M_j := \{ x \in \Omega : w_j(x) \neq v_j(x)
        \text{ or } \nabla w_j(x) \neq \nabla v_j(x) \}$.
\end{lemma}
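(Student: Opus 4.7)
The plan is to follow the classical Lipschitz-truncation strategy of Acerbi--Fusco and Fonseca--Müller--Pedregal, which is built on the Hardy--Littlewood maximal function. After a preliminary extension of each $w_i$ to $\RRR^3$ with a uniform $W^{1,\alpha}$ bound (available because $\Omega$ is Lipschitz), I would associate to $w_i$ the maximal function $M_i := M(|\nabla \tilde w_i|)$ of its extended gradient. Since $\alpha>1$, the strong $(\alpha,\alpha)$-boundedness of $M$ combined with Chebyshev's inequality yields
\begin{equation*}
   \LLLL^3(\{x\in\Omega: M_i(x) > j\}) \,\le\, \frac{C}{j^\alpha}\,\|\nabla w_i\|_\alpha^\alpha \,\le\, \frac{C'}{j^\alpha}
\end{equation*}
uniformly in $i$, which is precisely the target rate for $\LLLL^3(M_j)$.

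Next, on the good set $\Omega\setminus\{M_i>j\}$ the Lebesgue representative of $w_i$ obeys the pointwise estimate $|w_i(x)-w_i(y)|\le C|x-y|(M_i(x)+M_i(y))\le 2Cj|x-y|$, so $w_i$ restricted to this set is $Cj$-Lipschitz. I would then apply a McShane-type Lipschitz extension to obtain $v_{i,j}\in W^{1,\infty}(\Omega)$ with $\|v_{i,j}\|_{1,\infty}\le C(\alpha,\Omega)\,j$ and $v_{i,j}\equiv w_i$ on $\Omega\setminus\{M_i>j\}$. This delivers the first two bullets of the claim verbatim once I set $v_j:=v_{i_j,j}$ for a suitable subsequence $w_j:=w_{i_j}$, since $M_{i,j}$ is contained (up to a null set) in $\{M_i>j\}$.

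The real work is the third conclusion, the equiintegrability of $\{|\nabla v_j|^\alpha\}$. The pointwise bound $|\nabla v_j|\le C\,M(|\nabla \tilde w_j|)$ reduces it to showing that the truncated maximal functions do not concentrate in the limit. To this end I would invoke the Biting Lemma of Brooks--Chacón on the bounded $L^1$-sequence $\{|\nabla w_i|^\alpha\}$, extracting a subsequence whose restriction to the complement of a shrinking sequence of exceptional sets $\{F_k\}$ is equiintegrable. By a diagonalisation scheme I then choose $i_j\to\infty$ so that the exceptional set $F_{k(j)}$ associated with $w_{i_j}$ is contained in $\{M_{i_j}>j\}$, i.e.\ is absorbed into $M_j$, and on $\Omega\setminus M_j$ the gradients $\nabla v_j=\nabla w_j$ inherit the equiintegrability from the biting-convergent tail. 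On $M_j$ itself, $|\nabla v_j|^\alpha\le C j^\alpha$ and $\LLLL^3(M_j)\le Cj^{-\alpha}$, so $\int_{M_j}|\nabla v_j|^\alpha\le C$ with the mass spread over a vanishingly small set; a standard tightness argument then closes the equiintegrability.

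The principal obstacle is precisely this alignment in the third step: one has to match the index $i_j$ with the truncation level $j$ so that the concentration set of $|\nabla w_{i_j}|^\alpha$ is simultaneously (i) small enough to preserve the bound $\LLLL^3(M_j)\le Cj^{-\alpha}$ and (ii) located inside the Lipschitz-truncation exceptional set so that the remaining part of $|\nabla v_j|^\alpha$ is equiintegrable. The maximal-function and Lipschitz-extension steps themselves are essentially mechanical; the diagonal extraction used to reconcile these two requirements is the content of the lemma.
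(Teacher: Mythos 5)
The paper does not actually prove this lemma: it is stated in the appendix as a quoted result from Fonseca--M\"uller--Pedregal \cite{FoMuPe98ACOEGG}, so there is no internal proof to compare with. Your outline does follow the strategy of that reference --- extension, Hardy--Littlewood maximal function, Chebyshev/weak-type estimate, McShane extension off the superlevel set --- and the first two conclusions (the Lipschitz bound $\|v_j\|_{1,\infty}\le Cj$ and the measure bound on $M_j$) are obtained correctly along these lines.

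The genuine gap is in the equiintegrability step, which is the substance of the lemma. From $|\nabla v_j|\le Cj$ on $M_j$ and $\LLLL^3(M_j)\le Cj^{-\alpha}$ you conclude $\int_{M_j}|\nabla v_j|^\alpha\le C$ and assert that ``a standard tightness argument then closes the equiintegrability''. It does not: a \emph{bounded, non-vanishing} amount of mass carried by a set of vanishing measure is precisely a concentration, i.e.\ the negation of equiintegrability --- the family $j^\alpha\chi_{B_j}$ with $\LLLL^3(B_j)=j^{-\alpha}$ satisfies both of your estimates and is not equiintegrable. What must be shown is that $\int_{E\cap M_j}|\nabla v_j|^\alpha$ is uniformly small for $\LLLL^3(E)$ small, and the crude Chebyshev bound $\LLLL^3(\{M_i>j\})\le Cj^{-\alpha}\|\nabla w_i\|_\alpha^\alpha$ cannot deliver this; one needs at least the refined superlevel-set estimate $\LLLL^3(\{M(|\nabla\tilde w_i|)>\lambda\})\le C\lambda^{-\alpha}\int_{\{|\nabla\tilde w_i|>\lambda/2\}}|\nabla\tilde w_i|^\alpha\,\d x$ (exactly the form displayed in Lemma \ref{lemma:Truncation} of the appendix), and even then the tail integral is of order one precisely when $|\nabla w_i|^\alpha$ concentrates. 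Your proposed fix --- choosing $i_j$ so that the biting exceptional set $F_{k(j)}$ is \emph{contained} in $\{M_{i_j}>j\}$ --- is not something one can arrange in general, and in any case it only controls the good set, leaving the bad-set contribution untouched. The actual proof in \cite{FoMuPe98ACOEGG} closes this by Young-measure/biting-limit mass accounting: the truncation level and the subsequence are coupled so that the truncated gradients' $\alpha$-th powers converge weakly in $L^1$ (not merely in the biting sense) to the biting limit of $|\nabla w_j|^\alpha$, and Dunford--Pettis then yields equiintegrability. That step is missing from your argument, and the step you offer in its place is false as stated.
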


\begin{lemma}[Lipschitz Truncation; \cite{Ziemer1989}]
    \label{lemma:Truncation}
    Let $\Omega \subset \RRR^{3}$ be a Lipschitz domain,
    $1 < \alpha < +\infty$,
    and $u \in W^{1,\alpha}(\Omega)$.
    Then for every $\lambda > 0$ there exists $u^\lambda \in W^{1,\infty}(\Omega)$
    such that
    \begin{align*}
        \| u^\lambda \|_{1,\infty} &\leq C(\alpha,\Omega) \lambda, \\
        \LLLL^3(\{u \neq u^\lambda \text{ or } \nabla u \neq \nabla u^\lambda\})
        &\leq
        \frac{C(\alpha)}{\lambda^\alpha}
        \int_{|\nabla u| \geq \lambda/2} |\nabla u(x)|^\alpha \, \d x \\
        \| u^\lambda \|_{1,\alpha} &\leq C(\alpha,\Omega) \| u \|_{1,\alpha}.
    \end{align*}
    In particular, denoting
    $M_\lambda
        :=
        \{ x \in \Omega : u(x) \neq u^\lambda(x)
                        \text{ or }
                        \nabla u(x) \neq \nabla u^\lambda(x) \}$,
    we have
    \begin{align*}
        \lim_{\lambda \to \infty} \lambda^\alpha \LLLL^3(M_\lambda) = 0,
        \quad
        \text{and}
        \quad
        \lim_{\lambda \to \infty} \|u^\lambda - u\|_{1,\alpha} = 0.
    \end{align*}
\end{lemma}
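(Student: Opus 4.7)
The plan is to prove the Lipschitz truncation lemma via the classical Hardy--Littlewood maximal function approach of Liu--Bojarski--Hajłasz. First I would extend $u$ to $\tilde u \in W^{1,\alpha}(\RRR^3)$ using a standard Sobolev extension operator for Lipschitz domains, with $\|\tilde u\|_{1,\alpha} \leq C(\alpha,\Omega) \|u\|_{1,\alpha}$, so that we may reduce to the case $\Omega = \RRR^3$ and then restrict at the end. Write $M$ for the Hardy--Littlewood maximal operator and consider the bad set
\begin{align*}
A_\lambda := \bigl\{ x \in \RRR^3 : M|\nabla \tilde u|(x) > \lambda/(2C_0) \bigr\},
\end{align*}
where $C_0$ is the constant in the pointwise inequality $|\tilde u(x) - \tilde u(y)| \leq C_0 \bigl(M|\nabla \tilde u|(x) + M|\nabla \tilde u|(y)\bigr)|x-y|$, valid for a.e.\ $x,y$ in every Lebesgue point of $\tilde u$; this is the classical estimate recovering Lipschitz behavior from bounded maximal gradient.

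The first key step is a refined Chebyshev-type estimate. By splitting $\nabla \tilde u = \nabla \tilde u\,\chi_{\{|\nabla \tilde u| \leq \lambda/4\}} + \nabla \tilde u\,\chi_{\{|\nabla \tilde u| > \lambda/4\}}$, the first piece contributes pointwise at most $\lambda/4$ to $M|\nabla \tilde u|$, so on $A_\lambda$ the maximal function of the truncated piece exceeds $\lambda/(4C_0)$. Applying the weak $(1,1)$ inequality for $M$ to that truncated piece yields
\begin{align*}
\LLLL^3(A_\lambda)
\leq \frac{C}{\lambda} \int_{|\nabla \tilde u| > \lambda/4} |\nabla \tilde u| \, \d x
\leq \frac{C}{\lambda^\alpha} \int_{|\nabla u| \geq \lambda/2} |\nabla u|^\alpha \, \d x,
\end{align*}
where in the last step I absorb $|\nabla \tilde u|^{\alpha-1} \geq (\lambda/4)^{\alpha-1}$ and use the extension bound (possibly enlarging $\lambda/4$ to $\lambda/2$ by adjusting $C_0$).

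Second, on the good set $\RRR^3 \setminus A_\lambda$ the pointwise inequality above shows $\tilde u$ is $(\lambda/1)$-Lipschitz up to a constant. I then extend $\tilde u\big|_{\RRR^3 \setminus A_\lambda}$ to a $C\lambda$-Lipschitz function $v^\lambda$ on $\RRR^3$ by the McShane formula $v^\lambda(x) := \inf_{y \in \RRR^3 \setminus A_\lambda} \bigl(\tilde u(y) + C\lambda |x-y|\bigr)$, truncated at $\pm \|\tilde u\|_\infty$-type level via $\min\{\max\{v^\lambda, -L\lambda\}, L\lambda\}$ with $L$ controlling $\|v^\lambda\|_\infty$ through $M|\tilde u|$ on the good set. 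Setting $u^\lambda := v^\lambda\big|_\Omega$ delivers $\|u^\lambda\|_{1,\infty} \leq C(\alpha,\Omega)\lambda$. Since $\{u \neq u^\lambda\} \cup \{\nabla u \neq \nabla u^\lambda\} \subset A_\lambda \cap \Omega$ (using that on $\RRR^3 \setminus A_\lambda$ the two functions coincide and share gradients a.e.\ at Lebesgue points), the second bound follows. The $W^{1,\alpha}$ bound is obtained by splitting $\int_\Omega = \int_{\Omega \setminus M_\lambda} + \int_{M_\lambda}$, controlling the first piece by $\|u\|_{1,\alpha}^\alpha$ and the second by $C\lambda^\alpha \LLLL^3(M_\lambda) \leq C \int_{|\nabla u| \geq \lambda/2} |\nabla u|^\alpha \, \d x \leq C\|u\|_{1,\alpha}^\alpha$.

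Finally, the limits as $\lambda \to \infty$ are immediate: $\lambda^\alpha \LLLL^3(M_\lambda) \to 0$ by absolute continuity of the integral, and $u^\lambda \to u$ in $W^{1,\alpha}(\Omega)$ follows from $u^\lambda = u$ on $\Omega \setminus M_\lambda$ combined with $\int_{M_\lambda}(|u^\lambda|^\alpha + |\nabla u^\lambda|^\alpha) \to 0$. The main obstacle is the sharper measure estimate of item two, which requires the Calderón--Zygmund-type truncation of $\nabla \tilde u$ rather than a direct weak-type inequality; the rest of the argument is then standard maximal-function / McShane extension machinery.
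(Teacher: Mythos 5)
The paper offers no proof of this lemma: it is imported verbatim from Ziemer's book as a black box (together with the Decomposition Lemma), so there is no in-paper argument to compare against. Your proposal reconstructs the standard proof underlying that citation --- Sobolev extension, the pointwise estimate $|\tilde u(x)-\tilde u(y)|\le C_0\bigl(M|\nabla\tilde u|(x)+M|\nabla\tilde u|(y)\bigr)|x-y|$, a splitting of $\nabla\tilde u$ at height $\sim\lambda$ to upgrade the weak $(1,1)$ inequality to the restricted integral, and a McShane extension off the bad set --- and this outline is correct.

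Two steps are glossed over and, as written, do not yield the estimates in the literal form stated. First, your refined Chebyshev argument produces $\lambda^{-\alpha}\int_{\{|\nabla\tilde u|\ge c\lambda\}}|\nabla\tilde u|^\alpha\,\d x$ with the \emph{extension} $\tilde u$ and the integral taken over all of $\RRR^3$; this is not bounded by $\lambda^{-\alpha}\int_{\{|\nabla u|\ge\lambda/2\}}|\nabla u|^\alpha\,\d x$ merely because the extension operator is bounded, since extension does not respect superlevel sets of the gradient (the right-hand side can vanish while the left-hand side does not). Second, the bound $\|u^\lambda\|_{1,\infty}\le C\lambda$ includes the $L^\infty$ norm of $u^\lambda$ itself, not only of its gradient; enforcing it by truncating $v^\lambda$ at height $\sim\lambda$, as you indicate, enlarges the non-coincidence set by a set of the type $\{M\tilde u> c\lambda\}$, whose measure is controlled by $\lambda^{-\alpha}\int_{\{|\tilde u|> c\lambda\}}|\tilde u|^\alpha\,\d x$ and not by the gradient integral appearing on the right-hand side of the second estimate. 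Neither issue matters for the way the lemma is used in this paper: the functions being truncated take values in $[0,1]$, and only the two limit statements are invoked, and these follow from the $\tilde u$-versions of your bounds by dominated convergence. But to claim the lemma exactly as stated you would either have to work with a maximal operator relative to $\Omega$ (avoiding the extension) or accept the additional $\tilde u$-dependent terms.
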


\section*{acknowledgements}
We acknowledge support
from the Austrian Science Fund (FWF) projects F65 and V 662,
from the FWF-GA\v{C}R project I 4052/19-29646L, 
from the OeAD-WTZ project CZ04/2019 (M\v{S}MT\v{C}R 8J19AT013),
from the GAUK Project No. 670218,
and from Charles University Research Program No. UNCE/SCI/023.

%%% ============================================================================
%
%    BIBLIOGRAPHY
%
%%% ============================================================================
\bibliographystyle{plain}
%\bibliography{references}
\def\cprime{$'$} \def\ocirc#1{\ifmmode\setbox0=\hbox{$#1$}\dimen0=\ht0
  \advance\dimen0 by1pt\rlap{\hbox to\wd0{\hss\raise\dimen0
  \hbox{\hskip.2em$\scriptscriptstyle\circ$}\hss}}#1\else {\accent"17 #1}\fi}

\end{document}